\let\oldtocsection=\tocsection
 \let\oldtocsubsection=\tocsubsection
 \let\oldtocsubsubsection=\tocsubsubsection
\renewcommand{\tocsection}[2]{\hspace{1em}\oldtocsection{#1}{#2}}
\renewcommand{\tocsubsection}[2]{\hspace{3em}\oldtocsubsection{#1}{#2}}
\renewcommand{\tocsubsubsection}[2]{\hspace{3.5em}\oldtocsubsubsection{#1}{#2}}
\def\const{\mathrm{const}}
\def\arctan{\mathrm{arctan}}
\def\Pic{\mathrm{Pic}}
\def\PIII{\mathrm{P}_{\mathrm{III}}}
\def\PV{\mathrm{P}_{\mathrm{V}}}
\numberwithin{equation}{section}
\newtheorem{theorem}{Theorem}[section]
\newtheorem{proposition}[theorem]{Proposition}
\newtheorem{corollary}[theorem]{Corollary}
\newtheorem{lemma}[theorem]{Lemma}
\newtheorem{remark}[theorem]{Remark}
\newtheorem{definition}[theorem]{Definition}
\definecolor{lightgray}{gray}{0.9}
\colorlet{crvena}{black!80}
\colorlet{plava}{black!50}
\colorlet{zelena}{black!20}
\date{}
\begin{document}
\title[Asymptotics of $\PIII$]{Asymptotic behaviour of the third Painlev\'e transcendents in the space of initial values}
\author{Nalini Joshi}
\thanks{This research was supported by an Australian Laureate Fellowship \# FL120100094 from the Australian Research Council.
The research of M.R.~is partially supported by  the Serbian Ministry of Education and Science (Project no. 174020: Geometry and Topology of Manifolds and Integrable Dynamical Systems).
}
\address{School of Mathematics and Statistics F07, University of Sydney, NSW 2006, Australia}
\email{nalini.joshi@sydney.edu.au}

\author{Milena Radnovi\'c}
\address{School of Mathematics and Statistics F07, University of Sydney, NSW 2006, Australia}
\email{milena.radnovic@sydney.edu.au}

\begin{abstract}
We study the asymptotic behaviour of the solutions of the generic
($D_6^{(1)}$-type) third Painlev\'e equation in the space of initial
values as the independent
variable approaches infinity (or zero) 
and show that the limit set of each solution is compact and
connected. Moreover, we prove that any solution with essential
singularity at infinity has an infinite number of poles and zeroes,
and similarly at the origin. 
\end{abstract}

\maketitle

\tableofcontents

\section{Introduction}
Following interest in transcendental
solutions of the third Painlev\'e equation 
$$
\PIII :\quad  y''=\frac{y'^2}{y}-\frac{y'}{x}+\frac{\alpha y^2+\beta}{x}+\gamma y^3+\frac{\delta}{y},
$$
due to physical applications (see
e.g. \cites{McCoyTracyWu1977,kitaev1989,Kanzieper02}),
we describe and prove global
properties of such solutions by studying their asymptotic behaviours in the limit $x\to\infty$
(equivalently $x\to0$) in the initial value space. In this paper, we find complete information about the limit sets of transcendental solutions. Unlike previous studies, we do not impose any reality constraints on the solutions, nor assume special conditions on the parameters; $y$ is a function of the complex variable $x$ and $\alpha$, $\beta$, $\gamma$, $\delta$ are given complex constants with $\gamma\delta\not=0$.

Our main results are obtained after regularization and compactification of
the space of initial values of
$\PIII$ first described by Okamoto
\cite{Okamoto1979}. A detailed description of this space is given in Section \ref{sec:okamoto} below. The space needs to be regularized by using resolution (or blow-ups) due to problematic points, called base points, where the Painlev\'e vector field becomes undefined (for example, it approaches $0/0$). We carry out resolutions for $x\to\infty$ and use asymptotic estimates to deduce previously unknown properties of solutions in the limit.  

Successive resolutions of
the vector field at base points terminates after nine blow ups of
$\mathbb C\mathbb P^2$ in the cases of the first, second, and fourth Painlev\'e equations \cites{DJ2011,HJ2014,JR2016}, 
while for the fifth and third Painlev\'e equations the construction consists of eleven blow ups and two blow downs \cite{JR2017}.
The initial value space in each case is then obtained by removing the infinity set, denoted $\mathcal{I}$, which are blow-ups of points not reached by any solution.

The generic case of $\PIII$, which we study in this paper, is a degenerate case of the fifth Painlev\'e equation
$\PV$. The generic case of $\PV$ was analysed recently  in \cite{JR2017} in its initial value space. Although no further
advances are required to tackle the special case of $\PIII$ when $\gamma=0$
or $\delta=0$, the construction becomes more technical because base points merge in the construction of the initial
value space, and we do not include this case in the present paper. 

Our main results fall into three parts:
\vspace{2pt}
\begin{list}{}
  {\usecounter{enumi}
    \setlength{\parsep}{2pt}
    \setlength{\leftmargin}{12pt}\setlength{\rightmargin}{12pt}
    \setlength{\itemindent=-12pt}
  }

\item {\em Existence of a repeller set:} Theorem \ref{th:estimates} in Section \ref{s:inf} shows that $\mathcal I$ is a repeller for the flow. The theorem also provides the range of the independent variable for which a solution may remain in the vicinity of $\mathcal{I}$.
\item {\em Numbers of poles and zeroes:} In Corollary \ref{cor:infinity}, we prove that each solution that is sufficiently close to $\mathcal{I}$ has a pole in a neighbourhood of the corresponding value of the independent variable. Moreover, Theorem \ref{th:poleszeroes} shows that each solution with essential singularity at $x=\infty$ has infinitely many poles and infinitely many zeroes in each neighbourhood of that point.
\item {\em The complex limit set:} We prove in Theorem \ref{th:limit} that the limit set for each solution is non-empty, compact, connected, and invariant under the flow of the autonomous equation obtained as $x\to\infty$.
\end{list}
\vspace{2pt}

This paper is organised as follows.
In Section \ref{s:ivs}, we construct and describe the space of the initial values for equation $\PIII$, with
complete details of all the necessary calculations provided in Appendix \ref{sec:p3resolution}.
We also relate the trajectories crossing exceptional lines to movable singularities of solutions
(both poles and zeroes) and discuss the autonomous
system and its Hamiltonian. In Section \ref{s:ssol}, we consider the special solutions of $\PIII$ and relate them to the fixed points of the Hamiltonian.
Section \ref{s:inf} contains the analysis of the behaviours of solutions near the infinity set in the space of initial values. 
A summary of the notation used in this paper is given in Appendix \ref{sec:notation}.

\section{Space of initial values of $\PIII$}\label{s:ivs}
Since the third Painlev\'e equation is a second-order ordinary differential equation, solutions are locally defined by two initial values. Therefore, the space of initial values is two complex-dimensional. However, standard existence and uniqueness theorems only cover initial values of $y$ that are not arbitrarily small or large (where the second derivative given by $\PIII$ becomes ill-defined). In this section, we explain how to construct a regularized, compactified space of all possible initial values that includes such values and overcomes these issues.

We start by formulating $\PIII$ as an equivalent system of first-order equations in Section \ref{sec:system} and describing its autonomous limiting form obtained as $x\to\infty$ in Section \ref{sec:auto}. The mathematical construction of the space of initial values is then given in Section \ref{sec:okamoto}. Where $y$ is arbitrarily close to the singular values $0$ or $\infty$, the solutions have singular power series expansions, which become regularized Taylor expansions in corresponding domains of the initial value space. These regular expansions are provided in Section \ref{sec:movable}.

\subsection{A system equivalent to $\PIII$}\label{sec:system}
The third Painlev\'e equation is equivalent to the following system:
\begin{equation}\label{eq:p3-system}
\begin{aligned}
y'&=\frac{y^2z}{2}+\delta_1-\frac{y}{x},
\\
z'&=-\frac{yz^2}{2}+2\gamma y+\frac{2}{x}\left(\alpha+\frac{\beta_1}{y^2}\right),
\end{aligned}
\end{equation}
with $\delta_1=i\sqrt{\delta}$, $\beta_1=\beta-2\delta_1$, where $y$ is a solution of the third Painlev\'e equation ($\PIII$) and
$$
z=\frac2{y^2}\left(y'-\delta_1+\frac{y}{x}\right).
$$

For comparison, Okamoto represented the third Painlev\'e equation, in \cite{Okamoto1979}, as the following system:
\begin{equation}\label{eq:okamoto}
\begin{aligned}
y'&=\frac1x\left(y^2 z_o -(\theta_{\infty}xy^2+\eta_0y-\theta_0x)  \right),
\\
z_o'&=-\frac1x\left(y z_o^2-(2\theta_{\infty}xy+\eta_0)z_o+\theta_{\infty}(\eta_0+\eta_{\infty})x\right),
\end{aligned}
\end{equation}
with
$$
\gamma=\theta_{\infty}^2,
\quad
\delta=-\theta_0^2,
\quad
\alpha=-\theta_{\infty}(1+\eta_{\infty}),
\quad
\beta=\theta_0(1+\eta_0).
$$
Here, $y(x)$ is again a solution of ($\PIII$), while $z_o$ is given by
$$
z_o=\frac{1}{y^2}\left(x y'+\theta_{\infty}xy^2+\eta_0y-\theta_0x\right).
$$
The auxilliary function $z_o$ is related to $z$ in equations
\eqref{eq:p3-system} by the equation:
$$
z=-\frac{2 \left(\theta_{\infty} x^2 (\theta_{\infty} (\eta_0+\eta_{\infty})+z_{o}')-x z_o ((\eta_{\infty}+2) \theta_{\infty}+z_{o}')+z_o^2\right)}{x (x (\theta_{\infty} (\eta_0+\eta_{\infty})+z_{o}')-\eta_0 z_o)}.
$$

\subsection{Autonomous equation}\label{sec:auto}

In the limit $x\to\infty$, $\PIII$ becomes:
\begin{equation}\label{eq:p3-auto}
y''=\frac{y'^2}{y}+\gamma y^3+\frac{\delta}y,
\end{equation}
which is equivalent to the system:
\begin{equation}\label{eq:p3-system-auto}
\begin{aligned}
y'&=\frac{y^2z}{2}+\delta_1,
\\
z'&=-\frac{y z^2}{2}+2\gamma y,
\end{aligned}
\end{equation}
with $\delta_1=i\sqrt{\delta}$.

The system (\ref{eq:p3-system-auto}) is Hamiltonian, that is, it is
equivalent to
\[
y'=\frac{\partial E}{\partial z},\quad z'=-\frac{\partial E}{\partial y},
\]
with the Hamiltonian given by
\begin{equation}\label{eq:energy}
E(y,z)=\frac{y^2z^2}{4}+\delta_1 z-\gamma y^2,
\end{equation}
Equivalently, the autonomous equation (\ref{eq:p3-auto}) is equivalent to the following family of first order equations:
$$
y'^2=\gamma y^4+Cy^2-\delta,
\qquad
C=\const.
$$

Note that the flow (\ref{eq:p3-system-auto}) has four fixed points:
\begin{equation}\label{eq:auto_equi}
(y_e, z_e)=\begin{cases}
\left(i\sqrt{\frac{\delta_1}{\sqrt{\gamma}}} ,2\sqrt{\gamma}\right),\\
\left(-i\sqrt{\frac{\delta_1}{\sqrt{\gamma}}} ,2\sqrt{\gamma}\right),\\
\left(\sqrt{\frac{\delta_1}{\sqrt{\gamma}}} ,-2\sqrt{\gamma}\right),\\
\left(-\sqrt{\frac{\delta_1}{\sqrt{\gamma}}} ,-2\sqrt{\gamma}\right).
\end{cases}
\end{equation}
If $\delta_1=0$, then, in addition, all points of the line $y=0$ are fixed.

\subsection{Resolution of singularities}\label{sec:okamoto}
In this section, we explain how to construct the space of initial values for the system (\ref{eq:p3-system}).
The notion of initial value spaces in Definition \ref{def:initial-values-space} is based on foliation theory, and we start by first motivating the reason for this construction.
We then explain how to construct such a space by carrying out resolutions or blow-ups, based on the process described in Definition \ref{def:blow-up}.

The system (\ref{eq:p3-system}) is a system of two first-order ordinary differential equations for $(y(x), z(x))$.
Given initial values $(y_0, z_0)$ at $x_0$, local existence and uniqueness theorems provide a solution that is defined on a local polydisk $U\times V$ in $\mathbb C\times \mathbb C^2$, where $x_0\in U\subset \mathbb C\setminus\{0\}$ and
$(y_0, z_0)\in V\subset (\mathbb C\setminus\{0\})\times\mathbb{C}$.
Our interest lies in global extensions of these local solutions.

However, the occurence of movable poles in the Painlev\'e transcendents acts as a barrier to the extension of $U\times V$ to the whole domain of (\ref{eq:p3-system}).
The first step to overcome this obstruction is to compactify the space $\mathbb C^2$, in order to include the poles.
We carry this out by embedding $\mathbb C^2$ into $\mathbb C\mathbb P^2$ and explicitly representing the system (\ref{eq:p3-system}) in the three affine coordinate charts of $\mathbb C\mathbb P^2$, which are given in Section \ref{sec:affine}.
The second step in this process results from the occurence of singularities in $V$ in the Painlev\'e vector field (\ref{eq:p3-system}).

By the term {\em singularity} we mean points where $(dy/dx, dz/dx)$ becomes either unbounded or undefined because at least one component approaches the undefined limit $0/0$. We are led therefore to construct a space in which the points where the singularities are regularised. The process of regularisation is called "blowing up" or \emph{resolving a singularity}.

In $\mathbb C\mathbb P^2$, these singularities occur in the $(y_{02},z_{02})$ and $(y_{03},z_{03})$ charts.  The appearence of these singularities is related to the irreducibility of the solutions of Painlev\'e equations as indicated by the following theorem, due to Painlev\'e. 
\begin{theorem}[\cite{Painleve1897}]
If the space of initial values for a differential equation is a compact rational variety, then the equation can be reduced either to a linear differential equation of higher order or to an equation for elliptic functions.
\end{theorem}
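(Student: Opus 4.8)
The plan is to recast the hypothesis geometrically. Writing the equation as a first-order system, its extended phase space is a holomorphic fibre bundle $\pi\colon M\to B$ over the domain $B\subset\mathbb{CP}^1$ of the independent variable, whose fibres $S_x$ are the spaces of initial values --- here, by assumption, compact rational surfaces --- and the equation determines a one-dimensional holomorphic foliation $\mathcal F$ on $M$ whose leaves are the graphs of solutions; by the defining property of an initial value space, $\mathcal F$ is everywhere non-singular and everywhere transverse to the fibres of $\pi$. The first observation is that compactness of the fibres makes $\mathcal F$ \emph{complete}: the restriction of $\pi$ to any leaf is proper and a local biholomorphism, hence a covering map onto $B$, so every solution extends holomorphically over the universal cover $\widetilde B$. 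In particular, lifted to $\widetilde B$ each solution is a genuine holomorphic map $\widetilde B\to S$ (no movable singularities of any kind), and composing with a rational function $S\dashrightarrow\mathbb{CP}^1$ shows that, read in the original coordinates, every solution is a meromorphic function on $\widetilde B$ whose multivaluedness is governed by a single representation $\rho\colon\pi_1(B)\to\operatorname{Aut}(S)$.

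Next I would extract the algebraic structure that the existence of $\mathcal F$ forces on $S$. Over the simply connected base $\widetilde B$ the foliated bundle trivialises, $\widetilde M\cong\widetilde B\times S$ with $\mathcal F$ the graph foliation, so the task is to understand a holomorphic family of sections parametrised by (an open subset of) a rational surface and the way it twists under $\rho$. The heart of the matter is to show that $\mathcal F$ organises $S$ into an algebraic fibration: the Zariski closure of a leaf inside a fibre is an algebraic curve; regularity of the foliation forces every member of the resulting pencil to be smooth; and since $S$ is rational the only pencils of smooth curves it can carry are rulings (rational members) and relatively minimal elliptic fibrations. Ruling out the a priori possibility of leaves with dense, non-algebraic closure is done either by a growth estimate on the meromorphic solutions of the previous paragraph --- a Malmquist--Yosida type argument that genuinely uses the Painlev\'e property together with compactness --- or, more structurally, by the classification of holomorphic foliations on rational surfaces combined with the completeness of $\mathcal F$.

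It then remains to translate each case back into a statement about the equation. If $S$ carries a ruling compatible with $\mathcal F$, then $\mathcal F$ is a \emph{Riccati foliation} over the base $\mathbb{CP}^1$ of the ruling, the equation reduces to a Riccati equation, and the classical logarithmic-derivative substitution presents it as reducible to a linear differential equation of higher order, exactly as in the statement. If instead $S$ is a rational elliptic surface and $\mathcal F$ is tangent to the elliptic fibres up to the translation induced by the $x$-flow on the base of the fibration, then integrating along the fibres expresses the general solution through the Weierstrass $\wp$-function --- the equation for elliptic functions. That these exhaust the possibilities is consistent with Okamoto's construction: for a genuine Painlev\'e transcendent the space of initial values is the complement $S\setminus\mathcal I$ of a non-empty infinity set $\mathcal I\in|-K_S|$, so the hypothesis of compactness (namely $\mathcal I=\varnothing$) precisely excludes those cases and leaves only the two above, together with the trivial case of an already linear equation.

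The main obstacle is the classification step of the second paragraph: proving that a complete, non-singular, fibre-transverse foliation on a bundle of compact rational surfaces must come from a ruling or an elliptic fibration, with the $x$-dependence correctly accounted for. Two points require care. First, excluding foliations whose leaves are not algebraic is not automatic on a rational surface, and the argument must invoke the absence of movable branch points, which is where the meromorphy and growth control of the solutions enter. Second, one must check that the ruling or elliptic fibration can be chosen independently of $x$, equivalently that it is $\rho$-invariant; this should follow from the rigidity of such fibrations under the linear algebraic group $\operatorname{Aut}(S)$ --- the elliptic fibration, for instance, being the anticanonical pencil and hence unique --- possibly after passing to a finite cover of $B$. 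The remaining birational bookkeeping --- passing from a Riccati foliation to an explicit linearisable normal form, and from an elliptic fibration to the $\wp$ normal form --- is routine but tedious.
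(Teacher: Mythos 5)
This theorem is not proved in the paper at all: it is quoted as a classical result with the citation to Painlev\'e's 1897 work, and its role in the text is purely motivational (to argue that $\mathbb{CP}^2$ cannot be the space of initial values). So there is no in-paper argument to compare yours against; what can be assessed is whether your sketch would stand on its own, and as written it does not.

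The central step --- that ``the Zariski closure of a leaf inside a fibre is an algebraic curve'' and that these closures assemble into a pencil of smooth curves which must be a ruling or an elliptic fibration --- is both ill-posed and unproven. It is ill-posed because the leaves of $\mathcal F$ are by hypothesis transverse to the fibres, so a leaf meets each fibre $S_x$ in a discrete set (one point per sheet of the covering of the base); for a non-autonomous equation there is no induced one-dimensional foliation of a single fibre, so ``the resulting pencil'' on $S$ is not defined. If what you intend is the closure of a holonomy orbit in a fixed fibre, then excluding dense, non-algebraic orbit closures is precisely the hard analytic content of Painlev\'e's theorem, and you defer it to ``a Malmquist--Yosida type argument'' or to ``the classification of holomorphic foliations on rational surfaces'' without carrying either out. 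The classification you invoke also does not say what you need: a rational surface carries base-point-free pencils of curves of arbitrary genus, so smoothness of the members does not by itself force genus $0$ or $1$; the genuine classification results (e.g.\ Brunella's classification of regular foliations) concern foliations \emph{of the surface}, which, as above, is not what the fibre-transverse foliation of the bundle gives you. Finally, your closing appeal to $\mathcal I\in|-K_S|$ to argue that compactness excludes the transcendental cases is circular here: it presupposes the Okamoto--Sakai structure theory, which is built on top of (and historically after) the very theorem you are trying to prove. The first paragraph of your proposal --- completeness of the foliation and single-valuedness of solutions on the universal cover --- is correct and is the right starting point, but everything after it is an outline of where a proof would have to go rather than a proof.
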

It is well known that the solutions of Painlev\'e equations are irreducible (in the sense of the theorem). Since $\mathbb{CP}^2$ is a compact rational variety, the theorem implies $\mathbb{CP}^2$ cannot be the space of initial values for (\ref{eq:p3-system}).

We are now in a position to define the notion of initial value space.

\begin{definition}[\cite{Gerard1975}, \cite{GerardSec1972,Gerard1983,Okamoto1979}]\label{def:initial-values-space} 
Let $(\mathcal{E},\pi,\mathcal{B})$ be a complex analytic fibration, $\Phi$ its foliation, and $\Delta$ a holomorphic differential system on $\mathcal{E}$, such that:
\begin{itemize}
\item the leaves of $\Phi$ correspond to the solutions of $\Delta$;
\item the leaves of $\Phi$ are transversal to the fibres of $\mathcal{E}$;
\item for each path $p$ in the base $\mathcal{B}$ and each point $X\in \mathcal{E}$, such that $\pi(X)\in p$, the path $p$ can be lifted into the leaf of $\Phi$ containing point $X$.
\end{itemize}
Then each fibre of the fibration is called \emph{a space of initial values} for the system $\Delta$.
\end{definition}

The properties listed in Definition \ref{def:initial-values-space} imply that each leaf of the foliation is isomorphic to the base $\mathcal{B}$.
Since the transcendental solutions of the third Painlev\'e equation can be globally extended as meromorphic functions on $\mathbb{C}\setminus\{0\}$ \cite{JK1994,HL2001-3}, we search for the fibration with the base equal to $\mathbb{C}\setminus\{0\}$. 

In order to construct the fibration, we apply the blow-up procedure, defined below \cite{HartshorneAG,GrifHarPRINC,DuistermaatBOOK} to the singularities of the system  (\ref{eq:p3-system}) that occur where at least one component becomes undefined of the form $0/0$.
Okamoto \ocite{Okamoto1979} showed that such singular points are contained in the closure of infinitely many leaves.
Moreover, these leaves are holomorphically extended at such a point.

\begin{definition}\label{def:blow-up}
\emph{The blow-up} of the plane $\mathbb{C}^2$ at point $(0,0)$ is the closed subset $X$ of $\mathbb{C}^2\times\mathbb{CP}^1$ defined by the equation $u_1t_2=u_2t_1$, where $(u_1,u_2)\in\mathbb{C}^2$ and $[t_1:t_2]\in\mathbb{CP}^1$, see Figure \ref{fig:blow-up}.
There is a natural morphism $\varphi: X\to\mathbb{C}^2$, which is the restriction of the projection from $\mathbb{C}^2\times\mathbb{CP}^1$ to the first factor.
$\varphi^{-1}(0,0)$ is the projective line $\{(0,0)\}\times\mathbb{CP}^1$, called \emph{the exceptional line}.
\end{definition}
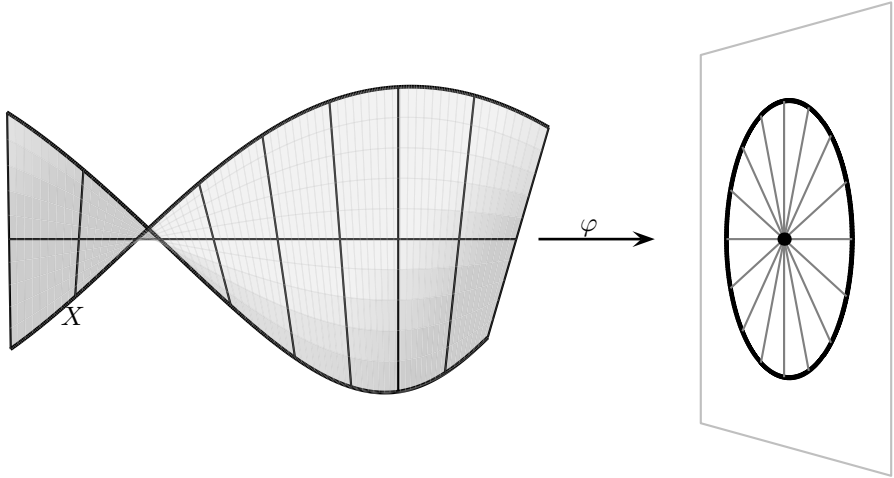
\begin{figure}[h]
\begin{pspicture*}(-9.2,-4)(4.7,4)

\psset{unit=0.51}

\psset{viewpoint=4 10 0,Decran=40,lightsrc=20 20 20}

 \psline[linecolor=black,fillcolor=black,incolor=black,linewidth=1pt,arrows=->,arrowsize=2pt 4](-3,0)(0,0)
\rput(-1.7,0.3){$\varphi$}

\rput(-15,-2){$X$}

\psSolid[object=plan, linecolor=gray!50, definition=equation, args={[1 0 0 1]}, base=-1.5 1.5 -1.5 1.5]

\defFunction[algebraic]{krug}(t)
{0}{cos(t)}{sin(t)}
\psSolid[object=courbe,linecolor=black,r=0.01,range=0 pi 2 mul,
linewidth=0.06,resolution=1000,
function=krug](-1,0,0)

\psSolid[object=line, linecolor=gray, args=-1 1 0 -1 -1 0]
\psSolid[object=line, linecolor=gray, args=-1 0 1 -1  0 -1]
\psSolid[object=line, linecolor=gray, args=-1 0.707 0.707 -1  -0.707 -0.707]
\psSolid[object=line, linecolor=gray, args=-1 -0.707 0.707 -1  0.707 -0.707]
\psSolid[object=line, linecolor=gray, args=-1 -0.383 0.924 -1  0.383 -0.924]
\psSolid[object=line, linecolor=gray, args=-1 0.383 0.924 -1  -0.383 -0.924]
\psSolid[object=line, linecolor=gray, args=-1  0.924 -0.383 -1   -0.924 0.383]
\psSolid[object=line, linecolor=gray, args=-1  -0.924 -0.383 -1   0.924 0.383]

\psSolid[object=point, args=-1 0 0]

\defFunction[algebraic]{helix1}(h)
{h+1}{cos(h+pi/4)}{sin(h+pi/4)}
\psSolid[object=courbe,r=0,
range=0 pi, linecolor=black, linewidth=0.1, resolution=360, function=helix1]

\defFunction[algebraic]{helix2}(h)
{h+1}{-cos(h+pi/4)}{-sin(h+pi/4)}
\psSolid[object=courbe,r=0,
range=0 pi, linecolor=black, linewidth=0.1, resolution=360, function=helix2]%

\psSolid[object=line, args=1 0 0 1 pi add 0 0]

\psSolid[object=line, linecolor=black, args=1 0.707 0.707 1 -0.707 -0.707 ]
\psSolid[object=line, linecolor=black, args=2.571 -0.707 0.707 2.571 0.707 -0.707 ]
\psSolid[object=line, linecolor=black, args=2.178 -0.383 0.924 2.178  0.383 -0.924 ]
\psSolid[object=line, linecolor=black, args=1.393 -0.383 -0.924 1.393  0.383 0.924 ]
\psSolid[object=line, linecolor=black, args=2.963  0.924 -0.383 2.963   -0.924 0.383]
\psSolid[object=line, linecolor=black, args=3.75  -0.924 -0.383 3.75  0.924 0.383]
\psSolid[object=line, linecolor=black, args=1.8   0 1 1.8  0 -1 ]
\psSolid[object=line, linecolor=black, args=4.14 0.707 0.707 4.14 -0.707 -0.707 ]

\defFunction[algebraic]{helix}(t,h)
{h+1}{t*cos(h+pi/4)}{t*sin(h+pi/4)}
\psSolid[object=surfaceparametree,linewidth=1sp,linecolor=gray!50,
     base=-1 1 0 pi,fillcolor=gray!50,incolor=gray!50,opacity=0.2,
     function=helix,
   ngrid=10 72]

\end{pspicture*}
\caption{The blow-up of the plane at a point.}\label{fig:blow-up}
\end{figure}

\begin{remark}
Notice that the points of the exceptional line $\varphi^{-1}(0,0)$ are in bijective correspondence with the lines containing $(0,0)$.
On the other hand,
$\varphi$ is an isomorphism between $X\setminus\varphi^{-1}(0,0)$ and $\mathbb{C}^2\setminus\{(0,0)\}$.
More generally, any complex two-dimensional surface can be blown up at a point \cite{HartshorneAG,GrifHarPRINC,DuistermaatBOOK}.
In a local chart around that point, the construction will look the same as described for the case of the plane.
\end{remark}

Notice that the blow-up construction separates the lines containing the point $(0,0)$ in Definition \ref{def:blow-up}, as shown in Figure \ref{fig:blow-up}.
In this way, the solutions of \eqref{eq:p3-system} containing the same point can be separated.
Additional blow-ups may be required if the solutions have a commont tangent line or a tangency of higher order at such a point.
The explicit resolution of the vector field \eqref{eq:p3-system} is carried out in Appendix \ref{sec:p3resolution}.
We show that the process requires 11 resolutions of singularities, or, blow-ups.

The resulting surface $\mathcal{D}(x)$ is shown in Figure \ref{fig:okamoto}.
\begin{figure}[h]
\centering
\begin{pspicture}(-1.4,-6)(11.9,0.5)

\psset{unit=0.9}

\psset{linecolor=black,linewidth=0.02,fillstyle=solid}

\psline(0,0)(8,0)
\rput(7.5,0.3){$\mathcal{L}_{\infty}^*$}
\rput(7.5,-0.3){\small\textbf{-1}}

\psline(3,0)(3,-5)
\rput(2.7,-4.8){$\mathcal{L}_{1}^*$}
\rput(3.4,-4.8){$\small\textbf{-3}$}

\psline(3,-1)(-1.5,-1)
\rput(-1,-0.7){$\mathcal{L}_{3}^*$}
\rput(-1,-1.3){\small\textbf{-2}}

\psline[linestyle=dashed](1.5,-1)(1.5,-2.5)
\rput(0.9,-2.3){$\mathcal{L}_{6}(x)$}
\rput(1.9,-2.3){\small\textbf{-1}}

\psline(3,-3)(-1.5,-3)
\rput(-1,-2.7){$\mathcal{L}_{4}^*$}
\rput(-1,-3.3){\small\textbf{-2}}

\psline[linestyle=dashed](1,-3)(1,-4.5)
\rput(0.4,-4.3){$\mathcal{L}_{7}(x)$}
\rput(1.4,-4.3){\small\textbf{-1}}

\psline(5.5,0)(5.5,-5)
\rput(5.2,-4.8){$\mathcal{L}_{2}^*$}
\rput(5.9,-4.8){\small\textbf{-3}}

\psline(5.5,-2.5)(11.5,-2.5)
\rput(11,-2.2){$\mathcal{L}_{8}^*$}
\rput(11,-2.8){\small\textbf{-2}}

\psline(10,-2.5)(10,0)
\rput(9.7,-0.3){$\mathcal{L}_{9}^{*}$}
\rput(10.4,-0.3){\small\textbf{-2}}

\psline[linestyle=dashed](10,-1.5)(8,-1.5)
\rput(8.5,-1.2){$\mathcal{L}_{10}(x)$}
\rput(8.5,-1.8){\small\textbf{-1}}

\psline(7.5,-2.5)(7.5,-6)
\rput(7.2,-5.8){$\mathcal{L}_{5}^{*}$}
\rput(7.9,-5.8){\small\textbf{-3}}

\psline(7.5,-4.5)(11.5,-4.5)
\rput(11,-4.2){$\mathcal{L}_{0}^{*}$}
\rput(11,-4.8){\small\textbf{-1}}

\psline[linestyle=dashed](7.5,-3.5)(9.5,-3.5)
\rput(9,-3.2){$\mathcal{L}_{11}(x)$}
\rput(9,-3.8){\small\textbf{-1}}

\end{pspicture}
\caption{The fiber $\mathcal{D}(x)$ of the  Okamoto space.}\label{fig:okamoto}
\end{figure}
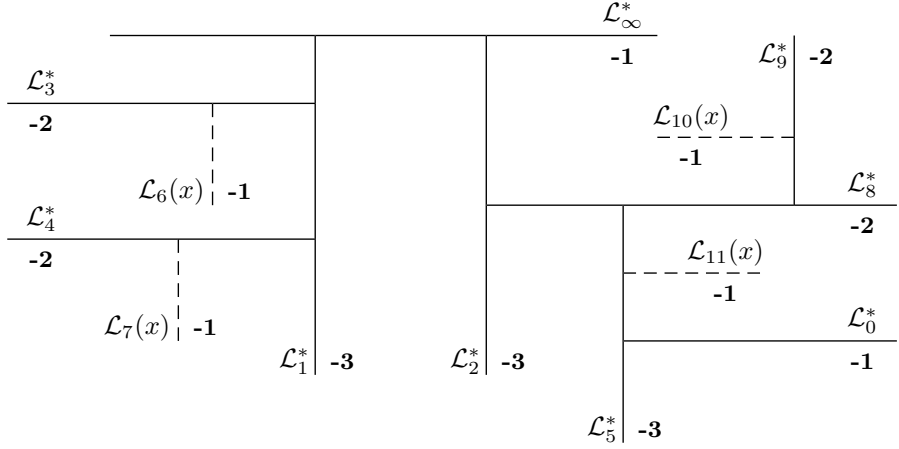
In Figure \ref{fig:okamoto}, $\mathcal{L}_{\infty}^*$ and $\mathcal{L}_0^*$ are the proper preimages of the line at the infinity and the line $y=0$, while $\mathcal{L}_{1}^*$, $\mathcal{L}_{2}^*$, $\mathcal{L}_{3}^*$, $\mathcal{L}_{4}^*$, $\mathcal{L}_{5}^*$, $\mathcal{L}_{8}^*$, $\mathcal{L}_{9}^*$  are the proper preimages of the exceptional lines obtained by blow ups at points $b_1$, $b_2$, $b_3$, $b_4$, $b_5$, $b_8$, $b_9$ respectively,
and $\mathcal{L}_{6}(x)$, $\mathcal{L}_{7}(x)$, $\mathcal{L}_{10}(x)$, $\mathcal{L}_{11}(x)$ are exceptional lines obtained by blowing up points $b_5$, $b_6$, $b_{10}$, $b_{11}$ respectively. 
The self-intersection number of each line, after all blow-ups are completed, is indicated in the figure.

Okamoto described so called \emph{singular points of the first class} that are not contained in the closure of any leaf of the foliation given by the system of differential equations.
At such points, the corresponding vector field is infinite.
For example, in the first affine chart $(y,z)$, such singular points are given by $y=0$.
In the surface $\mathcal{D}(x)$, all remaining singular points are of the first class, and the fibre of the initial value space is obtained by removing them:
$$
\mathcal{E}(x)=\mathcal{D}(x)\setminus\left(\bigcup_{j=0}^5\mathcal{L}_j^*\cup\mathcal{L}_8^*\cup\mathcal{L}_9^*\cup\mathcal{L}_{\infty}\right).
$$

In $\mathcal{D}(x)$, each line with self-intersection number $-1$ can be blown down again. 
Blowing down $\mathcal{L}_{\infty}^*$ and $\mathcal{L}_{0}^*$, we get the surface $\mathcal{F}(x)$, which is shown in Figure \ref{fig:okamoto-blow-down}.
\begin{figure}[h]
\centering
\begin{pspicture}(-1.4,-6)(11.9,0.5)

\psset{unit=0.9}

\psset{linecolor=black,linewidth=0.02,fillstyle=solid}

\psline(4.25,0)(3,-5)
\rput(2.7,-4.8){$\mathcal{L}_{1}^p$}
\rput(3.4,-4.8){$\small\textbf{-2}$}

\psline(4,-1)(-1.5,-1)
\rput(-1,-0.7){$\mathcal{L}_{3}^p$}
\rput(-1,-1.3){\small\textbf{-2}}

\psline[linestyle=dashed](1.5,-1)(1.5,-2.5)
\rput(0.8,-2.3){$\mathcal{L}_{6}(x)^p$}
\rput(1.9,-2.3){\small\textbf{-1}}

\psline(3.5,-3)(-1.5,-3)
\rput(-1,-2.7){$\mathcal{L}_{4}^p$}
\rput(-1,-3.3){\small\textbf{-2}}

\psline[linestyle=dashed](1,-3)(1,-4.5)
\rput(0.3,-4.3){$\mathcal{L}_{7}(x)^p$}
\rput(1.4,-4.3){\small\textbf{-1}}

\psline(4.25,0)(5.5,-5)
\rput(5.1,-4.8){$\mathcal{L}_{2}^p$}
\rput(5.9,-4.8){\small\textbf{-2}}

\psline(4.85,-2.5)(11.5,-2.5)
\rput(11,-2.2){$\mathcal{L}_{8}^p$}
\rput(11,-2.8){\small\textbf{-2}}

\psline(10,-2.5)(10,0)
\rput(9.7,-0.3){$\mathcal{L}_{9}^p$}
\rput(10.4,-0.3){\small\textbf{-2}}

\psline[linestyle=dashed](10,-1.5)(8,-1.5)
\rput(8.5,-1.2){$\mathcal{L}_{10}(x)^p$}
\rput(8.5,-1.8){\small\textbf{-1}}

\psline(7.5,-2.5)(7.5,-5)
\rput(7.2,-4.8){$\mathcal{L}_{5}^p$}
\rput(7.9,-4.8){\small\textbf{-2}}

\psline[linestyle=dashed](7.5,-3.5)(9.5,-3.5)
\rput(9,-3.2){$\mathcal{L}_{11}(x)^p$}
\rput(9,-3.8){\small\textbf{-1}}

\end{pspicture}
\caption{The fiber $\mathcal{F}(x)$ of the Okamoto space, after two blow downs.}\label{fig:okamoto-blow-down}
\end{figure}
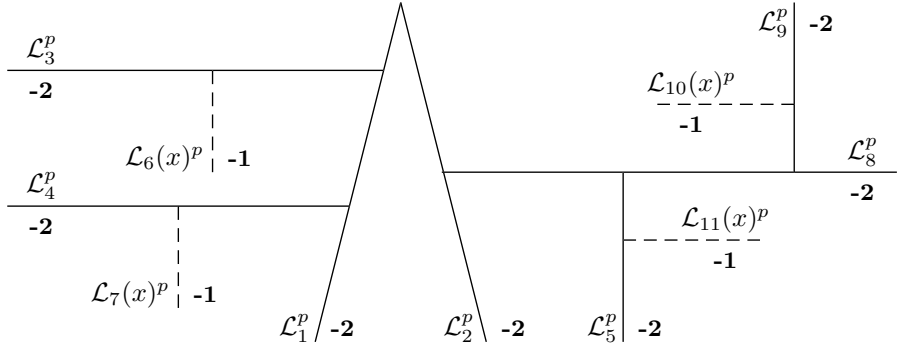
The projection of each remaining line from $\mathcal{D}(x)$ is denoted by the same index but now with superscript $p$.
Notice that the self-intersection numbers of $\mathcal{L}_{1}^p$, $\mathcal{L}_{2}^p$, $\mathcal{L}_{5}^p$ are no longer the same as of the corresponding pre-images $\mathcal{L}_{1}^*$, $\mathcal{L}_{2}^*$, $\mathcal{L}_{5}^*$.
In this space,  we denote by $\mathcal{I}$ the set of all singular points of the first class in $\mathcal{F}(x)$:
$$
\mathcal{I}=\bigcup_{j=1}^5\mathcal{L}_j^p\cup\mathcal{L}_8^p\cup\mathcal{L}_9^p.
$$
The fibre $\mathcal{E}(x)$ of the initial value space can be identified with $\mathcal{F}(x)\setminus\mathcal{I}$.

If $\mathcal{S}$ is a surface obtained from the projective plane by a several successive blow-ups of points, then the group of all automorphisms of the Picard group $\Pic(\mathcal{S})$ preserving the canonical divisor $K$ is generated by the reflections
$X\mapsto X+(X.\omega)\omega$,
with $\omega\in\Pic(\mathcal{S})$, $K.\omega=0$, $\omega.\omega=-2$.
That group is an affine Weyl group and the lines of self-intersection $-2$ are its simple roots.
Representing each such line by a node and connecting a pair of nodes by a line only if they intersect in the fibre, we obtain the Dynkin diagram shown in Figure \ref{fig:dynkin}, which is of type $D_6^{(1)}$.
For detailed expositions on the topic of surfaces and root systems, see \cite{Demazure1980, Harbourne1985} and references therein.
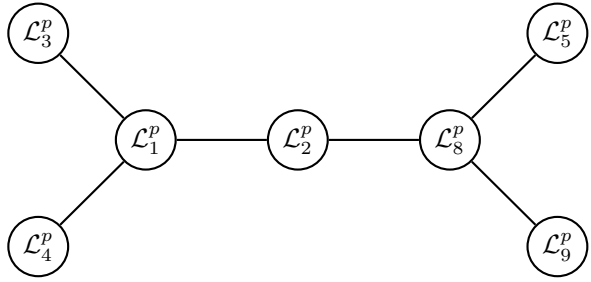
\begin{figure}[h]
\centering
\begin{pspicture}(-2,-2)(6,2)

\psset{linecolor=black}
\rput(0,0){\circlenode{1}{$\mathcal{L}_1^p$}}
\rput(2,0){\circlenode{2}{$\mathcal{L}_2^p$}}
\rput(4,0){\circlenode{8}{$\mathcal{L}_8^p$}}

\rput(-1.414,1.414){\circlenode{3}{$\mathcal{L}_3^p$}}
\rput(-1.414,-1.414){\circlenode{4}{$\mathcal{L}_4^p$}}

\rput(5.414,1.414){\circlenode{5}{$\mathcal{L}_5^p$}}
\rput(5.414,-1.414){\circlenode{9}{$\mathcal{L}_9^p$}}

\ncline{-}{1}{2}
\ncline{-}{1}{3}
\ncline{-}{1}{4}
\ncline{-}{2}{8}
\ncline{-}{8}{9}
\ncline{-}{8}{5}
    
\end{pspicture}
\caption{The Dynkin diagram of $D_6^{(1)}$.}\label{fig:dynkin}
\end{figure}

In the limit $x\to\infty$, the resulting Okamoto space is compactified by the fibre $\mathcal{F}(\infty)$, corresponding to the autonomous system (\ref{eq:p3-system-auto}), see Figure \ref{fig:okamoto-auto}.
\begin{figure}[h]
\centering
\begin{pspicture}(-1.5,-5)(12,0.5)

\psset{linecolor=black,linewidth=0.02,fillstyle=solid}

\psline(0,0)(8,0)
\rput(7.5,0.3){$\mathcal{L}_{\infty}^*$}
\rput(7.5,-0.3){\small\textbf{-1}}

\psline(3,0)(3,-5)
\rput(2.7,-4.8){$\mathcal{L}_{1}^*$}
\rput(3.4,-4.8){$\small\textbf{-3}$}

\psline(3,-1)(-1.5,-1)
\rput(-1,-0.7){$\mathcal{L}_{3}^*$}
\rput(-1,-1.3){\small\textbf{-2}}

\psline[linestyle=dashed](1.5,-1)(1.5,-2.5)
\rput(0.8,-2.3){$\mathcal{L}_{6}(\infty)$}
\rput(1.9,-2.3){\small\textbf{-1}}

\psline(3,-3)(-1.5,-3)
\rput(-1,-2.7){$\mathcal{L}_{4}^*$}
\rput(-1,-3.3){\small\textbf{-2}}

\psline[linestyle=dashed](1,-3)(1,-4.5)
\rput(0.3,-4.3){$\mathcal{L}_{7}(\infty)$}
\rput(1.4,-4.3){\small\textbf{-1}}

\psline(5.5,0)(5.5,-5)
\rput(5.2,-4.8){$\mathcal{L}_{2}^*$}
\rput(5.9,-4.8){\small\textbf{-3}}

\psline(5.5,-2.5)(11.5,-2.5)
\rput(11,-2.2){$\mathcal{L}_{8}^*$}
\rput(11,-2.8){\small\textbf{-2}}

\psline(10,-2.5)(10,0)
\rput(9.7,-0.3){$\mathcal{L}_{9}^{*}$}
\rput(10.4,-0.3){\small\textbf{-2}}

\psline[linestyle=dashed](10,-1.5)(8,-1.5)
\rput(8.5,-1.2){$\mathcal{L}_{10}(\infty)$}
\rput(8.5,-1.8){\small\textbf{-1}}

\psline(7.5,-2.5)(7.5,-5)
\rput(7.2,-4.8){$\mathcal{L}_{5}^{*}$}
\rput(7.9,-4.8){\small\textbf{-2}}

\end{pspicture}
\caption{The fiber $\mathcal{F}(\infty)$ of the  Okamoto space.}\label{fig:okamoto-auto}
\end{figure}
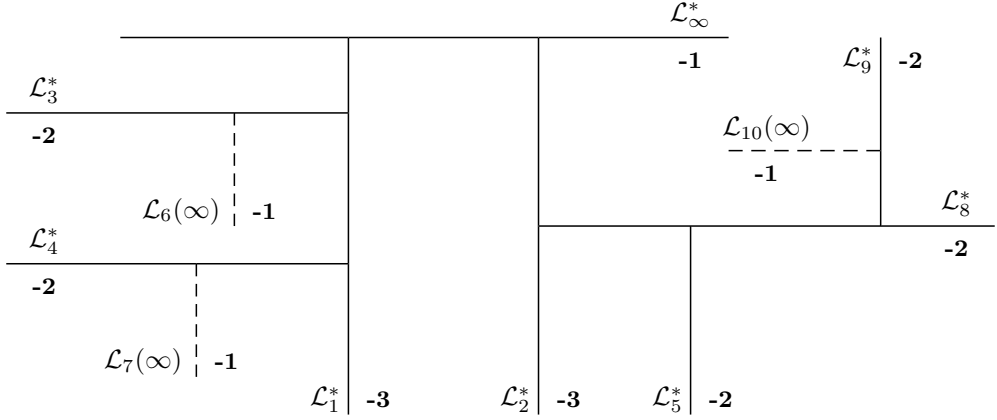
We observe that the resolution process will be different for the autonomous system (\ref{eq:p3-system-auto}).
The reason is that the affine part of the line $u=0$ consists only of regular points of the system (\ref{eq:p3-system-auto}) and the point $b_{11}$ does not appear as a base point there.
Thus, the resolution of singularities of  (\ref{eq:p3-system-auto}) is achieved after ten blow-ups.
Blowing down $\mathcal{L}_{\infty}^*$ in $\mathcal{F}(\infty)$, we get another representation of the fiber $\mathcal{F}(\infty)$, which is shown in Figure \ref{fig:okamoto-blow-down-auto}.
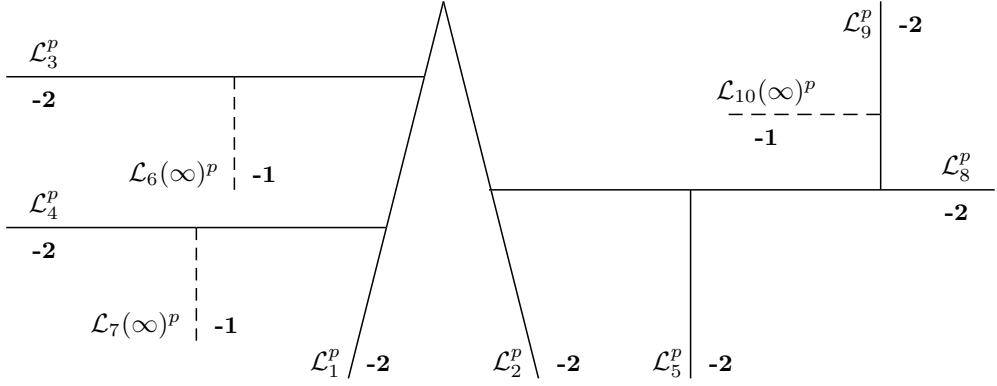
\begin{figure}[h]
\centering
\begin{pspicture}(-1.5,-5)(12,0.5)

\psset{linecolor=black,linewidth=0.02,fillstyle=solid}

\psline(4.25,0)(3,-5)
\rput(2.7,-4.8){$\mathcal{L}_{1}^p$}
\rput(3.4,-4.8){$\small\textbf{-2}$}

\psline(4,-1)(-1.5,-1)
\rput(-1,-0.7){$\mathcal{L}_{3}^p$}
\rput(-1,-1.3){\small\textbf{-2}}

\psline[linestyle=dashed](1.5,-1)(1.5,-2.5)
\rput(0.7,-2.3){$\mathcal{L}_{6}(\infty)^p$}
\rput(1.9,-2.3){\small\textbf{-1}}

\psline(3.5,-3)(-1.5,-3)
\rput(-1,-2.7){$\mathcal{L}_{4}^p$}
\rput(-1,-3.3){\small\textbf{-2}}

\psline[linestyle=dashed](1,-3)(1,-4.5)
\rput(0.2,-4.3){$\mathcal{L}_{7}(\infty)^p$}
\rput(1.4,-4.3){\small\textbf{-1}}

\psline(4.25,0)(5.5,-5)
\rput(5.1,-4.8){$\mathcal{L}_{2}^p$}
\rput(5.9,-4.8){\small\textbf{-2}}

\psline(4.85,-2.5)(11.5,-2.5)
\rput(11,-2.2){$\mathcal{L}_{8}^p$}
\rput(11,-2.8){\small\textbf{-2}}

\psline(10,-2.5)(10,0)
\rput(9.7,-0.3){$\mathcal{L}_{9}^p$}
\rput(10.4,-0.3){\small\textbf{-2}}

\psline[linestyle=dashed](10,-1.5)(8,-1.5)
\rput(8.5,-1.2){$\mathcal{L}_{10}(\infty)^p$}
\rput(8.5,-1.8){\small\textbf{-1}}

\psline(7.5,-2.5)(7.5,-5)
\rput(7.2,-4.8){$\mathcal{L}_{5}^p$}
\rput(7.9,-4.8){\small\textbf{-2}}

\end{pspicture}
\caption{The fiber $\mathcal{F}(\infty)$ of the Okamoto space, after one blow down.}\label{fig:okamoto-blow-down-auto}
\end{figure}

\subsection{Movable singularities}\label{sec:movable}
Here, we consider neighbourhoods of exceptional lines where the Painlev\'e vector field (\ref{eq:p3-system}) becomes unbounded.
The construction given in Appendix \ref{sec:p3resolution} shows that these are given by the lines $\mathcal{L}_6$, $\mathcal{L}_7$, $\mathcal{L}_{10}$ and $\mathcal{L}_{11}$.

\subsubsection*{Movable poles of $y$} 

The line $\mathcal{L}_6$ is obtained after three consecutive blow ups,
at $b_1$, $b_3$, and $b_6$.
The flow of (\ref{eq:p3-system}) passes through $b_1$ at a given point $x_0$ if $y$ has a pole at $x_0$ and $z$ is either regular at $x_0$ or has a pole of smaller order than $y$.
Next, the flow passes through $b_3$ if $z(x_0)=\sqrt{4\gamma}$.
From the first equation of (\ref{eq:p3-system}), we get then that $y$
has a simple pole at $x_0$.

Corresponding Laurent expansions of $y$ and $z$ around $x_0$ can be calculated from Equations (\ref{eq:p3-system}):
$$
\begin{aligned}
y&=-\frac{1}{\sqrt{\gamma}(x-x_0)}-\frac{\alpha-\sqrt{\gamma}}{2x_0\gamma}+B(x-x_0)+\dots,
\\
z&=\sqrt{4\gamma}-\frac{2\alpha}{x_0}(x-x_0)
+\left( 2 \gamma(3B-  \delta_1) +\frac{3 \alpha ^2-4 \alpha  \sqrt{\gamma }+5 \gamma }{2 \sqrt{\gamma } x_0^2} \right)(x-x_0)^2+\dots,
\end{aligned}
$$
where $B$ is an arbitrary constant.
Each value of $B$ corresponds to exactly one point of $\mathcal{L}_6\setminus\mathcal{I}$.

The alternative sign of the square root of $\gamma$ leads to another set of simple movable poles of $y$, with $z(x_0)=-\sqrt{4\gamma}$:$$
\begin{aligned}
y&=\frac{1}{\sqrt{\gamma}(x-x_0)}-\frac{\alpha+\sqrt{\gamma}}{2x_0\gamma}+B(x-x_0)+\dots,
\\
z&=-\sqrt{4\gamma}-\frac{2\alpha}{x_0}(x-x_0)
+\left( 2 \gamma(3B-  \delta_1) -\frac{3 \alpha ^2+4 \alpha  \sqrt{\gamma }+5 \gamma }{2 \sqrt{\gamma } x_0^2} \right)(x-x_0)^2+\dots.
\end{aligned}
$$
Values of the arbitrary constant $B$ now correspond to points of the line $\mathcal{L}_7\setminus\mathcal{I}$.

\subsubsection*{Movable poles of $z$} 

The line $\mathcal{L}_{10}$ is obtained after five consecutive blow ups, at $b_2$, $b_5$, $b_8$, $b_9$, and $b_{10}$.
The flow of (\ref{eq:p3-system}) passes through $b_2$ at a given point $x_0$ if $z$ has a pole at $x_0$ and $y$ is either regular at $x_0$ or has a pole of smaller order than $z$.
Next, the flow passes through $b_5$ if $y(x_0)=0$.
Denote by $N$ the order of the zero of $y$ at $x_0$ and by $P$ the order of the pole of $z$ at $x_0$.
 
The flow passes through $b_8$ if $yz$ has a pole at $x_0$, that is, $P> N$.
The flow passes through $b_9$ if $y^2z$ is equal to $-4\delta_1$ at $x_0$, thus $2N=P$.
On the other hand, from the second equation of (\ref{eq:p3-system}), we get $P+1=2P-N$, that is $P=N+1$, which means $N=1$ and $P=2$.

So, we concluded that $z$ has a double pole and $y$ a simple zero at $x_0$.
The coefficients in the Laurent expansions of $y$ and $z$ around $x_0$ can be calculated from  (\ref{eq:p3-system}):
$$
\begin{aligned}
y&=-\delta_1 (x-x_0)-\frac{\beta_1 +3 \delta_1 }{2 x_0}(x-x_0)^2+B(x-x_0)^3+\dots,
\\
z&=-\frac{4}{\delta_1(x-x_0)^2}+ \frac{2 (\beta_1 +2 \delta_1 )}{\delta_1^2 x_0}\cdot \frac{1}{x-x_0}
-\left(
\frac{2 B}{\delta_1 ^2}+
\frac{\beta_1 ^2+5 \beta_1  \delta_1 +4 \delta_1 ^2}{\delta_1 ^3 x_0^2}
\right)
+\dots.
\end{aligned}
$$
Values of the arbitrary constant $B$ correspond to the points of the line $\mathcal{L}_{10}\setminus\mathcal{I}$.

The line $\mathcal{L}_{11}$ is obtained after three consecutive blow ups, at $b_2$, $b_5$, and $b_{11}$.
We have already concluded that the flow passes through $b_5$ if $z$
has a pole and $y$ has a zero.
The flow passes through $b_{11}$ if $yz$ has neither pole nor zero at the given point $x_0$, thus the order of pole of $z$ equals to the order of zero of $y$.
From the first equation of (\ref{eq:p3-system}), we conclude that $z$ has a simple pole and $y$ a simple zero.
Their Laurent expansions are:
$$
\begin{aligned}
y&=\delta_1 (x-x_0)-\frac{\beta_1+\delta_1}{2x_0}(x-x_0)^2+B(x-x_0)^3+\dots,
\\
z&=-\frac{2\beta_1}{\delta_1^2 x_0}\cdot\frac{1}{x-x_0}-\frac{2 \beta_1 ^2+3 \beta_1  \delta_1 +3 \delta_1^2-6 \delta_1  x_0^2 B}
{\delta_1 ^3 x_0^2}
+\dots.
\end{aligned}
$$
Values of the arbitrary constant $B$ now correspond to the points of the line $\mathcal{L}_{11}\setminus\mathcal{I}$.

\section{Special solutions of $\PIII$}\label{s:ssol}
In this section, we recall some facts about rational solutions of
$\PIII$ and relate them to equilibrium points of the total energy $E$,
defined by Equation \eqref{eq:energy}. Notice that although $E$ was
defined for the autonomous system, we can extend it to solutions
$\bigl(y(x), z(x)\bigr)$ of the $\PIII$
system \eqref{eq:p3-system}. Doing so gives
\begin{align}
\nonumber E'&=\frac{d E}{d x}=\frac{\partial E}{\partial y}y'+\frac{\partial E}{\partial z}z',
\\
\nonumber &=\frac1x
\left( 
 2 \alpha  \delta_1+\frac{2 \beta_1 \delta_1}{y^2}+\alpha  y^2 z+\beta_1 z+2 \delta_1 z -2E
\right),
\\
\label{eq:energychange}&=
\frac1x\left(
\left(2\alpha+\frac{\beta_1}{y^2}\right)\left(\frac{y^2z}{2}+\delta_1\right)
+
2\delta_1 z
-
2E
\right)
.
\end{align}

The pencil of elliptic curves arising from $E$
is given by:
\begin{equation}\label{eq:pencil}
h(y,z)\equiv h_c(y, z):=c+E=0,
\end{equation}
where $c$ is an arbitrary constant parameter. For general values of
$c$, the curves are non-singular and the corresponding curves will be smooth.
To investigate possible singularities of curves, consider the conditions:
$$
\frac{\partial{h}}{\partial y}=0,
\quad
\frac{\partial{h}}{\partial z}=0,
$$
which give:
$$
\frac{y z^2}{2}-2\gamma y=0,
\quad
\frac{y^2 z}{2}+\delta_1=0.
$$
Notice that these lead to fixed points of the autonomous flow
(\ref{eq:p3-system-auto}), given by Equation \eqref{eq:auto_equi}.

The first two points in Equation \eqref{eq:auto_equi} belong to the
curve corresponding to $c_1=-2\delta_1\sqrt{\gamma}$, while the last
two lie on curves given by $c_2=2\delta_1\sqrt{\gamma}$:
$$
\begin{aligned}
&c_1=-2\delta_1\sqrt{\gamma},
\quad
h_{c_1}(y,z)=\frac{1}{4}(z-2\sqrt{\gamma})(y^2 z+2y^2\sqrt{\gamma}+4\delta_1),
\\
&c_2=2\delta_1\sqrt{\gamma},
\quad
h_{c_2}(y,z)=\frac{1}{4}(z+2\sqrt{\gamma})(y^2 z-2y^2\sqrt{\gamma}+4\delta_1).
\end{aligned}
$$
It turns out that these fixed points correspond exactly to special
solutions of $\PIII$.

\subsection{Special rational solutions of $\PIII$}
It is well known that $\PIII$ has special rational solutions corresponding to
special values of parameters \cite[\S 32.8(iii)]{NISThandbook}. All
such can be generated by applying
B\"acklund transformations to a group of seed solutions given by
\begin{align}
&y=k,& &\text{for }& &\beta=-\alpha k^2,\ \delta=-\gamma k^4,\label{eq:constsol}
\\
&y=kx,& &\text{for }& &\alpha=0,\ \gamma=0,\ \delta=-\beta k,
\\
&y=\frac{x+k}{x+k+1},& &\text{for }& &\alpha=2k+1,\ \beta=-2k+1,\ \gamma=1,\delta=-1.
\end{align}
Notice that the constant solutions \eqref{eq:constsol} correspond to the fixed points of the autonomous system.

\section{Behaviour near the infinity set}\label{s:inf}
In this section, we study the behaviour of the solutions of the system (\ref{eq:p3-system}) near the set $\mathcal{I}$, where the vector field is infinite.
We prove that $\mathcal{I}$ is a repeller for the solutions and that each solution which comes sufficiently close to $\mathcal{I}$ at a certain point $x$ will have a pole in a neighbourhood of $x$.
The results of this section are summarised at the end in Corollary \ref{cor:infinity}, while estimates for behaviour near $\mathcal{I}$ are given in Theorem \ref{th:estimates}.

\begin{lemma}\label{lemma:L2}
For every $\epsilon>0$, there exists a neighbourhood $U$ of $\mathcal{L}_2^{p}$ such that
$$
\left|\frac{E'}{E}+\frac{2}{x}\right|<\epsilon
\quad
\text{in }
U.
$$
\end{lemma}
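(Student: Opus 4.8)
The plan is to work in local coordinates near the exceptional line $\mathcal{L}_2^p$ and show that $E'/E + 2/x$ is small there. Recall from \eqref{eq:energychange} that
\[
\frac{E'}{E} + \frac{2}{x}
=
\frac{1}{xE}\left(
\left(2\alpha + \frac{\beta_1}{y^2}\right)\left(\frac{y^2z}{2} + \delta_1\right)
+ 2\delta_1 z
\right).
\]
So the claim is equivalent to showing that the bracketed expression, divided by $E$, stays bounded (in fact can be made arbitrarily small relative to $|x|$, but since $x$ is a fixed independent variable the statement as phrased is uniform in the fibre coordinates only) as one approaches $\mathcal{L}_2^p$. The key point is that $\mathcal{L}_2^p$ is the proper transform of the exceptional line over $b_2$, which by the discussion in Section~\ref{sec:movable} is the base point reached when $z$ has a pole and $y$ is regular (or has a lower-order pole). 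So near $\mathcal{L}_2^p$ we have $z\to\infty$ while $y$ stays bounded; the generic point of $\mathcal{L}_2^p$ has $y$ bounded away from $0$ and $\infty$.

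First I would write down the affine chart(s) covering a neighbourhood of $\mathcal{L}_2^p$ from Appendix~\ref{sec:p3resolution} — these are the charts obtained by blowing up $b_2$ and then blowing down $\mathcal{L}_\infty^*$. In such a chart, let the coordinates be, say, $(y, w)$ with $w = 1/z$ (or the appropriate resolved coordinate), so that $\mathcal{L}_2^p$ is locally $\{w = 0\}$ and the complement of $\mathcal{I}$ on it corresponds to $y$ in a compact set avoiding $0$. Next I would substitute $z = 1/w$ (up to the correct change of variables) into $E = \tfrac{y^2z^2}{4} + \delta_1 z - \gamma y^2$ and into the numerator $N := \left(2\alpha + \beta_1/y^2\right)\left(\tfrac{y^2z}{2}+\delta_1\right) + 2\delta_1 z$. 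The leading behaviour is $E \sim \tfrac{y^2}{4w^2}$ and $N \sim \tfrac{\alpha y^2}{w}\cdot\bigl(\text{bounded}\bigr) + \tfrac{2\delta_1}{w}$, i.e. $E = O(w^{-2})$ while $N = O(w^{-1})$. Hence $N/E = O(w)$, which tends to $0$ as $w\to 0$, uniformly for $y$ in a compact subset of $\mathbb{C}^*$. Dividing by $x$ (fixed, nonzero) preserves the estimate, and choosing $w$ small enough — equivalently, choosing a small enough neighbourhood $U$ of $\mathcal{L}_2^p$ — makes the quantity less than $\epsilon$.

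The one subtlety I would be careful about is the behaviour near the points where $\mathcal{L}_2^p$ meets other lines — in particular where it meets $\mathcal{I}$ or where $y\to 0$ or $y\to\infty$ along $\mathcal{L}_2^p$ — because there the simple estimate $E\sim y^2/(4w^2)$ can degenerate (the coefficient $y^2$ may vanish, or $\beta_1/y^2$ may blow up). The statement only asks for a neighbourhood $U$ of $\mathcal{L}_2^p$ as a whole, so I would need to cover $\mathcal{L}_2^p$ by finitely many charts and check the estimate in each, paying attention to the chart near the intersection point of $\mathcal{L}_2^p$ with $\mathcal{L}_1^p$ (or $\mathcal{L}_8^p$) where different resolved coordinates are in force. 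In each such chart the same mechanism should apply: $\mathcal{L}_2^p$ carries the pole of $z$, so on it $E$ has a pole of order exactly $2$ (from the $y^2z^2/4$ term) while $N$ has a pole of order exactly $1$, and the ratio vanishes on the line.

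The main obstacle I anticipate is purely bookkeeping: assembling the correct resolved coordinate charts around $\mathcal{L}_2^p$ from the appendix and verifying that in every one of them the leading term of $E$ is genuinely of order $w^{-2}$ with a coefficient that does not vanish identically on the line (so that the ratio $N/E$ is not of the indeterminate form $0/0$ on $\mathcal{L}_2^p\setminus\mathcal{I}$, and that near $\mathcal{L}_2^p\cap\mathcal{I}$ the cancellation still goes the right way). Once the charts are in hand the inequality follows by inspection of orders of vanishing, with $U$ taken to be a sufficiently thin tubular neighbourhood.
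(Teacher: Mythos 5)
Your proposal is correct and follows essentially the same route as the paper: the paper simply writes $E'/E+2/x$ in the two coordinate charts $(y_{21},z_{21})$ and $(y_{22},z_{22})$ covering $\mathcal{L}_2^p$ and observes that the resulting expression carries an overall factor of $z_{21}$ (respectively $y_{22}$), i.e.\ it vanishes on the line, which is exactly your pole-order count ($E$ of order $w^{-2}$, the numerator of order $w^{-1}$). Your extra caution about the corner points where $\mathcal{L}_2^p$ meets $\mathcal{L}_1^p$ and $\mathcal{L}_8^p$ is well placed — the paper's two-chart computation leaves that verification implicit — and checking it in the resolved coordinates (e.g.\ $(y_{81},z_{81})$) confirms the estimate survives there.
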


\begin{proof}
In the charts $(y_{21},z_{21})$ and $(y_{22},z_{22})$, we have:
$$
\frac{E'}{E}+\frac{2}{x}=
\begin{cases}
\dfrac{4 y_{21} z_{21} \left(\alpha +2 \alpha  \delta_1 y_{21}^3 z_{21}+y_{21}^2 \left(\beta_1+2 \delta_1+2 \beta_1 \delta_1 y_{21}^3 z_{21}\right)\right)}{x \left(4 \delta_1 y_{21}^3 z_{21}-4 \gamma  y_{21}^2 z_{21}^2+1\right)},
\\
\\
-\dfrac{4 y_{22} \left(2 \beta_1 \delta_1 y_{22}+z_{22}^2 (\beta_1+2 (\delta_1+\alpha  \delta_1 y_{22}))+\alpha  z_{22}^4\right)}{x z_{22}^2 \left(z_{22}^2 \left(4 \gamma  y_{22}^2-1\right)-4 \delta_1 y_{22}\right)}.
\end{cases}
$$
Since $\mathcal{L}_2^{p}$ in these charts is given by $z_{21}=0$ and $y_{22}=0$, the statement follows.
\end{proof}

\begin{lemma}\label{lemma:L1,L8}
For each compact subset $K$ of $\mathcal{L}_1^p\setminus(\mathcal{L}_3^p\cup\mathcal{L}_4^p)$ and
$\mathcal{L}_8^p\setminus(\mathcal{L}_5^p\cup\mathcal{L}_9^p)$, there exists a neighbourhood $V$ of $K$ and a constant $C>0$ such that
$$
\left|
x\frac{E'}{E}
\right|<C
\quad\text{in } V \text{ for all } x\neq0.
$$
\end{lemma}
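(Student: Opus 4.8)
The plan is to mimic the proof of Lemma \ref{lemma:L2}, but to work in the affine charts adapted to the lines $\mathcal{L}_1^p$ and $\mathcal{L}_8^p$ rather than to $\mathcal{L}_2^p$. From the resolution carried out in Appendix \ref{sec:p3resolution}, each of these two lines is covered by finitely many coordinate charts $(y_{j1},z_{j1})$, $(y_{j2},z_{j2})$ (for $j=1$ and $j=8$), in which the line itself is cut out by a single coordinate vanishing, and the other intersecting lines $\mathcal{L}_3^p,\mathcal{L}_4^p$ (resp. $\mathcal{L}_5^p,\mathcal{L}_9^p$) correspond to the second coordinate taking prescribed finite values (or $\infty$, handled by the complementary chart). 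The first step is therefore to write down the rational expression for $x\,E'/E$ in each of these charts, exactly as was done for $\mathcal{L}_2^p$ in Lemma \ref{lemma:L2}: substitute the chart transformation into \eqref{eq:energychange} and \eqref{eq:energy}, clear denominators, and obtain a ratio of polynomials in the chart variables whose coefficients do not depend on $x$.

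The second step is the estimate itself. A compact subset $K$ of $\mathcal{L}_1^p\setminus(\mathcal{L}_3^p\cup\mathcal{L}_4^p)$ stays a positive distance away from the points where the denominator of $x\,E'/E$ vanishes — these bad points are precisely the intersections with $\mathcal{L}_3^p$ and $\mathcal{L}_4^p$ (and with $\mathcal{L}_\infty^*$, or whatever line carries the pole of $E$ in that chart), all of which are excluded. Hence $K$ has a neighbourhood $V$ (taken small enough to avoid those loci and to be contained in the union of the relevant charts) on which the denominator is bounded away from $0$ and the numerator is bounded; since neither bound involves $x$, the quotient $|x\,E'/E|$ is bounded by some $C>0$ on $V$ uniformly in $x\neq0$. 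The same argument applies verbatim to $\mathcal{L}_8^p\setminus(\mathcal{L}_5^p\cup\mathcal{L}_9^p)$ using the charts around $\mathcal{L}_8^p$.

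The point to be careful about — and the only real obstacle — is bookkeeping: one must check that the set of zeros of the denominator of $x\,E'/E$, intersected with $\mathcal{L}_1^p$ (resp. $\mathcal{L}_8^p$), really is contained in $\mathcal{L}_3^p\cup\mathcal{L}_4^p$ (resp. $\mathcal{L}_5^p\cup\mathcal{L}_9^p$), i.e. that $E$ has no other zeros or poles accumulating on the open part of these lines, and that the numerator does not itself blow up there. This is exactly the kind of statement that the explicit resolution formulae in Appendix \ref{sec:p3resolution} make verifiable: the denominator factor is (a power of) the function cutting out the polar locus of $E$ together with the factors cutting out $\mathcal{L}_3^p,\mathcal{L}_4^p$ (resp. $\mathcal{L}_5^p,\mathcal{L}_9^p$), and away from those the expression is a regular function of the chart coordinates. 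Once this is confirmed chart by chart, compactness of $K$ and continuity finish the proof immediately.
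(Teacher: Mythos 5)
Your proposal follows essentially the same route as the paper: the authors write $x\,E'/E$ explicitly in the charts $(y_{11},z_{11})$, $(y_{12},z_{12})$ covering $\mathcal{L}_1^p$ and $(y_{81},z_{81})$, $(y_{82},z_{82})$ covering $\mathcal{L}_8^p$ (where it is an $x$-independent rational function of the chart coordinates, since $E$ depends only on $(y,z)$ and $E'$ carries an overall $1/x$), and observe that its poles on each line occur exactly at the intersections with $\mathcal{L}_3^p,\mathcal{L}_4^p$ (resp.\ $\mathcal{L}_5^p,\mathcal{L}_9^p$), so compactness of $K$ gives the uniform bound. The bookkeeping you flag as the only real obstacle is precisely what the explicit formulas in the paper's proof verify.
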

\begin{proof}
Near $\mathcal{L}_1^p$, in the coordinate charts $(y_{11},z_{11})$ and $(y_{12},z_{12})$, we have:
$$
x\frac{E'}{E}\sim
\begin{cases}
\dfrac{8 \gamma  y_{11}^2+4 \alpha  y_{11}-2}{1-4 \gamma  y_{11}^2},
\\
\\
\dfrac{8 \gamma -2 z_{12}^2+4 \alpha  z_{12}}{z_{12}^2-4 \gamma }.
\end{cases}
$$
Since $\mathcal{L}_3^p$, $\mathcal{L}_4^p$ are given by $y_{11}=\pm1/\sqrt{4\gamma}$ and $z_{12}=\pm\sqrt{4\gamma}$, the expression is bounded for a compact subset $K$ of $\mathcal{L}_1^p\setminus(\mathcal{L}_3^p\cup\mathcal{L}_4^p)$.

Near $\mathcal{L}_8^p$, in the coordinate charts $(y_{81},z_{81})$ and $(y_{82},z_{82})$, we have:
$$
x\frac{E'}{E}\sim
\begin{cases}
\dfrac{2 \left(4 \beta_1 \delta_1 y_{81}^2+2 \beta_1 y_{81}-1\right)}{4 \delta_1 y_{81}+1},
\\
\\
\dfrac{2 \left(4 \beta_1 \delta_1-z_{82}^2+2 \beta_1 z_{82}\right)}{z_{82} (4 \delta_1+z_{82})}.
\end{cases}
$$
In the $(y_{81},z_{81})$ chart, the line $\mathcal{L}_5^p$ is not
visible, and furthermore, the line $\mathcal{L}_9^p$ is projected onto point 
$(-\frac{1}{4\delta_1},0)$.
In the $(y_{82},z_{82})$ chart, the lines $\mathcal{L}_5^p$ iand $\mathcal{L}_9^p$ are projected onto points $(0,0)$ and $(0,-4\delta_1)$.
Thus the expression $x\,E'/E$ will be limited for compact subsets $K$ as desired.
\end{proof}

\begin{lemma}\label{lemma:d}
There exists a continuous complex valued function $d$ on a neighbourhood of the infinity set $\mathcal{I}$ in the Okamoto space, such that:
$$
d=\begin{cases}
1/E & \text{in a neighbourhood of } \mathcal{I}\setminus(\mathcal{L}_3^p\cup\mathcal{L}_4^p\cup\mathcal{L}_5^p\cup\mathcal{L}_9^p),
\\
-J_{61}/\sqrt{\gamma} & \text{in a neighbourhood of } \mathcal{L}_3^p\setminus\mathcal{L}_1^p,
\\
J_{71}/\sqrt{\gamma} & \text{in a neighbourhood of } \mathcal{L}_4^p\setminus\mathcal{L}_1^p,
\\
J_{111}/\delta_1 & \text{in a neighbourhood of } \mathcal{L}_5^p\setminus\mathcal{L}_8^p,
\\
J_{102}/(2\delta_1) & \text{in a neighbourhood of } \mathcal{L}_9^p\setminus\mathcal{L}_8^p.
\end{cases}
$$
\end{lemma}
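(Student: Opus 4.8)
The plan is to define $d$ by the five formulas of the statement and to verify three points: that each of them is holomorphic, hence continuous, on the neighbourhood attached to it; that the five neighbourhoods can be arranged so that their union is a neighbourhood of $\mathcal{I}$; and that the formulas agree on overlaps, so that $d$ is single valued. Everything is done chart by chart, using the affine coordinates of the fibre $\mathcal{F}(x)$ produced by the resolution in Appendix~\ref{sec:p3resolution}.

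The first clause is the easiest. Away from $\mathcal{L}_3^p\cup\mathcal{L}_4^p\cup\mathcal{L}_5^p\cup\mathcal{L}_9^p$ one pushes the rational function $1/E=4/(y^2z^2+4\delta_1z-4\gamma y^2)$ through the blow-up charts and checks that $E$ is holomorphic and non-vanishing there; along the components $\mathcal{L}_1^p$, $\mathcal{L}_2^p$, $\mathcal{L}_8^p$ of $\mathcal{I}$ one finds moreover that $E$ becomes infinite, so $1/E$ is holomorphic and vanishes along those lines. The remaining four clauses are needed because on $\mathcal{L}_3^p$, $\mathcal{L}_4^p$, $\mathcal{L}_5^p$, $\mathcal{L}_9^p$ the bare expression for $1/E$ is not manifestly regular: these are exactly the components of $\mathcal{I}$ lying in the closures of the reducible level curves $h_{c_1}=0$ and $h_{c_2}=0$ of the pencil~\eqref{eq:pencil} (and of the curve $E=0$), along whose components the dominant terms of the energy~\eqref{eq:energy} cancel, so that a different local representative is required.

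Take $\mathcal{L}_3^p$, which lies over the line $z=2\sqrt\gamma$, the linear component of $h_{c_1}=0$. Writing $z^2/4-\gamma=(z/2-\sqrt\gamma)(z/2+\sqrt\gamma)$ and using $z=2\sqrt\gamma$ there, the top-order terms $y^2z^2/4$ and $\gamma y^2$ of $E$ cancel and one is left with $E\sim\sqrt\gamma\,y^2(z-2\sqrt\gamma)$; after the three blow-ups at $b_1$, $b_3$, $b_6$ producing $\mathcal{L}_6$, the function $-\sqrt\gamma/E$ becomes a genuine holomorphic coordinate on the corresponding chart, which by construction contains a neighbourhood of $\mathcal{L}_3^p\setminus\mathcal{L}_1^p$; writing this coordinate as $J_{61}$ gives $1/E=-J_{61}/\sqrt\gamma$ wherever $1/E$ is unambiguous, so the clause $d=-J_{61}/\sqrt\gamma$ is consistent with the first one. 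The line $\mathcal{L}_4^p$ is handled the same way over $z=-2\sqrt\gamma$, where the same cancellation of $y^2z^2/4$ against $\gamma y^2$ occurs but with opposite sign, giving $d=J_{71}/\sqrt\gamma$ in the $\mathcal{L}_7$ chart. For $\mathcal{L}_5^p$ and $\mathcal{L}_9^p$ one is instead near $y=0$, $z=\infty$, over the cubic component $y^2z-2\sqrt\gamma y^2+4\delta_1=0$ of $h_{c_2}=0$: the cancelling terms are now $\delta_1z$ and $y^2z^2/4$, the relevant charts are those of $\mathcal{L}_{11}$ (blown up at $b_2$, $b_5$, $b_{11}$) and of $\mathcal{L}_{10}$ (blown up at $b_2$, $b_5$, $b_8$, $b_9$, $b_{10}$), and the constant is $\delta_1$ rather than $\sqrt\gamma$; the extra factor $2$ in the $\mathcal{L}_9^p$ clause reflects the double pole of $z$ there, equivalently the normalisation $y^2z=-4\delta_1$ defining $b_9$. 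In every case the work is the same in spirit: carry $E$ through the chained coordinate changes of Appendix~\ref{sec:p3resolution} and read off the regular coordinate together with its scalar multiple.

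For the gluing, one uses the incidence pattern of Figure~\ref{fig:dynkin}: $\mathcal{L}_3^p$ and $\mathcal{L}_4^p$ meet $\mathcal{I}$ only along $\mathcal{L}_1^p$, while $\mathcal{L}_5^p$ and $\mathcal{L}_9^p$ meet it only along $\mathcal{L}_8^p$; hence the four special neighbourhoods can be chosen pairwise disjoint and each disjoint from $\mathcal{L}_1^p\cup\mathcal{L}_2^p\cup\mathcal{L}_8^p$. The only overlaps are then between the first neighbourhood and one of the other four at a time, over a punctured neighbourhood of, say, $\mathcal{L}_3^p\setminus\mathcal{L}_1^p$, where the identity $1/E=-J_{61}/\sqrt\gamma$ established above makes the definitions coincide; choosing the first neighbourhood large enough makes the union a neighbourhood of $\mathcal{I}$, and $d$ is then holomorphic, in particular continuous, there. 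The one real obstacle is computational: carrying the energy $E$ through all eleven blow-ups to establish the four identities above and, above all, to pin down the precise constants $-1/\sqrt\gamma$, $1/\sqrt\gamma$, $1/\delta_1$, $1/(2\delta_1)$; once the charts of Appendix~\ref{sec:p3resolution} are available this is routine, though lengthy.
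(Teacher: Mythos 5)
Your proposal is correct and follows essentially the same route as the paper: the paper likewise takes $d=1/E$ away from $\mathcal{L}_3^p\cup\mathcal{L}_4^p\cup\mathcal{L}_5^p\cup\mathcal{L}_9^p$ and then verifies, in the charts $(y_{61},z_{61})$, $(Y_{111},Z_{111})$, $(y_{102},z_{102})$ of Appendix~\ref{sec:p3resolution}, the asymptotic identities $EJ_{61}\sim-\sqrt{\gamma}-2\alpha/(xz_{61})$, $EJ_{111}\sim\delta_1-2\beta_1/(xZ_{111})$, $EJ_{102}\sim2\delta_1-(\beta_1+4\delta_1)/(8\delta_1^2xy_{102})$, which are exactly your cancellation computations. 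The only caution is that your claimed equality $1/E=-J_{61}/\sqrt{\gamma}$ on overlaps is in fact only asymptotic (there are lower-order correction terms visible in the displayed formulas), so the clauses agree only up to $(1+o(1))$ as one approaches the special lines --- an imprecision the paper's own terse proof shares.
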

\begin{proof}
As $1/E$ is well defined in a neighbourhood of
$\mathcal{I}\setminus(\mathcal{L}_3^p\cup\mathcal{L}_4^p\cup\mathcal{L}_5^p\cup\mathcal{L}_9^p)$,
we take 
the first statement as a definition of $d$ there.  The set $\mathcal{L}_3^p\setminus\mathcal{L}_1^p$ is defined by $y_{61}=0$ in the chart $(y_{61},z_{61})$, see Section \ref{sec:b6-blow}.
As we approach $\mathcal{L}_3^p$, we have:
$$
EJ_{61}\sim-\sqrt{\gamma }-\frac{2 \alpha }{x z_{61}}.
$$

The set $\mathcal{L}_5^p\setminus\mathcal{L}_8^p$ is defined by $Y_{111}=0$ in the chart $(Y_{111},Z_{111})$, which is obtained after blowing down the preimage of $\mathcal{L}_0$ in $(y_{111},z_{111})$, see Section \ref{sec:blowdownL0}.
As we approach $\mathcal{L}_5^p$, we have:
$$
EJ_{111}\sim
\delta_1-\frac{2 \beta_1}{x Z_{111}}.
$$

The set $\mathcal{L}_9^p\setminus\mathcal{L}_8^p$ is defined by $z_{102}=0$ in the chart $(y_{102},z_{102})$, see Section \ref{sec:b10-blow}.
As we approach $\mathcal{L}_9^p$, we have:
$$
EJ_{102}\sim2 \delta_1 -\frac{\beta_1+4 \delta_1}{8 \delta_1^2 x y_{102}}.
$$
These estimates lead us to the desired results.\end{proof}

\begin{lemma}[Behaviour near $\mathcal{L}_3^p\setminus\mathcal{L}_1^p$]
\label{lemma:L3-L1}
If a solution at the complex time $x$ is sufficiently close to $\mathcal{L}_3^p\setminus\mathcal{L}_1^p$, there there exists a unique $\xi\in\mathbb{C}$ such that:
\begin{enumerate}
\item
$z_{61}(\xi)=0$, i.e., $(y_{61}(\xi),z_{61}(\xi))\in\mathcal{L}_6(\xi)^p$; and
\item
$|x-\xi|=O(|d(x)||z_{61}(x)|)$ for small $d(x)$ and bounded $|z_{61}(x)|$.
\end{enumerate}
In other words, the solution has a pole at $x=\xi$.

For large $R_3>0$, consider the set
$\{x\in\mathbb{C} \mid |z_{61}(x)|\le R_3 \}$.
Then, its connected component containing $\xi$ is an approximate disk $D_3$ with centre $\xi$ and radius $|d(\xi)|R_3$,
and $x\mapsto z_{61}(x)$ is a complex analytic diffeomorphism from that approximate disk onto
$\{ y\in\mathbb{C} \mid |y|\le R_3\}$.
\end{lemma}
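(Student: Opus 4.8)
The plan is to reduce the analysis to a single ODE for $z_{61}$ along a solution and to compare it with the autonomous model $\dot w = \text{const}$. First I would work in the chart $(y_{61},z_{61})$ introduced in Section~\ref{sec:b6-blow}, where $\mathcal{L}_3^p\setminus\mathcal{L}_1^p$ is $\{y_{61}=0\}$ and $\mathcal{L}_6(x)^p$ is $\{z_{61}=0\}$. From Lemma~\ref{lemma:d} we have $d=-J_{61}/\sqrt{\gamma}$ there, and $EJ_{61}\sim-\sqrt{\gamma}-2\alpha/(xz_{61})$, so $d$ is comparable to $1/E$ up to the explicit factor. The key computation is to express $dz_{61}/dx$ in the chart using the resolved vector field; because $\mathcal{L}_6$ is the exceptional line over the base point $b_6$ through which the flow passes transversally when $y$ has a simple pole (as recorded in Section~\ref{sec:movable}), one expects
\[
\frac{dz_{61}}{dx} = \frac{1}{d(x)}\bigl(1+o(1)\bigr)
\]
as the solution approaches $\mathcal{L}_3^p\setminus\mathcal{L}_1^p$, with the $o(1)$ controlled by $|y_{61}|$, $|d(x)|$ and the distance of $x$ from the compact part of the base. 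This is the step I expect to be the main obstacle: one must verify that the transition functions between the successive blow-up charts at $b_1,b_3,b_6$ conspire to make the $z_{61}$-equation this clean, and that the error terms are genuinely uniform on a neighbourhood of a compact subset of $\mathcal{L}_3^p\setminus\mathcal{L}_1^p$, including the estimate $E' / E \sim$ explicit from the computations leading to Lemma~\ref{lemma:d}. I would also need to check that $y_{61}(x)$ itself stays small and in fact is $O(d(x))$-governed, so that the $o(1)$ really is small along the orbit, not merely at the initial instant.

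Granting the normalised equation, the existence and uniqueness of $\xi$ with $z_{61}(\xi)=0$ follows from a contraction/inverse-function argument. Writing $x=x(z_{61})$ for the inverse of $z_{61}\mapsto x$, we get $dx/dz_{61}=d(x)(1+o(1))$, and since $d$ is continuous and varies slowly (it is essentially $1/E$, whose logarithmic derivative is $O(1/x)$ by Lemmas~\ref{lemma:L2} and~\ref{lemma:L1,L8}), we can treat $d$ as approximately constant equal to $d(\xi)$ on the relevant range of $z_{61}$. Integrating from the current value $z_{61}(x)$ down to $0$ gives
\[
x-\xi = -\int_{0}^{z_{61}(x)} d\bigl(x(s)\bigr)\bigl(1+o(1)\bigr)\,ds = d(\xi)\,z_{61}(x)\bigl(1+o(1)\bigr),
\]
which is exactly claim~(2), $|x-\xi|=O(|d(x)||z_{61}(x)|)$, and shows $\xi$ is unique in the relevant neighbourhood. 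Claim~(1) is immediate from the construction, and the fact that the solution has a pole at $\xi$ follows because reaching $\mathcal{L}_6(\xi)^p$ is exactly the condition (Section~\ref{sec:movable}) that $y$ has a (simple) pole at $\xi$.

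For the final paragraph of the statement, I would run the same integration globally in $z_{61}$: on the set $\{|z_{61}(x)|\le R_3\}$ the map $x\mapsto z_{61}(x)$ has derivative $1/d(x)$ up to $(1+o(1))$, so on the connected component containing $\xi$ it is a complex-analytic local diffeomorphism with no critical points, hence (being proper onto $\{|y|\le R_3\}$ once we show the component maps onto the full disk) a biholomorphism. The radius estimate comes from $|x-\xi|=|d(\xi)||z_{61}(x)|(1+o(1))$ evaluated at $|z_{61}(x)|=R_3$, giving an approximate disk of radius $|d(\xi)|R_3$; here "approximate" absorbs the $(1+o(1))$ and the slow variation of $d$ over the disk, which is legitimate because $R_3|d(\xi)|$ is small (the disk shrinks as the solution gets closer to $\mathcal{I}$) while $R_3$ itself may be large. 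The only subtlety is to rule out the component of $\{|z_{61}|\le R_3\}$ pinching off before $z_{61}$ covers the whole target disk; this is handled by the derivative bound, which prevents $z_{61}$ from turning back, exactly as in the analogous arguments for $\PV$ in \cite{JR2017}.
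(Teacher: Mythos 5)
Your overall strategy coincides with the paper's: work in the $(y_{61},z_{61})$ chart, reduce to the normalised equation for $z_{61}$ (note that with $J_{61}=-y_{61}$ and $d=-J_{61}/\sqrt{\gamma}$ one gets $z_{61}'\sim-\sqrt{\gamma}/y_{61}=-1/d$, not $+1/d$; this sign is immaterial for the modulus estimates), integrate to locate the unique zero $\xi$ of $z_{61}$, and obtain the radius $|d(\xi)|R_3$ of $D_3$ and the diffeomorphism property from the non-vanishing derivative. The pole interpretation via $\mathcal{L}_6$ and the covering argument in your last paragraph also match the paper.

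The one step whose justification would fail as written is your reason for treating $d$ as approximately constant on the relevant range: you argue that $d$ ``is essentially $1/E$, whose logarithmic derivative is $O(1/x)$ by Lemmas~\ref{lemma:L2} and~\ref{lemma:L1,L8}''. Those lemmas control $E'/E$ only near $\mathcal{L}_2^p$, $\mathcal{L}_1^p$ and $\mathcal{L}_8^p$, not near $\mathcal{L}_3^p$; and, more seriously, the identification $d\sim 1/E$ degenerates exactly where you need it. By \eqref{eq:EJ61}, $EJ_{61}\sim-\sqrt{\gamma}-2\alpha/(xz_{61})$, so as $z_{61}\to0$ the function $E$ acquires a simple pole at $\xi$ while $d=-J_{61}/\sqrt{\gamma}$ stays bounded and nonzero --- this is precisely why Lemma~\ref{lemma:d} switches from $1/E$ to $-J_{61}/\sqrt{\gamma}$ in a neighbourhood of $\mathcal{L}_3^p\setminus\mathcal{L}_1^p$. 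Since $D_3$ is centred at the pole $\xi$, no argument based on slow variation of $1/E$ can cover it. The correct route, which the paper takes, is to integrate the logarithmic derivative of the Jacobian, \eqref{eq:J61'}: $J_{61}'/J_{61}\sim-z_{61}/2+(2-\alpha/\sqrt{\gamma})/x$, whose integral over a path of length $O(|d(\xi)|R_3)$ on which $|z_{61}|\le R_3$ is $O(|d(\xi)|R_3^2)+O(|d(\xi)|R_3/|\xi|)$, hence small for small $|d(\xi)|$; this gives $y_{61}(x)/y_{61}(\xi)\sim1$, i.e.\ $d(x)/d(\xi)\sim1$, on all of $D_3$. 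With that substitution your inverse-function and integration arguments go through and reproduce the paper's proof; the remaining ``main obstacle'' you flag (uniformity of the error terms in the chart) is exactly the content of the explicit formulas recorded in Section~\ref{sec:b6-blow}.
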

\begin{proof}
For the study of solutions near $\mathcal{L}_3^p\setminus\mathcal{L}_1^p$, we use the coordinates $(y_{61},z_{61})$.
In this chart, the line $\mathcal{L}_3^p\setminus\mathcal{L}_1^p$ is
given by the equation $y_{61}=0$ and parametrized  by
$z_{61}\in\mathbb{C}$, while line $\mathcal{L}_6^p$ is given by
$z_{61}=0$ and parametrized by $y_{61}$, see Section
\ref{sec:b6-blow}. Note that we also have $y=1/(y_{61}z_{61})$.

Asymptotically, for $y_{61}\to0$ and bounded $z_{61}$, $1/x$, we have:
\begin{subequations}
\begin{align}
z_{61}' &\sim-\frac{\sqrt{\gamma}}{y_{61}}
,\label{eq:z61'}
\\
J_{61} &=-y_{61}
,\label{eq:J61}
\\
\frac{J_{61}'}{J_{61}} &\sim -\frac{z_{61}}{2} +\frac1x\cdot\left(2- \frac{\alpha }{\sqrt{\gamma } }\right)
,\label{eq:J61'}
\\
EJ_{61} & \sim -\sqrt{\gamma }-\frac{2 \alpha }{x z_{61}}
.\label{eq:EJ61}
\end{align}
\end{subequations}
Integrating (\ref{eq:J61'}) from $\xi$ to $x$, we get:
\begin{gather*}
J_{61}(x)=J_{61}(\xi)e^{K(x-\xi)}\left(\frac{x}{\xi}\right)^{2 -\alpha/\sqrt{\gamma } }(1+o(1)),
\end{gather*}
with $K=- z_{61}(\tilde\xi)/2$, and $\tilde\xi$ on the integration path.

Because of (\ref{eq:J61}), $y_{61}$ is approximately equal to a small
constant, and from (\ref{eq:z61'}) it follows that
$$
z_{61}\sim z_{61}(\xi)-\frac{\sqrt{\gamma}}{y_{61}(\xi)}(x-\xi).
$$
Thus, if $x$ runs over an approximate disk $D$ centered at $\xi$ with radius $|\gamma^{-1/2}y_{61}|R$, then $z_{61}$ fills an approximate disk centered at $z_{61}(\xi)$ with radius $R$.
Therefore, if $|y_{61}(\xi)|\ll|\xi|$, the solution has the following properties for $x\in D$:
$$
\frac{y_{61}(x)}{z_{61}(x)}\sim1,
$$
and $z_{61}$ is a complex analytic diffeomorphism from $D$ onto an approximate disk with centre $z_{61}(\xi)$ and radius $R$.
If $R$ is sufficiently large, we will have $0\in z_{61}(D)$; i.e.,
the solution $y$ of the Painlev\'e equation will have a pole at a unique point in $D$.

Now, it is possible to take $\xi$ to be the pole point.
For $|x-\xi| \ll |\xi|$, we have
$$
\frac{d(x)}{d(\xi)}\sim1,
$$
and since $d=-J_{61}/\sqrt\gamma$ we obtain 
$$
-\frac{J_{61}(x)}{\sqrt{\gamma} d(\xi)}
=
\frac{y_{61}(x)}{\sqrt{\gamma}d(\xi)}\sim1,
$$
and we get
$$
z_{61}
\sim 
-\frac{\sqrt{\gamma}}{y_{61}(\xi)}(x-\xi)
\sim
-\frac{x-\xi}{d(\xi)}.
$$

Let $R_3$ be a large positive real number.
Then the equation $|z_{61}(x)|=R_3$ corresponds to $|x-\xi|\sim|d(\xi)|R_3$, 
which is still small compared to $|\xi|$ if $|d(\xi)|$ is sufficiently small.
Denote by $D_3$ the connected component of the set of all $x\in\mathbb{C}$ such that
$\{x\mid |z_{61}(x)|\le R_3\}$ is an approximate disk with centre $\xi$ and radius $|d(\xi)|R_3$.
More precisely, $z_{61}$ is a complex analytic diffeomorphism from $D_3$ onto 
$\{y\in\mathbb{C} \mid |y|\le R_3\}$, and
$$
\frac{d(x)}{d(\xi)}\sim1
\quad
\text{for all}
\quad
x\in D_3.
$$
It follows from Equation (\ref{eq:EJ61}) that the function $E(x)$ has a simple
pole at $x=\xi$. But we can estimate the size of the domain where it
becomes bounded and approximately constant.
From Equation (\ref{eq:EJ61}), we have
$$
E(x)J_{61}(x)\sim -\sqrt{\gamma}
\quad
\text{when}
\quad
1\gg
\frac{1}{|xz_{61}(x)|}
\sim
\left| \frac{y_{61}(\xi)}{\sqrt{\gamma}  \xi(x-\xi)}  \right|
\sim
\left| \frac{d(\xi)}{\sqrt{\gamma}  \xi(x-\xi)}  \right|,
$$
that is, when $|x-\xi|\gg\frac{|d(\xi)|}{|\xi|}$.

Since $R_3\gg 1/|\xi|$, the approximate radius of $D_3$ is given by
$$
|d(\xi)|R_3\gg \frac{|d(\xi)|}{|\xi|}.
$$
Thus $E(x)J_{61}(x)\sim-\sqrt{\gamma}$ for $x\in D_3\setminus D_3'$, where $D_3'$ is a disk centered at $\xi$ with small radius compared to the radius of $D_3$.
\end{proof}

\begin{lemma}[Behaviour near $\mathcal{L}_1^p\setminus\mathcal{L}_2^p$]
\label{lemma:L1-L2}
For large finite $R_1>0$, consider the set of all $x\in\mathbb{C}$ such that the solution at complex time $x$ is close to 
$\mathcal{L}_1^p\setminus\mathcal{L}_2^p$, with $|z_{31}(x)|\le R_1$, but not close to $\mathcal{L}_3^p\cup\mathcal{L}_4^p$.
Then this set is the complement of $D_3$ in and approximate disc $D_1$ with centre at $\xi$ and radius $\sim\sqrt{|d(\xi)/(2\gamma)|R_1}$ of an approximate disk with centre at the origin and small radius $\sim4|\sqrt{\gamma}d(\xi)|R_3^2$, where
$z_{31}(x)\sim4\gamma^{3/2}(x-x_0)^2/d(\xi)$.
\end{lemma}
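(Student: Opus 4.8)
The plan is to continue the analysis of Lemma~\ref{lemma:L3-L1}, following the solution as the complex time $x$ leaves the disk $D_3$ on which it stays close to $\mathcal{L}_3^p$. First I would pass to the chart $(y_{31},z_{31})$ of Appendix~\ref{sec:p3resolution} adapted to $\mathcal{L}_1^p$, in which $\mathcal{L}_1^p$ is a coordinate line, $\mathcal{L}_3^p$ and $\mathcal{L}_4^p$ meet it at two prescribed points, and $\mathcal{L}_2^p$ sits at the boundary of the chart. The transition to the chart $(y_{61},z_{61})$ used in Lemma~\ref{lemma:L3-L1} is the one induced by the blow-up at $b_6$ followed by the blow-down of $\mathcal{L}_{\infty}^*$; evaluated on the solution, where the relevant Jacobian factors are $\sim d(\xi)$, this transition reads $z_{31}\approx 4\gamma^{3/2}d(\xi)\,z_{61}^2$, so that the estimate $z_{61}(x)\sim-(x-\xi)/d(\xi)$ already forecasts the quadratic behaviour $z_{31}(x)\sim 4\gamma^{3/2}(x-\xi)^2/d(\xi)$.

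I would then record the autonomous-limit form of the vector field in the $(y_{31},z_{31})$ chart, in the spirit of \eqref{eq:z61'}--\eqref{eq:EJ61}. On $\mathcal{L}_1^p\setminus(\mathcal{L}_3^p\cup\mathcal{L}_4^p)$ one has $d=1/E$ (Lemma~\ref{lemma:d}), while by Lemma~\ref{lemma:L1,L8} the quantity $xE'/E$ is bounded there; integrating $d'/d=-E'/E$ from $\xi$ to $x$ and using $|x-\xi|\ll|\xi|$ then gives $d(x)/d(\xi)\sim 1$ on $D_1\setminus D_3$, and on $D_3$ this is already part of Lemma~\ref{lemma:L3-L1}, the two statements matching across $\partial D_3$ by the continuity of $d$ built into Lemma~\ref{lemma:d}. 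With the coordinate along $\mathcal{L}_1^p$ thus pinned, the remaining equation for $z_{31}$ integrates to leading order to $z_{31}(x)\sim 4\gamma^{3/2}(x-\xi)^2/d(\xi)$, the double zero at $x=\xi$ reflecting that the solution there lies over the point $b_6$ blown up in passing to $\mathcal{L}_6$. As a cross-check, $z_{31}\approx 4\gamma^{3/2}d(\xi)z_{61}^2$ together with $|z_{61}|=R_3$ on $\partial D_3$ gives $|z_{31}|\sim 4|\sqrt{\gamma}\,d(\xi)|R_3^2$ there.

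Granting these asymptotics the geometry is immediate. The locus $|z_{31}(x)|=R_1$ is to leading order the circle $|x-\xi|\sim\sqrt{|d(\xi)/(2\gamma)|\,R_1}$, which bounds an approximate disk $D_1$ centred at $\xi$; its radius is still $\ll|\xi|$ once $|d(\xi)|$ is small enough, which justifies the $(1+o(1))$ estimates used throughout. Inside $D_1$ the solution lies near $\mathcal{L}_1^p$ except where it remains near $\mathcal{L}_3^p$, and that exceptional set is exactly $\{x:|z_{61}(x)|\le R_3\}=D_3$ of Lemma~\ref{lemma:L3-L1}, an approximate disk of radius $|d(\xi)|R_3$, which is $\ll\sqrt{|d(\xi)|R_1}$; since the solution stays near $\mathcal{L}_1^p$ or $\mathcal{L}_3^p$ throughout $D_1$ it is never near $\mathcal{L}_4^p$ there, so removing $D_3$ alone suffices and the set described in the statement is $D_1\setminus D_3$. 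Finally, since the leading term of $z_{31}$ is proportional to $(x-\xi)^2$, the map $x\mapsto z_{31}(x)$ carries $D_1\setminus D_3$ onto the region obtained from $\{w:|w|\le R_1\}$ by deleting an approximate disk about the origin of radius $\sim 4|\sqrt{\gamma}\,d(\xi)|R_3^2$.

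The hard part will be the second step: carrying the estimates of Lemma~\ref{lemma:L3-L1} through the blow-up at $b_6$ and the blow-down of $\mathcal{L}_{\infty}^*$ with enough precision to pin down the coefficient $4\gamma^{3/2}/d(\xi)$, and --- more delicately --- verifying that the coordinate along $\mathcal{L}_1^p$ stays within $(1+o(1))$ of $d(\xi)$ on \emph{all} of $D_1\setminus D_3$ and not merely on the part adjoining $D_3$. This is precisely where the bound on $xE'/E$ from Lemma~\ref{lemma:L1,L8} does the work, so one must also check that the portion of $\mathcal{L}_1^p$ swept out by the solution is a compact subset of $\mathcal{L}_1^p\setminus(\mathcal{L}_3^p\cup\mathcal{L}_4^p)$, to which that lemma applies.
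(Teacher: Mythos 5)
Your overall route is the paper's: pass to the $(y_{31},z_{31})$ chart, control $E$ (hence $d$) along the trajectory, match with Lemma~\ref{lemma:L3-L1} on $\partial D_3$, and integrate the $z_{31}$ equation to get the quadratic profile. However, the two steps you defer as ``the hard part'' are where the proof actually lives, and the mechanism you propose for the first of them would not go through. You plan to establish $d(x)/d(\xi)\sim1$ on $D_1\setminus D_3$ by integrating $E'/E$ and invoking Lemma~\ref{lemma:L1,L8}, subject to checking that the portion of $\mathcal{L}_1^p$ swept out is a compact subset of $\mathcal{L}_1^p\setminus(\mathcal{L}_3^p\cup\mathcal{L}_4^p)$. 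That check fails: on the inner boundary of the annulus one has $|z_{31}|\sim 4|\sqrt{\gamma}\,d(\xi)|R_3^2$, which tends to $0$ with $d(\xi)$, so the trajectory is \emph{not} uniformly bounded away from $\mathcal{L}_3^p$ and the constant in Lemma~\ref{lemma:L1,L8} degenerates (the term $2\alpha/(xz_{31})$ in \eqref{eq:E'31} is unbounded there). The paper's remedy is to integrate the exact expression \eqref{eq:E'31} and absorb the singular term by substituting $1/z_{31}\sim y_{31}'/\sqrt{\gamma}$ from \eqref{eq:y31'} and integrating by parts, which requires only $|y_{31}|\ll1$; some such device must be supplied, as a bare compactness appeal cannot cover the transition region.

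Second, you assert that ``the remaining equation for $z_{31}$ integrates to leading order,'' but $z_{31}'\sim-(z_{31}+4\sqrt{\gamma})/(2y_{31})$ still contains $y_{31}$, and it only becomes a closed, separable ODE after $y_{31}$ is eliminated through the Jacobian/energy relation $-y_{31}^2z_{31}=J_{31}$, $EJ_{31}\sim\sqrt{\gamma}-z_{31}/4$, $E\sim d(\xi)^{-1}$, which yields $y_{31}^{-1}\sim\bigl(-z_{31}/(\sqrt{\gamma}\,d(\xi))\bigr)^{1/2}$ and then an $\arctan$ integral producing the quadratic law and the factor $(z_{31}+4\sqrt{\gamma})$ in the denominator. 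This elimination is the essential step your sketch omits. Finally, your ``forecast'' via the chart transition is not consistent: from $z_{31}=y_{61}z_{61}\bigl(z_{61}+2\alpha/(x\sqrt{\gamma})\bigr)$ and $y_{61}\sim\sqrt{\gamma}\,d(\xi)$ one gets $z_{31}\approx\sqrt{\gamma}\,d(\xi)\,z_{61}^2$, not $4\gamma^{3/2}d(\xi)\,z_{61}^2$, and the formula you wrote contradicts the cross-check you perform with it on $\partial D_3$ two sentences later. These are fixable, but as written the proposal does not yet contain the argument.
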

\begin{proof}
The set $\mathcal{L}_1^p\setminus\mathcal{L}_2^p$ is visible in the chart $(y_{31},z_{31})$, where it is given by the equation $y_{31}=0$ and parametrized by $z_{31}$, see Section \ref{sec:b3-blow}.
In that chart, the line $\mathcal{L}_3^p$ is given by $z_{31}=0$.
The projection of line $\mathcal{L}_4^p$ is the point $(y_{31},z_{31})=(0,-4\sqrt{\gamma})$.

For $y_{31}\to0$, bounded $1/x$, and  $z_{31}$ bounded and bounded
away from $-4\sqrt{\gamma}$, we have
\begin{subequations}
\begin{align}
y_{31}' &\sim\frac{\sqrt{\gamma}}{z_{31}}
,\label{eq:y31'}
\\
z_{31}' &\sim-\frac{1}{2y_{31}}(z_{31}+4\sqrt{\gamma})
,\label{eq:z31'}
\\
J_{31}&=-y_{31}^2z_{31}
,\label{eq:J31}
\\
EJ_{31} & \sim \sqrt{\gamma}-\frac{z_{31}}{4}
,\label{eq:EJ31}
\\
\frac{E'}{E} &\sim -\frac2x
+\frac{2 \alpha}{x z_{31}}
+\frac{2 \alpha}{x  \left(4 \sqrt{\gamma }+z_{31}\right)}.
\label{eq:E'31}
\end{align}
\end{subequations}
From (\ref{eq:E'31}) and (\ref{eq:y31'}), we get:
$$
\frac{E'}{E}
\sim
-\frac2x+\frac{2\alpha}{x\sqrt{\gamma}}y_{31}'+\frac{2 \alpha}{x  \left(4 \sqrt{\gamma }+z_{31}\right)},
$$
and then integrating from $x_0$ to $x_1$:
\begin{align*}
\log\frac{E(x_1)}{E(x_0)}
\sim
&\Bigl(-2+\frac{2\alpha}{4\sqrt{\gamma}+z_{31}(\tilde x)}\Bigr)\log\frac{x_1}{x_0}\\
&\quad +
\frac{2\alpha}{\sqrt{\gamma}}
\Bigl(
\frac{y_{31}(x_1)}{x_1}-\frac{y_{31}(x_0)}{x_0}+\int_{x_0}^{x_1}\frac{y_{41}(x)}{x^2}dx
\Bigr),
\end{align*}
with $\tilde{x}$ being on the integration path.
Therefore $E(x_1)/ E(x_0)\sim1$ if for all $x$ on the segment from $x_0$ to $x_1$ we have
$|x-x_0|\ll|x_0|$, $|y_{31}(x)|\ll|x_0|$, and
$1/(4\sqrt{\gamma}+z_{31}(x))$ is bounded.
We choose $x_0$ on the boundary of $D_3$ from Lemma \ref{lemma:L3-L1}.
Then we have:
$$
\frac{d(\xi)}{d(x_0)}\sim E(x_0)d(\xi)\sim -E(x_0)\frac{J_{61}(x_0)}{\sqrt{\gamma}}\sim1
\quad
\text{and}
\quad
|z_{61}(x_0)|=R_3,
$$
which implies that
$$
|y_{31}|
=
\left|
\frac{1}{z_{61}+2\alpha/(x\sqrt{\gamma})}
\right|
\sim
\frac1{R_3}
\ll
1.
$$
Furthermore, Equations (\ref{eq:J31}) and (\ref{eq:EJ31}) imply that
$$
|z_{31}(z_0)|=\frac{|J_{31}(z_0)|}{|y_{31}(z_0)|^2}
\sim
\frac{4|\sqrt{\gamma}d(\xi)|R_3^2}{|1-d(\xi)R_3^2|},
$$
which is small when $|d(\xi)|$ is sufficiently small,
therefore
$$
|z_{31}(z_0)|
\sim
4|\sqrt{\gamma}d(\xi)|R_3^2.
$$

Since $D_3$ is an approximate disk with centre $\xi$ and small radius approximately equal to $d(\xi)R_3$,
and $R_3\gg|\xi|^{-1}$, we have that $|z_{61}(x)|\ge R_3\gg1$.
Writing $z=\xi+r(x_0-\xi)$, where $r\ge1$, we have $|y_{31}(x)|\ll1$ and
$$
\frac{|x-x_0|}{|x_0|}
=
(r-1)\left|1-\frac{\xi}{x_0}\right|
\ll1
\quad
\text{if}
\quad
r-1\ll\frac{1}{|1-\xi/x_0|}.
$$
Then equations (\ref{eq:J31}), (\ref{eq:EJ31}), and $E\sim d(\xi)^{-1}$ yield
$$
y_{31}^{-1}
\sim
\left(
-\frac{z_{31}}{\sqrt{\gamma}d(\xi)}
\right)^{1/2},
$$
which in combination with (\ref{eq:z31'}) leads to
$$
\frac{z_{31}'}{z_{31}^{1/2}(z_{31}+4\sqrt{\gamma})}
\sim 
-\frac12(-\sqrt{\gamma} d(\xi))^{-1/2}
.
$$
Integrating we get
$$
\arctan(z_{31}^{1/2} (4\sqrt{\gamma})^{-1/2})\mid_{x_0}^{x}
\sim
-\sqrt{\gamma}d(\xi)^{-1/2}(x-x_0),
$$
and therefore
$$
z_{31}(x)
\sim
\frac{4\gamma^{3/2}}{d(\xi)}(x-x_0)^2
\quad
\text{if}
\quad
|x-x_0|\gg\sqrt{|z_{31}(x_0)d(\xi)|}.
$$
For large finite $R_1>0$, the equation $|z_{31}|=R_1$ corresponds to $|x-x_0|\sim\sqrt{|d(\xi)/(2\gamma)|R_1}$,
which is still small compared to $|x_0|\sim|\xi|$, 
and therefore $|x-\xi|\le|x-x_0|+|x_0-\xi|\ll|\xi|$.
This proves the statement of the lemma.
\end{proof}

\begin{lemma}[Behaviour near $\mathcal{L}_9^p\setminus\mathcal{L}_8^p$]                 
\label{lemma:L9-L8}
If a solution at the complex time $x$ is sufficiently close to
$\mathcal{L}_9^p\setminus\mathcal{L}_8^p$, there there exists a unique
$\xi\in\mathbb{C}$ such that we have:
\begin{enumerate}
\item
$y_{102}(\xi)=0$, i.e., $(y_{102}(\xi),z_{102}(\xi))\in\mathcal{L}_{10}(\xi)$; and
\item
$|x-\xi|=O(|d(x)||y_{102}(x)|)$ for small $d(x)$ and bounded $|y_{102}(x)|$.
\end{enumerate}
In other words, the solution has a pole at $x=\xi$.

For large $R_9>0$, consider the set
$\{x\in\mathbb{C} \mid |y_{102}|\le R_9 \}$.
Then, its connected component containing $\xi$ is an approximate disk $D_9$ with centre $\xi$ and radius $|d(\xi)|R_9$,
and $x\mapsto y_{102}(x)$ is a complex analytic diffeomorphism from that approximate disk onto
$\{ y\in\mathbb{C} \mid |y|\le R_9\}$.
\end{lemma}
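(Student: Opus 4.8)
The plan is to mirror the proof of Lemma~\ref{lemma:L3-L1}, since the geometry near $\mathcal{L}_9^p\setminus\mathcal{L}_8^p$ is the exact analogue of the geometry near $\mathcal{L}_3^p\setminus\mathcal{L}_1^p$, with the roles of $\mathcal{L}_3^p$, $\mathcal{L}_1^p$, $\mathcal{L}_6$ taken by $\mathcal{L}_9^p$, $\mathcal{L}_8^p$, $\mathcal{L}_{10}$, and the coordinate $z_{61}$ replaced by $y_{102}$. First I would fix the chart $(y_{102},z_{102})$ from Section~\ref{sec:b10-blow}, in which $\mathcal{L}_9^p\setminus\mathcal{L}_8^p$ is the line $z_{102}=0$ parametrized by $y_{102}$, and $\mathcal{L}_{10}(x)$ is the line $y_{102}=0$ parametrized by $z_{102}$; one also records the relation between $(y_{102},z_{102})$ and the original variables, in particular that $z$ has a double pole and $y$ a simple zero where the flow crosses $\mathcal{L}_{10}$ (this was already derived in Section~\ref{sec:movable}).

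Next I would write down the asymptotic form of the vector field in this chart as $z_{102}\to 0$ with $y_{102}$, $1/x$ bounded, producing analogues of \eqref{eq:z61'}--\eqref{eq:EJ61}: a leading equation of the form $y_{102}'\sim -c/z_{102}$ for a nonzero constant $c$ (the analogue of $z_{61}'\sim-\sqrt{\gamma}/y_{61}$), an algebraic identity $J_{102}=(\text{unit})\cdot z_{102}$ or similar, a logarithmic-derivative estimate $J_{102}'/J_{102}\sim(\text{linear in }y_{102})+\frac1x(\text{const})$, and the estimate $EJ_{102}\sim 2\delta_1-\frac{\beta_1+4\delta_1}{8\delta_1^2\,x\,y_{102}}$ already quoted in Lemma~\ref{lemma:d}, which shows $E$ has a simple pole exactly where $y_{102}=0$. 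Integrating the $J_{102}$-equation along a path from $\xi$ to $x$ gives $J_{102}(x)=J_{102}(\xi)e^{K(x-\xi)}(x/\xi)^{\kappa}(1+o(1))$, so $z_{102}$ (which is $\sim$ a unit times $J_{102}$) is approximately a small constant; feeding this back into $y_{102}'\sim -c/z_{102}$ gives $y_{102}(x)\sim y_{102}(\xi)-\frac{c}{z_{102}(\xi)}(x-\xi)$, a near-linear function of $x$. Hence $x\mapsto y_{102}(x)$ is an approximate affine diffeomorphism: over an approximate disk $D$ about $\xi$ of radius $|c^{-1}z_{102}(\xi)|R$, the image $y_{102}(D)$ is an approximate disk of radius $R$ about $y_{102}(\xi)$. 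For $R$ large this image contains $0$, so there is a unique $\xi'\in D$ with $y_{102}(\xi')=0$, i.e.\ the Painlev\'e solution has a pole there; relabel $\xi:=\xi'$. With $d=J_{102}/(2\delta_1)$ near $\mathcal{L}_9^p\setminus\mathcal{L}_8^p$ by Lemma~\ref{lemma:d}, one checks $d(x)/d(\xi)\sim 1$ for $|x-\xi|\ll|\xi|$ and converts the linear estimate into $y_{102}(x)\sim -(x-\xi)/d(\xi)$, giving the claimed bound $|x-\xi|=O(|d(x)||y_{102}(x)|)$. Finally, for large $R_9$ the locus $|y_{102}(x)|\le R_9$ near $\xi$ is, by the diffeomorphism just established, an approximate disk $D_9$ of radius $|d(\xi)|R_9$ centered at $\xi$, and $y_{102}$ maps it biholomorphically onto $\{|y|\le R_9\}$; since $|d(\xi)|R_9\ll|\xi|$ one also has $d(x)/d(\xi)\sim 1$ throughout $D_9$, which closes the proof.

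The only step requiring genuine care, rather than transcription, is verifying that the leading-order coefficients in the $(y_{102},z_{102})$ vector field are exactly as needed: that $y_{102}'$ really is $\sim(\text{nonzero const})/z_{102}$ with no competing term of the same order (so that $y_{102}$ is to leading order linear in $x$), that $J_{102}$ is a unit times $z_{102}$ so that smallness of $z_{102}$ is equivalent to smallness of $d$, and that the $1/x$ terms in $J_{102}'/J_{102}$ and in $EJ_{102}$ are subdominant on the relevant scales $|x-\xi|\sim|d(\xi)|R_9 \gg |d(\xi)|/|\xi|$. These are finite computations in the resolved charts of Appendix~\ref{sec:p3resolution} and Section~\ref{sec:b10-blow}, wholly parallel to those carried out in Lemma~\ref{lemma:L3-L1}; I expect no conceptual obstacle, only bookkeeping, and the statement then follows verbatim by the same disk-filling argument.
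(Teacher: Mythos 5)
Your proposal is correct and follows essentially the same route as the paper's proof: the same chart $(y_{102},z_{102})$, the asymptotic relations $y_{102}'\sim-\delta_1/z_{102}$, $J_{102}\sim z_{102}/(16\delta_1^2)$, $J_{102}'/J_{102}\sim\frac1x(4+\beta_1/\delta_1)$ and the $EJ_{102}$ estimate, followed by integrating the logarithmic derivative to show $z_{102}$ is nearly constant, deducing the near-linearity of $y_{102}(x)$, and running the disk-filling argument to locate the unique pole and identify $D_9$. The only cosmetic differences are that the exponential factor $e^{K(x-\xi)}$ you carry over from Lemma~\ref{lemma:L3-L1} is in fact absent here (there is no $y_{102}$-linear term in $J_{102}'/J_{102}$), and the constant $32\delta_1^2$ in $y_{102}(x)\sim-(x-\xi)/(32\delta_1^2 d(\xi))$ is absorbed into the $O(\cdot)$ and ``approximate radius'' statements, so neither affects the conclusion.
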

\begin{proof}
For the study of solutions near $\mathcal{L}_9^p\setminus\mathcal{L}_8^p$, we use the coordinates $(y_{102},z_{102})$.
In this chart, the line $\mathcal{L}_9^p\setminus\mathcal{L}_8^p$ is given by the equation $z_{102}=0$ and parametrized by $y_{102}\in\mathbb{C}$, while line $\mathcal{L}_{10}^p$ is given by $y_{102}=0$ and paremetrised by $z_{102}$, see Section \ref{sec:b10-blow}.

Asymptotically, for $z_{102}\to0$ and bounded $y_{102}$, $1/x$, we have:
\begin{subequations}
\begin{align}
y_{102}' &\sim -\frac{\delta_1}{z_{102}}
,\label{eq:y102'}
\\
J_{102} &\sim \frac{z_{102}}{16\delta_1 ^2}
,\label{eq:J102}
\\
\frac{J_{102}'}{J_{102}} &\sim \frac1x\left(4+\frac{\beta_1}{\delta_1}\right),\label{eq:J102'}
\\
EJ_{102} & \sim \delta_1 -\frac{\beta_1+4 \delta_1}{8 \delta_1^2 x y_{102}} .\label{eq:EJ102}
\end{align}
\end{subequations}
Integrating (\ref{eq:J102'}) from $\xi$ to $x$, we get:
\begin{gather*}
J_{102}(x)=J_{102}(\xi)\left(\frac{x}{\xi}\right)^{4+\beta_1/\delta_1 }(1+o(1)).
\end{gather*}
Because of Equation (\ref{eq:J102}), $z_{102}$ is approximately equal
to a small constant, and from (\ref{eq:y102'}) it follows that
$$
y_{102}\sim y_{102}(\xi)-\delta_1\frac{x-\xi}{z_{102}(\xi)}.
$$
Thus, if $x$ runs over an approximate disk $D$ centered at $\xi$ with radius $|z_{102}|R/|\delta_1|$, then $y_{102}$ fills an approximate disk centered at $y_{102}(\xi)$ with radius $R$.
Therefore, if $|z_{102}(\xi)|\ll|\xi|$, the solution has the following properties for $x\in D$:
$$
\frac{z_{102}(x)}{y_{102}(x)}\sim1,
$$
and $y_{102}$ is a complex analytic diffeomorphism from $D$ onto an approximate disk with centre $y_{102}(\xi)$ and radius $R$.
If $R$ is sufficiently large, we will have $0\in y_{102}(D)$; i.e.,  the solution of the Painlev\'e equation will have a pole at a unique point in $D$.

Now, it is possible to take $\xi$ to be the pole point.
For $|x-\xi| \ll |\xi|$, we have
$$
\frac{d(x)}{d(\xi)}\sim1,
$$
that is:
$$
\frac{J_{102}(z)}{2\delta_1 d(\xi)}\sim\frac{z_{102}(x)}{32\delta_1^3d(\xi)}\sim1,
$$
and
$$
y_{102}(x)\sim-\delta_1\frac{x-\xi}{z_{102}(\xi)}\sim-\frac{x-\xi}{32\delta_1^2 d(\xi)}.
$$

Let $R_9$ be a large positive real number.
Then the equation $|y_{102}(x)|=R_9$ corresponds to $|x-\xi|\sim32|\delta_1^2d(\xi)|R_9$, 
which is still small compared to $|\xi|$ if $|d(\xi)|$ is sufficiently small.
Denote by $D_9$ the connected component of the set of all $x\in\mathbb{C}$ such that
$\{x\mid |y_{102}(x)|\le R_9\}$ is an approximate disk with centre $\xi$ and radius $32|\delta_1^2d(\xi)|R_9$.
More precisely, $y_{102}$ is a complex analytic diffeomorphism from $D_9$ onto 
$\{y\in\mathbb{C} \mid |y|\le R_9\}$, and
$$
\frac{d(x)}{d(\xi)}\sim1
\quad
\text{for all}
\quad
x\in D_9.
$$
The function $E(x)$ has a simple pole at $x=\xi$.
From (\ref{eq:EJ102}), we have
$$
E(x)J_{102}(x)\sim\delta_1
\quad
\text{when}
\quad
1\gg
\frac{1}{|xy_{102}(x)|}
\sim
\left| \frac{z_{102}(\xi)}{\delta_1  \xi(x-\xi)}  \right|
\sim
\frac{32|\delta_1^2 d(\xi)|}{|\xi(x-\xi)|},
$$
that is, when $|x-\xi|\gg\frac{|d(\xi)|}{|\xi|}$.

Since $R_9\gg 1/|\xi|$, the approximate radius of $D_9$ is given by
$$
|d(\xi)|R_9\gg \frac{|d(\xi)|}{|\xi|}.
$$
Thus $E(x)J_{102}(x)\sim\delta_1$ for $x\in D_9\setminus D_9'$, where $D_9'$ is a disk centered at $\xi$ with small radius compared to the radius of $D_9$.
\end{proof}

\begin{lemma}[Behaviour near $\mathcal{L}_8^p\setminus(\mathcal{L}_2^p\cup\mathcal{L}_5^p)$]                 
\label{lemma:L8-L9}
For large finite $R_8>0$, consider the set of all $x\in\mathbb{C}$ such that the solution at complex time $x$ is close to 
$\mathcal{L}_8^p\setminus\mathcal{L}_2^p$, with $|y_{92}(x)|\le R_8$,
but not close to $\mathcal{L}_9^p$.
Then that set is the complement of $D_9$ in an approximate disk $D_8$ with centre at $\xi$ and radius
$\sim4\sqrt{|\delta_1d(\xi)|R_8}$.
More precisely, $x\mapsto y_{92}(x)$ defines a $2$-fold covering from the annular domain $D_8\setminus D_9$ onto the complement in
$\{
x\in\mathbb{C}\mid|x|\le R_8
\}$
of an approximate disk with centre at the origin and small radius $\sim16|\delta_1 d(\xi)| R_9^2$,
where $y_{92}(x)\sim(x-\xi)^2/(16\delta_1d(\xi))$.
\end{lemma}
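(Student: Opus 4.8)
The plan is to follow the proof of Lemma~\ref{lemma:L1-L2} essentially verbatim, with the triple $(\mathcal{L}_8^p,\mathcal{L}_9^p,\mathcal{L}_5^p)$ in the role of $(\mathcal{L}_1^p,\mathcal{L}_3^p,\mathcal{L}_4^p)$: both triples sit symmetrically about the central node $\mathcal{L}_2^p$ of the $D_6^{(1)}$ diagram (Figure~\ref{fig:dynkin}), and the disk $D_9$ of Lemma~\ref{lemma:L9-L8} replaces the disk $D_3$ of Lemma~\ref{lemma:L3-L1}. I would work in the chart $(y_{92},z_{92})$ coming from the blow-up at $b_9$ (Appendix~\ref{sec:p3resolution}), in which $\mathcal{L}_8^p\setminus\mathcal{L}_2^p$ is the line $z_{92}=0$ parametrised by $y_{92}$, the line $\mathcal{L}_9^p$ is $y_{92}=0$, and $\mathcal{L}_5^p$ is projected onto the single point $(y_{92},z_{92})=(c_5,0)$ for an explicit $c_5\neq0$ (while $\mathcal{L}_2^p$ lies at $y_{92}=\infty$). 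The first step is to record, in this chart and as $z_{92}\to0$ with $1/x$ bounded and $y_{92}$ bounded and bounded away from $c_5$, the analogues of \eqref{eq:y31'}--\eqref{eq:E'31}: a bounded estimate for $z_{92}'$; an estimate for $y_{92}'$ of the form $(\text{affine in }y_{92})/z_{92}$, which blows up as $z_{92}\to0$; the identity $J_{92}=(\text{const})\,z_{92}^2 y_{92}$; the product estimate $EJ_{92}\sim(\text{const})+(\text{const})\,y_{92}$; and an estimate $E'/E\sim -2/x$ plus terms proportional to $1/(xy_{92})$, $1/(x(y_{92}-c_5))$ and $1/(xz_{92})$. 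Each of these follows by inserting \eqref{eq:p3-system} into the change-of-chart formulae, just as in Section~\ref{sec:movable} and Lemmas~\ref{lemma:L3-L1}--\ref{lemma:L1-L2}.

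Next I would control $E$ along the flow. Rewriting the $1/(xz_{92})$ term in $E'/E$ with help of the $z_{92}'$ estimate and integrating from $x_0$ to $x$ gives $E(x)/E(x_0)\sim1$ provided that on the segment from $x_0$ to $x$ one has $|x-x_0|\ll|x_0|$, $|z_{92}|$ small, and $1/(y_{92}-c_5)$ bounded --- exactly the hypotheses arising in Lemma~\ref{lemma:L1-L2}. I would then take $x_0$ on the boundary of the disk $D_9$ of Lemma~\ref{lemma:L9-L8}, where $|y_{102}(x_0)|=R_9$ and $d(\xi)/d(x_0)\sim1$; transporting this boundary data into the $(y_{92},z_{92})$ chart by the coordinate change relating it to $(y_{102},z_{102})$, and combining with $E\sim d(\xi)^{-1}$ and the identity for $J_{92}$, yields the initial values $|z_{92}(x_0)|\sim R_9^{-1}$ and $|y_{92}(x_0)|\sim 16|\delta_1 d(\xi)|R_9^2$, both small.

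The heart of the argument is the integration of the $y_{92}$-equation. Using $E\sim d(\xi)^{-1}$ together with the estimates for $EJ_{92}$ and $J_{92}$ to solve for $z_{92}$ as a function of $y_{92}$ (so $z_{92}\sim(\text{const})\sqrt{d(\xi)/y_{92}}$ for small $y_{92}$), and substituting into the equation for $y_{92}'$, I obtain a separable equation of the shape $y_{92}'/(y_{92}^{1/2}(y_{92}-c_5))\sim(\text{const})\,d(\xi)^{-1/2}$, which integrates --- through an arctangent, exactly as for $z_{31}$ in Lemma~\ref{lemma:L1-L2} --- to
$$
y_{92}(x)\sim\frac{(x-\xi)^2}{16\,\delta_1\,d(\xi)},
$$
valid once $|x-\xi|$ exceeds the (much smaller) radius of $D_9$. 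Hence $|y_{92}(x)|=R_8$ corresponds to $|x-\xi|\sim 4\sqrt{|\delta_1 d(\xi)|R_8}$, which stays $\ll|\xi|$ for $|d(\xi)|$ small enough; this defines the approximate disk $D_8$, places $\partial D_8$ inside the region where all the above estimates hold, and --- since $y_{92}$ is, to leading order, a quadratic function of $x-\xi$ whose single critical point lies inside $D_9$ --- shows that $x\mapsto y_{92}(x)$ restricts to a $2$-fold branched cover of the annulus $16|\delta_1 d(\xi)|R_9^2\le|y_{92}|\le R_8$ by the annular region $D_8\setminus D_9$, with no branch point in $D_8\setminus D_9$. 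This is the claim of the lemma.

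The step I expect to be the main obstacle is uniformity across the very different scales of $D_8$ and $D_9$: the radius of $D_8$ scales like $\sqrt{R_8}$ while that of $D_9$ scales like $R_9$ (in units of $\sqrt{|\delta_1 d(\xi)|}$), so --- granting $|d(\xi)|$ small enough relative to $R_8/R_9^2$ to ensure $D_9\subset D_8$ --- one has to check that all the error terms swallowed by the ``$\sim$'' symbols, most delicately in the integrations of $E'/E$ and of the $y_{92}$-equation, remain negligible over all of $D_8$, that the solution never drifts near $\mathcal{L}_5^p$ (i.e.\ $y_{92}$ stays away from $c_5$) nor near $\mathcal{L}_2^p$ before $|y_{92}|$ reaches $R_8$, and that the connected-component bookkeeping survives the non-injectivity of $x\mapsto y_{92}(x)$. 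Treating this two-sheeted structure carefully --- identifying the two preimages of a point of the image annulus as lying on opposite ``sides'' of $D_9$ along the flow --- is precisely what promotes the bare asymptotic identity for $y_{92}$ to the covering statement asserted.
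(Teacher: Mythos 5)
Your proposal follows essentially the same route as the paper's proof: work in the $(y_{92},z_{92})$ chart, establish the asymptotic estimates for $y_{92}'$, $z_{92}'$, $J_{92}$, $EJ_{92}$ and $E'/E$, integrate $E'/E$ after eliminating the singular term via the $z_{92}'$ estimate, seed the integration at $x_0\in\partial D_9$ using Lemma \ref{lemma:L9-L8}, and derive $y_{92}\sim(x-\xi)^2/(16\delta_1 d(\xi))$ from the separated equation $y_{92}'/y_{92}^{1/2}\sim\mathrm{const}\cdot d(\xi)^{-1/2}$. The only slips are bookkeeping ones that do not affect the argument: in this chart it is $\mathcal{L}_2^p$ that sits at the finite point $y_{92}=1/(4\delta_1)$ and $\mathcal{L}_5^p$ that is invisible (at infinity), not the reverse, and the actual $y_{92}$-equation carries no affine factor, so it integrates directly to a square root rather than through an arctangent.
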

\begin{proof}
The line $\mathcal{L}_8^p\setminus\mathcal{L}_5^p$ is visible in the coordinate system $(y_{92},z_{92})$, where it is given by $z_{92}=0$ and parametrized by $y_{92}$, see Section \ref{sec:b9-blow}.
In that chart, the line $\mathcal{L}_9^p$ is given by the equation $y_{92}=0$ and parametrized by $z_{92}$.
On the other hand, the line $\mathcal{L}_2^p$ is given by $y_{92}=1/(4\delta_1)$.
For $z_{92}\to0$, $y_{92}$ bounded and bounded away from $1/(4\delta_1)$, and bounded $1/x$, we have:
\begin{subequations}
\begin{align}
y_{92}' &\sim -\frac{2\delta_1}{z_{92}}
,\label{eq:y92'}
\\
z_{92}' &\sim \frac{\delta_1}{y_{92}}+\frac{8\delta_1^2}{4 \delta_1 y_{92}-1}
,\label{eq:z92'}
\\
J_{92} &= \frac{y_{92} z_{92}^2 (1-4 \delta_1 y_{92})^2}{16 \delta_1^2}
,\label{eq:J92}
\\
EJ_{92} &\sim \delta_1,\label{eq:EJ92}
\\
\frac{E'}{E} & \sim \frac{2\beta_1 y_{92}}{x}
+
\frac{\beta_1 - 4 \delta_1}{8\delta_1^2 x y_{92}}
.\label{eq:E'/E92}
\end{align}
\end{subequations}

From (\ref{eq:E'/E92}) and (\ref{eq:z92'}) we get:
$$
\frac{E'}{E}\sim \frac{2\beta_1 y_{92}}{x}
+
\frac{\beta_1-4\delta_1}{8\delta_1^3x}z_{92}'
-
\frac{\beta_1-4\delta_1}{\delta_1x(4 \delta_1 y_{92}-1)}.
$$
Integrating from $x_0$ to $x_1$, we obtain:
$$
\begin{aligned}
\log\frac{E(x_1)}{E(x_0)}
\sim&
2\beta_1\int_{x_0}^{x_1}\frac{y_{92}}{x}dz
+
\frac{\beta_1-4\delta_1}{8\delta_1^3}
\left(
\frac{z_{92}(x_1)}{x_1}-\frac{z_{92}(x_0)}{x_0}+\int_{x_0}^{x_1}\frac{z_{92}(x)}{x^2}dx
\right)
\\&
-
\frac{\beta_1-4\delta_1}{\delta_1}
\left(
\int_{x_0}^{x_1}\frac{dx}{x(4 \delta_1 y_{92}-1)}
\right).
\end{aligned}
$$
Therefore $E(x_1)/E(x_0)\sim1$ if for all $x$ on the integration path we have $|x-x_0|\ll |x_0|$ and
$| y_{92}(x)|\ll |x_0|$,
$|z_{92}(x)|\ll |x_0|$, 
$1/| 4 \delta_1 y_{92}(\tilde x)-1|\ll |x_0|$.

We choose $x_0$ on the boundary of $D_9$ from Lemma \ref{lemma:L9-L8}.
Then we have
$$
\frac{d(\xi)}{d(x_0)}\sim \frac{J_{102}(\xi)}{2\delta_1}E(x_0)\sim1
\quad
\text{and}
\quad
|y_{102}(x_0)|=R_9.
$$
Since
$$
z_{92}=\frac{1}{y_{91}}=\frac{1}{y_{102}-(\beta_1+4\delta_1)/(8x\delta_1^3)},
$$
we are led to conclude that
$$
|z_{92}|\sim\frac1{R_9}\ll1.
$$
Furthermore, equations (\ref{eq:J92}) and (\ref{eq:EJ92}) imply that:
$$
|y_{92}(1-4\delta_1 y_{92})^2|
=
\frac{16|\delta_1^2 J_{92}|}{|z_{92}^2|}
\sim
16|\delta_1^3 d(\xi)| R_9^2,
$$
which is small when $|d(\xi)|$ is sufficiently small.

Since $D_9$ is an approximate disk with centre $\xi$ and small radius $\sim |d(\xi)|R_9$, and $R_9\gg |\xi|^{-1}$, we have that
$|y_{102}(x)|\ge R_9\gg1$. It follows that 
$$
|y_{92}|\ll1
\quad
\text{if}
\quad
x=\xi+r(x_0-\xi),
\ r\ge 1,
$$
and
$$
\frac{|x-x_0|}{|x_0|}
=
(r-1)\left|1-\frac{\xi}{x_0}\right|\ll1
\quad
\text{if}
\quad
r-1\ll\frac{1}{|1-\frac{\xi}{x_0}|}.
$$

Then equation (\ref{eq:J92}) and the result $J_{92}\sim\delta_1
d(\xi)$ (from Equation \eqref{eq:J92}) yield
$$
z_{92}^{-1}
\sim
\left( \frac{y_{92}(1-4\delta_1 y_{92})^2}{16\delta_1^3 d(\xi)}\right)^{1/2}
\sim
\left(\frac{y_{92}}{16\delta_1^3 d(\xi)}\right)^{1/2},
$$
which in combination with (\ref{eq:y92'}) leads to
$$
\frac{y_{92}'}{y_{92}^{1/2}}\sim-\frac1{2(\delta_1d(\xi))^{1/2}}.
$$
In other words, we have
$$
(y_{92}^{1/2})'\sim-\frac{1}{4\delta^{1/2}d(\xi)^{1/2}},
$$
from which we find after integration
$$
y_{92}^{1/2}\sim y_{92}(x_0)^{1/2}-\frac{x-x_0}{4\delta_1^{1/2}d(\xi)^{1/2}}.
$$
Finally, we conclude that
$$
y_{92}(x)\sim\frac{(x-x_0)^2}{16\delta_1d(\xi)}
\quad
\text{if}
\quad
|x-x_0|\gg|y_{92}(x_0)^{1/2}|.
$$
For large finite $R_8$, the equation $|y_{92}|=R_8$ corresponds to $|x-x_0|\sim4\sqrt{|\delta_1d(\xi)|R_8}$,
which is still small compared to $|x_0|\sim|\xi|$, and therefore
$|x-\xi|\le|x-x_0|+|x_0-\xi|\ll|\xi|$.
This proves the lemma.
\end{proof}

\begin{lemma}[Behaviour near $\mathcal{L}_5^p\setminus\mathcal{L}_8^p$]
\label{lemma:L5-L8}
If a solution at the complex time $x$ is sufficiently close to $\mathcal{L}_5^p\setminus\mathcal{L}_8^p$, then there exists a unique $\xi\in\mathbb{C}$ such that:
\begin{enumerate}
\item
$Z_{111}(\xi)=0$, i.e., $(Y_{111}(\xi),Z_{111}(\xi))\in\mathcal{L}_{11}(\xi)^p$; and
\item
$|x-\xi|=O(|d(x)||Z_{111}(x)|)$ for small $d(x)$ and bounded $|Y_{111}(x)|$.
\end{enumerate}
In other words, the solution has a pole at $x=\xi$.

For large $R_{5}>0$, consider the set
$\{x\in\mathbb{C} \mid |Z_{111}|\le R_{5} \}$.
Then, its connected component containing $\xi$ is an approximate disk $D_5$ with centre $\xi$ and radius $|d(\xi)|R_5$,
and $x\mapsto Z_{111}(x)$ is a complex analytic diffeomorphism from that approximate disk onto
$\{ z\in\mathbb{C} \mid |z|\le R_5\}$.
\end{lemma}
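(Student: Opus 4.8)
The plan is to mimic, almost line for line, the proofs of Lemma \ref{lemma:L3-L1} and Lemma \ref{lemma:L9-L8}: the pair $(\mathcal{L}_5^p,\mathcal{L}_{11})$ plays here exactly the role that $(\mathcal{L}_3^p,\mathcal{L}_6)$ plays in the first and $(\mathcal{L}_9^p,\mathcal{L}_{10})$ in the second. A solution close to $\mathcal{L}_5^p\setminus\mathcal{L}_8^p$ is swept transversally along $\mathcal{L}_5^p$ until it meets the $(-1)$-curve $\mathcal{L}_{11}$, where, by the analysis in Section \ref{sec:movable}, $y$ acquires a simple zero and $z$ a simple pole, i.e.\ the Painlev\'e solution has a pole.

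First I would work in the chart $(Y_{111},Z_{111})$ obtained by blowing down the preimage of $\mathcal{L}_0$ in $(y_{111},z_{111})$, see Section \ref{sec:blowdownL0}, in which $\mathcal{L}_5^p\setminus\mathcal{L}_8^p$ is $\{Y_{111}=0\}$ parametrised by $Z_{111}$ and $\mathcal{L}_{11}^p$ is $\{Z_{111}=0\}$ parametrised by $Y_{111}$. The key step is to establish, for $Y_{111}\to0$ and bounded $Z_{111}$, $1/x$, the four asymptotic relations that play here the roles of \eqref{eq:z61'}--\eqref{eq:EJ61}: a transversal blow-up relation of the form $Z_{111}'\sim c_0/Y_{111}$; an estimate $J_{111}\sim(\const)\,Y_{111}$; a relation $J_{111}'/J_{111}\sim(\const)\,Z_{111}+\frac1x\,(\const)$; and the estimate $EJ_{111}\sim\delta_1-\frac{2\beta_1}{xZ_{111}}$ already recorded in Lemma \ref{lemma:d}. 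With these in hand the rest is routine. Integrating the third relation from $\xi$ to $x$ gives $J_{111}(x)=J_{111}(\xi)\,e^{K(x-\xi)}(x/\xi)^{(\const)}(1+o(1))$ with $K$ a constant along the integration path, so $Y_{111}$ stays approximately a small constant; then $Z_{111}'\sim c_0/Y_{111}$ yields $Z_{111}(x)\sim Z_{111}(\xi)+c_0(x-\xi)/Y_{111}(\xi)$, so $Z_{111}$ is a complex analytic diffeomorphism from an approximate disk about $\xi$ onto an approximate disk about $Z_{111}(\xi)$ whose radius can be made arbitrarily large, and for large radius that disk contains $0$. By the description of $\mathcal{L}_{11}$ this means the solution has a pole at a unique point, which we then rename $\xi$; re-centring there and using $d=J_{111}/\delta_1$ with $d(x)/d(\xi)\sim1$ for $|x-\xi|\ll|\xi|$ gives $Z_{111}(x)\sim -(x-\xi)/(c\,d(\xi))$ for an explicit nonzero constant $c$, which is precisely (1) and (2). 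Finally $|Z_{111}(x)|=R_5$ corresponds to $|x-\xi|\sim|c|\,|d(\xi)|R_5\ll|\xi|$ when $|d(\xi)|$ is small, so the connected component $D_5$ of $\{x:|Z_{111}(x)|\le R_5\}$ containing $\xi$ is the claimed approximate disk, $Z_{111}$ maps it diffeomorphically onto $\{z:|z|\le R_5\}$, and from $EJ_{111}\sim\delta_1-2\beta_1/(xZ_{111})$ together with $J_{111}\sim\delta_1\,d(\xi)$ one reads off that $E$ has a simple pole at $\xi$ and that $EJ_{111}\sim\delta_1$ on $D_5\setminus D_5'$ for a disk $D_5'$ about $\xi$ of negligible radius (using $R_5\gg1/|\xi|$), just as in Lemma \ref{lemma:L3-L1}.

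The hard part will be the first step. The chart $(Y_{111},Z_{111})$ is reached only after the five blow-ups that create $\mathcal{L}_{11}$ followed by a blow-down, so writing the vector field there and extracting the right leading behaviour --- in particular the constants $c_0$, $c$ and the exponent in the power of $x/\xi$ --- requires a careful computation; and, as in the companion lemmas, one must check that ``sufficiently close to $\mathcal{L}_5^p\setminus\mathcal{L}_8^p$'' genuinely forces $Y_{111}$ small with $Z_{111}$, $1/x$ bounded, so that the asymptotic regime applies and the error terms are uniform. Everything after that is the same bookkeeping as in the cases already treated.
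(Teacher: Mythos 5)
Your proposal follows essentially the same route as the paper's proof: the paper works in the $(Y_{111},Z_{111})$ chart, records exactly the four asymptotic relations you anticipate (with $Z_{111}'\sim\delta_1/Y_{111}$, $J_{111}=Y_{111}$, $J_{111}'/J_{111}\sim Z_{111}/2-\beta_1/(\delta_1x)$, and $EJ_{111}\sim\delta_1-2\beta_1/(xZ_{111})$, all read off from the explicit vector field computed in Section \ref{sec:blowdownL0}), and then runs the same integration and re-centring argument you describe, ending with $Z_{111}(x)\sim(x-\xi)/d(\xi)$ and the diffeomorphism onto the disk of radius $R_5$. The "hard part" you flag is already done in the appendix, so nothing further is needed.
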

\begin{proof}
For the study of solutions near $\mathcal{L}_5^p\setminus\mathcal{L}_8^p$, we use the coordinates $(Y_{111},Z_{111})$.
In this chart, the line $\mathcal{L}_5^p\setminus\mathcal{L}_8^p$ is given by the equation $Y_{111}=0$ and parametrized by $Z_{111}\in\mathbb{C}$, while line $\mathcal{L}_{11}^p$ is given by $Z_{111}=0$ and parametrized by $Y_{111}$, see Section \ref{sec:blowdownL0}.

Asymptotically, for $Y_{111}\to0$ and bounded $Z_{111}$, $1/x$, we have:
\begin{subequations}
\begin{align}
Z_{111}' &\sim \frac{\delta_1}{Y_{111}}
,\label{eq:Z111'}
\\
J_{111} &=Y_{111}
,\label{eq:J111}
\\
\frac{J_{111}'}{J_{111}} &\sim 
\frac{Z_{111}}{2}-\frac{\beta_1}{\delta_1 x},
\label{eq:J111'}
\\
EJ_{111} & \sim \delta_1-\frac{2 \beta_1}{x Z_{111}}
.\label{eq:EJ111}
\end{align}
\end{subequations}
Integrating Equation (\ref{eq:J111'}) from $\xi$ to $x$, we get:
\begin{gather*}
\frac{J_{111}(x)}{J_{111}(\xi)}\sim e^{K(x-\xi)}\left(\frac{x}{\xi}\right)^{ -3 \beta_1/\delta_1 },
\end{gather*}
with $K=Z_{111}(\tilde\xi)/2$, where $\tilde\xi$ lies on the integration path.

Because of Equation (\ref{eq:J111}), $Y_{111}$ is approximately equal to a small constant, and from (\ref{eq:Z111'}) follows that
$$
Z_{111}\sim Z_{111}(\xi)+\delta_1\frac{x-\xi}{Y_{111}(\xi)}.
$$
Thus, if $x$ runs over an approximate disk $D$ centered at $\xi$ with radius $|Y_{111}|R/\delta_1$, then $Z_{111}$ fills an approximate disk centered at $Z_{111}(\xi)$ with radius $R$.
Therefore, if $|Y_{111}|\ll1/|\xi|$, for $x\in D$, the solution satisfies
$$
\frac{Y_{111}(x)}{Y_{111}(\xi)}\sim1,
$$
and $Z_{111}(x)$ is a complex analytic diffeomorphism from $D$ onto an approximate disk with centre $Z_{111}(\xi)$ and radius $R$.
If $R$ is sufficiently large, we will have $0\in Z_{111}(D)$; i.e., the solution of the Painlev\'e equation will have a pole at a point in $D$.

Now, it is possible to take $\xi$ to be the pole point.
For $|x-\xi|\ll|\xi|$, we have
$$
\frac{d(x)}{d(\xi)}\sim1
\quad\text{that is,}\quad
\frac{Y_{111}(x)}{\delta_1 d(\xi)}=\frac{J_{111}(x)}{\delta_1 d(\xi)}\sim1,
$$
and
$$
Z_{111}(x)\sim\delta_1\frac{x-\xi}{Y_{111}}
\sim
\frac{x-\xi}{d(\xi)}.
$$
Let $R_5$ be a large positive number.
Then the equation $|Z_{111}|=R_5$ corresponds to $|x-\xi|\sim|d(\xi)|R_5$, which is still small compared to $\xi$ if $d(\xi)$ is sufficiently small.
Denote by $D_5$ the connected component of the set of all $x\in\mathbb{C}$ such that
$\{ x \mid |Z_{111}|\le R_5 \}$ is an approximate disk with centre $\xi$ and radius $d(\xi)R_5$.

More precisely, $Z_{111}$ is a complex analytic diffeomorphism from $D_5$ onto $\{ Z\in\mathbb{C}\mid|Z|\le R_5\}$,
and
$$
\frac{d(x)}{d(\xi)}\sim1
\quad\text{for all}\quad
x\in D_4.
$$
Because of the expression for $E(x)$ in terms of $Y_{111}$, $Z_{111}$,
it has a simple pole at $x=\xi$.
From (\ref{eq:EJ111}), we have
$$
E(x)J_{111}(x)\sim\delta_1
\quad\text{when}\quad
1\gg\frac1{|xZ_{111}(x)|}
\sim
\left|
\frac{Y_{111}(\xi)}{\delta_1\xi(x-\xi)}
\right|
\sim
\left|
\frac{d(\xi)}{\xi(x-\xi)}
\right|,
$$
that is, when
$$
|x-\xi|\gg\frac{|d(\xi)|}{|\xi|}.
$$
Thus $E(x)J_{111}(x)\sim\delta_1$ for the annular disk $x\in D_4\setminus D_4'$, 
where $D_4'$ is a disk centered at $\xi$ with small radius compared to the radius of $D_4$.
\end{proof}

\begin{lemma}[Behaviour near $\mathcal{L}_8^p\setminus(\mathcal{L}_2^p\cup\mathcal{L}_9^p)$]                 
\label{lemma:L8-L2}
For large finite $R_8>0$, consider the set of all $x\in\mathbb{C}$ such that the solution at complex time $x$ is close to 
$\mathcal{L}_8^p\setminus\mathcal{L}_2^p$, with $|y_{92}(x)|\le R_8$,
but not close to $\mathcal{L}_5^p$.
Then that set is the complement of $D_5$ in an approximate disk $D_8$ with centre at $\xi$ and radius
$\sim4\sqrt{|\delta_1d(\xi)|R_8}$.
More precisely, $x\mapsto y_{92}(x)$ defines a $2$-fold covering from the annular domain $D_8\setminus D_5$ onto the complement in
$\{
x\in\mathbb{C}\mid|x|\le R_8
\}$
of an approximate disk with centre at the origin and small radius $\sim16|\delta_1 d(\xi)| R_5^2$,
where $y_{92}(x)\sim(x-\xi)^2/(16\delta_1d(\xi))$.
\end{lemma}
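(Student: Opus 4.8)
The statement is the mirror image of Lemma~\ref{lemma:L8-L9}, with the line $\mathcal{L}_9^p$ and the disk $D_9$ (produced by Lemma~\ref{lemma:L9-L8}) replaced throughout by $\mathcal{L}_5^p$ and the disk $D_5$ (produced by Lemma~\ref{lemma:L5-L8}), and $R_9$ replaced by $R_5$. So the plan is to transcribe that proof almost line for line. We again work in the chart $(y_{92},z_{92})$ around $\mathcal{L}_8^p$, in which $\mathcal{L}_8^p$ is the axis $z_{92}=0$ and $\mathcal{L}_2^p$ is the line $y_{92}=1/(4\delta_1)$, and in which the asymptotic relations \eqref{eq:y92'}--\eqref{eq:E'/E92} hold for $z_{92}\to 0$ with $y_{92}$ bounded and bounded away from $1/(4\delta_1)$; now it is the pole associated with $\mathcal{L}_5^p$ that we must excise.

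First I would integrate \eqref{eq:E'/E92}, after using \eqref{eq:z92'} to rewrite the $1/(xy_{92})$ term as a derivative of $z_{92}/x$ together with a term in $(4\delta_1 y_{92}-1)^{-1}$, obtaining exactly as before that $E(x_1)/E(x_0)\sim 1$ along any segment on which $|x-x_0|\ll|x_0|$, $|y_{92}(x)|\ll|x_0|$, $|z_{92}(x)|\ll|x_0|$, and $1/|4\delta_1 y_{92}(x)-1|$ remains bounded.

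Next I would take $x_0$ on the boundary of the disk $D_5$ supplied by Lemma~\ref{lemma:L5-L8}, so that $|Z_{111}(x_0)|=R_5$ and, since $d=J_{111}/\delta_1$ with $J_{111}=Y_{111}$ near $\mathcal{L}_5^p$, one has $d(\xi)/d(x_0)\sim E(x_0)Y_{111}(\xi)/\delta_1\sim 1$, i.e.\ $E(x_0)\sim d(\xi)^{-1}$. The only genuinely new ingredient is the transition formula expressing $(y_{92},z_{92})$ in terms of $(Y_{111},Z_{111})$ — the counterpart of the identity $z_{92}=1/\bigl(y_{102}-(\beta_1+4\delta_1)/(8x\delta_1^3)\bigr)$ used in Lemma~\ref{lemma:L8-L9} — which is read off from the explicit blow-up and blow-down coordinates of Appendix~\ref{sec:p3resolution} (Sections~\ref{sec:b9-blow} and \ref{sec:blowdownL0}). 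Substituting $|Z_{111}(x_0)|=R_5$ should give $|z_{92}(x_0)|\sim 1/R_5$, small, whence \eqref{eq:J92} and \eqref{eq:EJ92} yield $|y_{92}(x_0)(1-4\delta_1 y_{92}(x_0))^2|\sim 16|\delta_1^3 d(\xi)|R_5^2$, again small when $|d(\xi)|$ is small.

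Finally, writing $x=\xi+r(x_0-\xi)$ with $r\ge 1$ — so that $|y_{92}|\ll 1$ and $|x-x_0|\ll|x_0|$ along the ray — I would combine $J_{92}\sim\delta_1 d(\xi)$ (from \eqref{eq:J92}, \eqref{eq:EJ92} and $E\sim d(\xi)^{-1}$) with \eqref{eq:y92'} to get $(y_{92}^{1/2})'\sim -1/(4\delta_1^{1/2}d(\xi)^{1/2})$, integrate to obtain $y_{92}(x)\sim (x-\xi)^2/(16\delta_1 d(\xi))$ once $|x-x_0|\gg|y_{92}(x_0)^{1/2}|$, and then read off that $|y_{92}|=R_8$ corresponds to $|x-\xi|\sim 4\sqrt{|\delta_1 d(\xi)|R_8}$, still $\ll|\xi|$. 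The quadratic relation $y_{92}\sim(x-\xi)^2/(16\delta_1 d(\xi))$ makes $x\mapsto y_{92}(x)$ a $2$-fold covering of the annulus $D_8\setminus D_5$ onto the claimed punctured disk. I expect the only real obstacle to be the gluing step: correctly identifying the transition map between the chart of Lemma~\ref{lemma:L5-L8} and $(y_{92},z_{92})$, and checking that the four smallness hypotheses of the $E$-integration persist uniformly along all rays from $\xi$; granting that, the argument is a verbatim copy of Lemma~\ref{lemma:L8-L9} with $9$ replaced by $5$ throughout.
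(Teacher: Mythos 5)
Your proposal is correct and follows essentially the same route as the paper: the paper's proof likewise works in the $(y_{92},z_{92})$ chart, chooses $x_0$ on $\partial D_5$ with $|Z_{111}(x_0)|=R_5$ and $d(\xi)/d(x_0)\sim J_{111}(\xi)E(x_0)/\delta_1\sim1$, and then declares the rest a verbatim repeat of Lemma \ref{lemma:L8-L9}. The one ingredient you flag as the "real obstacle" is exactly the one the paper supplies explicitly, namely the chained transition identity $Z_{111}=z_{111}=z_{52}+\tfrac{2\beta_1}{x\delta_1}=\cdots=\bigl(y_{92}z_{92}(y_{92}-\tfrac{1}{4\delta_1})\bigr)^{-1}+\tfrac{2\beta_1}{x\delta_1}$ from Sections \ref{sec:blowdownL0}, \ref{sec:b11-blow}, \ref{sec:b5-blow}, \ref{sec:b8-blow}, which yields $|z_{92}|\sim 1/R_5$ as you predicted.
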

\begin{proof}
We consider the coordinate system $(y_{92},z_{92})$, as in the proof of Lemma \ref{lemma:L8-L9}.
For $z_{92}\to0$, $y_{92}$ bounded and bounded away from $1/(4\delta_1)$ and $0$, and bounded $1/x$, the estimates (\ref{eq:y92'}--\ref{eq:E'/E92}) hold. 

We choose $x_0$ on the boundary of $D_5$ from Lemma \ref{lemma:L5-L8}.
Then we have
$$
\frac{d(\xi)}{d(x_0)}\sim \frac{J_{111}(\xi)}{\delta_1}E(x_0)\sim1
\quad
\text{and}
\quad
|Z_{111}(x_0)|=R_5.
$$
From the results of Sections \ref{sec:blowdownL0}, \ref{sec:b11-blow},
\ref{sec:b5-blow}, \ref{sec:b8-blow}, we have
$$
Z_{111}
=z_{111}
=z_{52}+\frac{2\beta_1}{x\delta_1}
=\frac{1}{y_{51}}+\frac{2\beta_1}{x\delta_1}
=\frac{1}{y_{81}z_{81}}+\frac{2\beta_1}{x\delta_1}
=\frac{1}{y_{92}z_{92}(y_{92}-\frac{1}{4\delta_1})}+\frac{2\beta_1}{x\delta_1},
$$
which implies that $|z_{92}|R_5$ is bounded and limited away from $0$.
In particular, we have $|z_{92}|\ll1$.

Following the same steps and calculations as in Lemma
\ref{lemma:L8-L9}, we get the desired conclusion.
\end{proof}

\begin{theorem}\label{th:estimates}
Let $\epsilon_1$, $\epsilon_2$, $\epsilon_3$ be given such that $\epsilon_1>0$, $0<\epsilon_2<2$, $0<\epsilon_3<1$.
Then there exists $\eta>0$ such that if $|x_0|>\epsilon_1$ and $|d(x_0)|<\eta$, then
$$
\rho=\sup\{ r>|x_0| \text{ such that } |d(x)|<\eta \text{ whenever } |x_0|\le|x|\le r \},
$$
satisfies:
\begin{enumerate}
\item
$\eta\ge|d(x_0)|\left(\dfrac{\rho}{|x_0|}\right)^{2-\epsilon_2}(1-\epsilon_3)$;
\item
if $|x_0|\le|x|\le\rho$, then $d(x)=d(x_0)\left(\dfrac{x}{x_0}\right)^{2+\epsilon_2(x)}(1+\epsilon_3(x))$;
\item
if $|x|\ge\rho$, then $d(x)\ge\eta(1-\epsilon_3)$.
\end{enumerate}
\end{theorem}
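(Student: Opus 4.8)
The plan is to follow the solution inside the neighbourhood $W_{\eta}:=\{\,|d|<\eta\,\}$ of the infinity set and to control $\log\bigl|d(x)/x^{2}\bigr|$ along paths from $x_{0}$ in the annulus $\{\,|x_{0}|\le|x|<\rho\,\}$. Here $d$ is the continuous function of Lemma~\ref{lemma:d}, which vanishes exactly on $\mathcal{I}$, so $W_{\eta}$ really is a neighbourhood of $\mathcal{I}$ in the Okamoto space, and by the definition of $\rho$ the solution stays in $W_{\eta}$ over the whole annulus. Fix $\epsilon>0$ with $\epsilon<\min(\epsilon_{2},\epsilon_{3}/(4\pi))$; the constant $\eta$ is pinned down at the very end so that everything below holds simultaneously once $|x|>\epsilon_{1}$. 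The first step is to build a finite cover of $W_{\eta}$ from the lemmas of this section. Off the ``corner'' lines one has $d=1/E$, and there are two regimes: near $\mathcal{L}_{2}^{p}$, Lemma~\ref{lemma:L2} --- or rather the sharper bound $\bigl|\,x\,(E'/E+2/x)\,\bigr|<\epsilon$ that its proof actually yields, the right-hand side there carrying a factor vanishing on $\mathcal{L}_{2}^{p}$ --- gives $\bigl|\,x\,d'/d-2\,\bigr|<\epsilon$; near compact subsets of $\mathcal{L}_{1}^{p}\setminus(\mathcal{L}_{3}^{p}\cup\mathcal{L}_{4}^{p})$ and of $\mathcal{L}_{8}^{p}\setminus(\mathcal{L}_{5}^{p}\cup\mathcal{L}_{9}^{p})$, Lemma~\ref{lemma:L1,L8} gives $\bigl|\,x\,d'/d\,\bigr|<C$. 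Near the remaining parts of $\mathcal{I}$, that is near $\mathcal{L}_{3}^{p},\mathcal{L}_{4}^{p},\mathcal{L}_{5}^{p},\mathcal{L}_{9}^{p}$, Lemmas~\ref{lemma:L3-L1}, \ref{lemma:L1-L2}, \ref{lemma:L5-L8}, \ref{lemma:L9-L8}, \ref{lemma:L8-L9}, \ref{lemma:L8-L2} (together with the evident analogue of Lemma~\ref{lemma:L3-L1} at $\mathcal{L}_{4}^{p}$, via $\mathcal{L}_{7}$ in place of $\mathcal{L}_{6}$) show that a solution entering such a region is in a genuine pole passage: there is a unique $\xi$ with $|x-\xi|\ll|\xi|$ throughout, $d(x)/d(\xi)\to1$, and $d\sim1/E$ on the passage disc with a much smaller concentric disc removed. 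I then choose $\eta$ so small that these regions cover $W_{\eta}$, every ``bounded away from'' hypothesis of the quoted lemmas holds in $W_{\eta}$, and each passage disc has radius $\ll|\xi|$.

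Next, fix $x$ with $|x|<\rho$ and work on the compact sub-annulus $|x_{0}|\le|z|\le|x|$, in which the solution, being meromorphic, has only finitely many poles and zeros. Split a path from $x_{0}$ to $x$ (radial, followed by a circular arc of angle $\le\pi$ if $x$ is off the ray through $x_{0}$) into finitely many sub-arcs: \emph{main arcs}, on which the solution is near $\mathcal{L}_{2}^{p}$, and \emph{passage arcs}, each contained in one of the pole discs of the cover (the brief excursions near $\mathcal{L}_{1}^{p}$ or $\mathcal{L}_{8}^{p}$ occur only inside passages and are included there). On a main arc from $x_{a}$ to $x_{b}$, writing $x\,d'/d=2+\theta$ with $|\theta|<\epsilon$ and integrating gives
\[
\log\frac{d(x_{b})/x_{b}^{2}}{d(x_{a})/x_{a}^{2}}=\int_{x_{a}}^{x_{b}}\theta\,\frac{dx}{x},
\]
of modulus $\le\epsilon\log(|x_{b}|/|x_{a}|)$ on a radial sub-arc and $\le\epsilon\pi$ on a circular one. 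On a passage arc around a pole $\xi$, the relevant lemma shows that $|x|$ and $|d|$ change only by factors close to $1$ (the passage occupies the disc $|x-\xi|\ll|\xi|$, where $d(x)/d(\xi)\to1$), so $\log|d/x^{2}|$ is perturbed by a small amount; the point that needs care is that these perturbations, summed over the finitely many --- but possibly numerous --- passages before $|x|=\rho$, amount to a total that is $o(1)$ as $\eta\to0$, uniformly in $x_{0}$ and $d(x_{0})$ (one uses that $|d(\xi_{k})|$ grows quadratically in $|\xi_{k}|$, that the $\mathcal{L}_{1}^{p}$ and $\mathcal{L}_{8}^{p}$ excursions in and out of a passage nearly cancel, and that the pole centres $\xi_{k}$ therefore space out geometrically). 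Adding up the sub-arc contributions yields
\[
\log d(x)=\log d(x_{0})+2\log\frac{x}{x_{0}}+A(x)+B(x),
\]
with $A(x)$ the radial main-arc errors, $|A(x)|\le\epsilon\log(|x|/|x_{0}|)$, and $B(x)$ the circular-arc plus passage errors, $|B(x)|\le\epsilon\pi+o_{\eta}(1)<\epsilon_{3}/2$ for $\eta$ small. Setting $\epsilon_{2}(x):=A(x)/\log(x/x_{0})$ and $1+\epsilon_{3}(x):=e^{B(x)}$ and exponentiating gives assertion~(2), with $|\epsilon_{2}(x)|\le\epsilon<\epsilon_{2}$ and $|\epsilon_{3}(x)|<\epsilon_{3}$.

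Assertions (1) and (3) then follow quickly. If $\rho=\infty$, assertion~(2) holds for all finite $|x|$ and forces $|d(x)|\ge|d(x_{0})|(|x|/|x_{0}|)^{2-\epsilon_{2}}(1-\epsilon_{3})\to\infty$, contradicting $|d|<\eta$; hence $\rho<\infty$, and then a point $x_{\rho}$ with $|x_{\rho}|=\rho$ and $|d(x_{\rho})|=\eta$ exists by continuity. Substituting into (2) and using $\rho/|x_{0}|\ge1$ gives $\eta\ge|d(x_{0})|(\rho/|x_{0}|)^{2-\epsilon_{2}}(1-\epsilon_{3})$, which is (1). For (3), past $|x|=\rho$ the same local estimates still apply while the solution remains in the domain of $d$ near $\mathcal{I}$: there a main arc can only raise $\log|d|$ by about $2\log(|x|/\rho)\ge0$ and the passages perturb it by at most the same uniformly small total, so $|d|$ cannot fall back below $\eta(1-\epsilon_{3})$; and once $|d|$ exceeds the fixed size of that domain the solution lies in a compact part of the Okamoto space bounded away from $\mathcal{I}$, where $|d|>\eta(1-\epsilon_{3})$ too. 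This is the assertion that $\mathcal{I}$ is a repeller.

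The step I expect to be the real obstacle is the passage bookkeeping in the second paragraph: showing both that every approach to $\mathcal{I}$ away from $\mathcal{L}_{2}^{p}$ is a pole passage of exactly the type covered by Lemmas~\ref{lemma:L3-L1}--\ref{lemma:L8-L2}, and, above all, that the \emph{accumulated} perturbation of $\log|d/x^{2}|$ over possibly many passages stays $o(1)$ as $\eta\to0$ --- which rests on the near-cancellation of the $\mathcal{L}_{1}^{p}$ and $\mathcal{L}_{8}^{p}$ excursions combined with the quadratic growth of $|d|$ and the resulting geometric spacing of the pole centres. By contrast, the verification that the quoted lemmas cover a full neighbourhood of $\mathcal{I}$ and the matching of coordinate charts across overlaps is routine, if lengthy.
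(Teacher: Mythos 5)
Your skeleton is the same as the paper's: the whole estimate comes from $x\,d'/d\approx 2$ near $\mathcal{L}_2^p$ (your sharpening of Lemma~\ref{lemma:L2} to $|x(E'/E+2/x)|<\epsilon$ is correct and is indeed what the integration requires), with the excursions towards the other components of $\mathcal{I}$ handled by the pole-passage Lemmas~\ref{lemma:L3-L1}--\ref{lemma:L8-L2}. The structural divergence is in how the path from $x_0$ to $x$ meets those excursions. The paper observes that the set of times at which the solution is \emph{not} close to $\mathcal{L}_2^p$ is a union of approximate disks of radii $\sim\sqrt{|d|}$, and therefore chooses $\Gamma$ to go \emph{around} every intermediate disk while staying $C^1$-close to $t\mapsto x^tx_0^{1-t}$; the only estimate ever integrated then carries the constant $\epsilon$, which shrinks with $\eta$, and a point $x$ lying inside a passage disk is reached by at most one final excursion, costing a single factor $1+o(1)$. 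You instead push the path \emph{through} every intermediate passage and must sum the per-passage perturbations of $\log|d/x^2|$.

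That summation is the gap, and it is not repairable by the devices you name in parentheses. The limiting autonomous flow at energy $E$ has periods $\sim E^{-1/2}\sim\sqrt{|d|}$ (rescale $y'^2=\gamma y^4+Ey^2-\delta$), so the poles are spaced $\sim\sqrt{|d(\xi_k)|}$ apart and a radial segment from $|x_0|$ to $\rho$ meets on the order of $|x_0|\,|d(x_0)|^{-1/2}\log(\rho/|x_0|)$ passages; your ``geometric spacing'' has ratio $1+O(\sqrt{|d|}/|x|)$, i.e.\ the passages are dense on the scale of $\log x$. On the annular legs $D_1\setminus D_3$ and $D_8\setminus D_9$ the only available bound is $|x\,E'/E|<C$ from Lemma~\ref{lemma:L1,L8}, with $C$ a \emph{fixed} constant, over a leg of length $\sim\sqrt{|d(\xi_k)|R}/|\xi_k|$ in the variable $\log x$ --- a fixed, not $\eta$-small, multiple of the inter-passage spacing. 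No cancellation between the inbound and outbound crossings is proved in (or follows from) Lemmas~\ref{lemma:L1-L2} and \ref{lemma:L8-L9}: those lemmas give only the qualitative statement $d(x)/d(\xi_k)\sim1$ across one passage, with no quantitative per-passage error, and a product of $N$ factors $1+o(1)$ with $N\to\infty$ as $d(x_0)\to0$ need not be $1+o(1)$. Summing the available bounds gives a contribution of order $C'\log(\rho/|x_0|)$ --- comparable to the main term $2\log(\rho/|x_0|)$, and certainly not the $o_\eta(1)$ you place into $B(x)$ and need to be $<\epsilon_3/2$. At best such a term could be pushed into the exponent $\epsilon_2(x)$, but only if its constant were $<\epsilon_2$, which you cannot arrange since $C$, $R$ and the lattice constant are fixed before $\eta$. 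The fix is the paper's path deformation; with it, the passage lemmas are needed only once, for the endpoint $x$, exactly as in the published proof.
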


\begin{proof}
Suppose a solution of the system (\ref{eq:p3-system}) is close to $\mathcal{L}_2^p$ at times $x_0$ and $x_1$.
It follows from Lemmas \ref{lemma:L3-L1}--\ref{lemma:L8-L9} that for every solution close to $\mathcal{I}$, the set of complex times $x$ such that the solution is not close to $\mathcal{L}_2^p$ is the union of approximate disks of radii $\sim\sqrt{|d|}$.
Hence if the solution is near $\mathcal{I}$ for all complex times $x$ such that $|x_0|\le|x|\le|x_1|$, then there exists a path $\Gamma$ from $x_0$ to $x_1$ such that the solution is close to $\mathcal{L}_2^p$ for all $x\in\Gamma$ and $\Gamma$ is $C^1$-close to the path:
$t\mapsto x_1^t z_0^{1-t}$, $t\in[0,1]$.

Then Lemma \ref{lemma:L2} implies (after integration) that
$$
E(x)=E(x_0)\left(\frac{x}{x_0}\right)^{-2+o(1)}(1+o(1)),
$$
and
\begin{equation}\label{eq:d-estimate}
d(z)=d(z_0)\left(\frac{x}{x_0}\right)^{2+o(1)}(1+o(1)).
\end{equation}
Therefore, from Lemmas \ref{lemma:L3-L1}--\ref{lemma:L8-L9} then we have that, as long as the solution is close to $\mathcal{I}$, as it moves into a neighbourhood of 
$\mathcal{L}_1^p\setminus(\mathcal{L}_3^p\cup\mathcal{L}_4^p)$ 
and $\mathcal{L}_8^p\setminus(\mathcal{L}_5^p\cup\mathcal{L}_9^p)$, 
the ratio of $d$ remains close to $1$.

For the first statement of the theorem, we have:
$$
\eta>|d(x)|\ge|d(x_0)|\left|\frac{x}{x_0}\right|^{2-\epsilon_2}(1-\epsilon_3),
$$
and so
$$
\eta\ge\sup_{ \{x\mid|d(x)|<\eta\} } |d(x_0)|\left|\frac{x}{x_0}\right|^{2-\epsilon_2}(1-\epsilon_3).
$$
The second statement follows from (\ref{eq:d-estimate}), while the third one follows by the assumption on $x$.
\end{proof}

In the following corollary, we summarise the results obtained in this section.

\begin{corollary}\label{cor:infinity}
No solution of $(\ref{eq:p3-system})$ that starts in the space of initial values intersects $\mathcal{I}$.
A solution that approaches $\mathcal{I}$ will stay in its vicinity
only for a limited range of the independent variable $x$.
Moreover, if a solution is sufficiently close to $\mathcal{I}$ at a point $x$, then it will have a pole in a neighbourhood of $x$. 
\end{corollary}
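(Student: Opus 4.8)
The plan is to deduce all three assertions from Theorem \ref{th:estimates} and the family of Lemmas \ref{lemma:L3-L1}--\ref{lemma:L8-L2}, treating the function $d$ from Lemma \ref{lemma:d} as the basic measure of proximity to $\mathcal{I}$ (since $\mathcal{I}$ is exactly the locus $d=0$ in the relevant neighbourhoods). For the first statement, suppose a solution starting in $\mathcal{E}(x)=\mathcal{F}(x)\setminus\mathcal{I}$ were to reach a point $p\in\mathcal{I}$ at some finite time $x_*\neq 0$. Then $d(x)\to 0$ as $x\to x_*$ along the solution. But near $x_*$ the estimates of Theorem \ref{th:estimates} apply (with $|x_0|$ taken just below $|x_*|$ and $\eta$ fixed small), and part (2) gives $d(x)=d(x_0)(x/x_0)^{2+\epsilon_2(x)}(1+\epsilon_3(x))$, so $|d(x)|$ is bounded below by a positive constant times $|d(x_0)|$ on $|x_0|\le|x|\le\rho$; since $d(x_0)\neq 0$ this contradicts $d\to 0$. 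One must also check that $d$ does not vanish because the solution crosses one of the exceptional lines $\mathcal{L}_6,\mathcal{L}_7,\mathcal{L}_{10},\mathcal{L}_{11}$ (i.e. hits a pole or zero): on those lines $d$ equals a nonzero multiple of $J_{61},J_{71},J_{102},J_{111}$ or $1/E$ with $E$ finite, and the Laurent expansions in Section \ref{sec:movable} together with Lemmas \ref{lemma:L3-L1}, \ref{lemma:L9-L8}, \ref{lemma:L5-L8} show $d$ stays away from $0$ there. Hence no solution in the initial value space meets $\mathcal{I}$.

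For the second statement, I would argue by the exponential-growth estimate on $d$. By Theorem \ref{th:estimates}(1), if at time $x_0$ a solution has $|d(x_0)|<\eta$ and $|x_0|>\epsilon_1$, then $\eta\ge |d(x_0)|(\rho/|x_0|)^{2-\epsilon_2}(1-\epsilon_3)$, which forces
$$
\frac{\rho}{|x_0|}\le\left(\frac{\eta}{|d(x_0)|(1-\epsilon_3)}\right)^{1/(2-\epsilon_2)}<\infty.
$$
Thus $\rho$ is finite: the set of radii on which the solution remains in the $\eta$-neighbourhood of $\mathcal{I}$ is bounded, and by part (3) once $|x|\ge\rho$ the solution has definitely left that neighbourhood. (Strictly, $\rho$ bounds the radii in the outward direction $|x|>|x_0|$; the same reasoning applied with $x\mapsto 1/x$, using the equivalence $x\to\infty\Leftrightarrow x\to 0$ for $\PIII$, controls the inward direction, so the full range of $|x|$ for which the solution is near $\mathcal{I}$ is an annulus of finite ``width'' on the logarithmic scale.) This is precisely the assertion that a solution approaching $\mathcal{I}$ stays near it only for a limited range of $x$.

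For the third statement, I would unpack which component of $\mathcal{I}$ the solution is near and invoke the corresponding lemma. By Lemma \ref{lemma:L2} together with the decomposition used in the proof of Theorem \ref{th:estimates}, a solution close to $\mathcal{I}$ spends ``most'' of its time near $\mathcal{L}_2^p$, and the complement of that region (in complex-time) is a union of approximate disks on each of which the solution is close to one of $\mathcal{L}_1^p$, $\mathcal{L}_3^p$, $\mathcal{L}_4^p$, $\mathcal{L}_5^p$, $\mathcal{L}_8^p$, $\mathcal{L}_9^p$. If it is close to $\mathcal{L}_3^p\setminus\mathcal{L}_1^p$, Lemma \ref{lemma:L3-L1} produces a unique $\xi$ with $|x-\xi|=O(|d(x)||z_{61}(x)|)$ small at which $z_{61}=0$, i.e. the solution crosses $\mathcal{L}_6(\xi)^p$, which by Section \ref{sec:movable} is exactly a simple pole of $y$; the symmetric statements near $\mathcal{L}_4^p$, $\mathcal{L}_5^p$, $\mathcal{L}_9^p$ are Lemmas \ref{lemma:L9-L8} and \ref{lemma:L5-L8}, each giving a movable pole (of $y$ or of $z$) at a nearby point; and near $\mathcal{L}_1^p$, $\mathcal{L}_2^p$, $\mathcal{L}_8^p$ one first transits (Lemmas \ref{lemma:L1-L2}, \ref{lemma:L8-L9}, \ref{lemma:L8-L2}) to one of the former regions, again within a distance $\ll|x|$. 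In every case a pole occurs within the stated neighbourhood of $x$.

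The main obstacle I anticipate is bookkeeping rather than analysis: one must verify that the union of ``approximate disks'' furnished by the separate lemmas genuinely covers every complex time at which the solution is close to $\mathcal{I}$, with the radius estimates $\sim\sqrt{|d|}$ matching up across the overlaps between the charts (the transitions $\mathcal{L}_1^p\leftrightarrow\mathcal{L}_3^p,\mathcal{L}_4^p$ and $\mathcal{L}_8^p\leftrightarrow\mathcal{L}_5^p,\mathcal{L}_9^p$), so that no ``gap'' remains where a pole might fail to be produced. The $C^1$-closeness of the path $\Gamma$ to $t\mapsto x_1^t x_0^{1-t}$ used in the proof of Theorem \ref{th:estimates} is what glues these pieces together, and checking it is compatible with all six boundary-transition lemmas is the delicate point.
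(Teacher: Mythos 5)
Your proposal is correct and follows essentially the same route as the paper, whose proof of Corollary \ref{cor:infinity} is simply the observation that the first two statements follow from Theorem \ref{th:estimates} and the third from Lemmas \ref{lemma:L2}--\ref{lemma:L8-L2}; your write-up just makes explicit the contradiction via the lower bound on $|d|$ from Theorem \ref{th:estimates}(2), the finiteness of $\rho$ from part (1), and the case analysis over the components of $\mathcal{I}$ that produces a nearby crossing of a pole line. The ``bookkeeping'' issue you flag is already absorbed into the proof of Theorem \ref{th:estimates} (the covering of the complement of the $\mathcal{L}_2^p$-region by approximate disks of radius $\sim\sqrt{|d|}$), so no additional argument is needed at the level of the corollary.
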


\begin{proof}
The first two statements follow from Theorem \ref{th:estimates}, and the last one from Lemmas \ref{lemma:L2}--\ref{lemma:L8-L2}.
\end{proof}

\section{The limit set}\label{s:lim}
In this section, we define the complex limit set of the solutions, when $x\to\infty$, and consider its properties. These properties enable to prove that each non-rational solution of the system (\ref{eq:p3-system}) has an infinite number of poles and zeroes.

Our definition of the limit set extends the standard concept of limit sets in dynamical systems to complex-valued solutions.
\begin{definition}
Let $(y(x),z(x))$ be a solution of (\ref{eq:p3-system}). \emph{The limit set} $\Omega_{(y,z)}$ of $(y(x),z(x))$ is the set of all $s\in\mathcal{F}_{\infty}\setminus\mathcal{I}_{\infty}$ such that there exists a sequence $x_n\in\mathbf{C}$ satisfying:
$$
\lim_{n\to\infty}|x_n|=\infty
\quad\text{and}\quad
\lim_{n\to\infty}(y(x_n),z(x_n))=s.
$$
\end{definition}

\begin{theorem}\label{th:limit}
There exists a compact set $K\subset\mathcal{F}(\infty)\setminus\mathcal{I}(\infty)$ such that the limit set $\Omega_{(y,z)}$ of any solution $(y,z)$ of (\ref{eq:p3-system}) is contained in $K$.
Moreover, $\Omega_{(y,z)}$ is a nonempty, compact, and connected set, which is invariant under the flow of the autonomous system (\ref{eq:p3-system-auto}).
\end{theorem}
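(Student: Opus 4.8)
The plan is to realise $\Omega_{(y,z)}$ as a nested intersection of compact connected sets, to use the repeller estimates of Theorem~\ref{th:estimates} to keep that intersection away from $\mathcal I(\infty)$, and finally to exploit the fact that the system \eqref{eq:p3-system} degenerates to the autonomous system \eqref{eq:p3-system-auto} as $x\to\infty$.

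First I would set up the geometry near $x=\infty$. Passing to the coordinate $\zeta=1/x$, the Okamoto space becomes a fibration over a disc $\{|\zeta|\le 1/r_0\}$ with compact fibre, the fibre over $\zeta=0$ being $\mathcal F(\infty)$; hence the total space over $\{|\zeta|\le 1/r\}$ is compact for every $r\ge r_0$. A solution $(y,z)$ extends meromorphically for $|x|\ge r_0$ and, by Corollary~\ref{cor:infinity}, never meets $\mathcal I$, so it lifts to a continuous curve $\sigma$ in the total space over $0<|\zeta|\le 1/r_0$. For $r\ge r_0$ let $C_r$ be the closure of $\sigma(\{0<|\zeta|\le 1/r\})$ in the total space over $\{|\zeta|\le 1/r\}$. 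Each $C_r$ is compact, and connected since it is the closure of the continuous image of the connected set $\{|x|\ge r\}$; the family $\{C_r\}_{r\ge r_0}$ is decreasing, and $\bigcap_{r\ge r_0}C_r$ lies over $\bigcap_r\{|\zeta|\le 1/r\}=\{0\}$, so it sits inside $\mathcal F(\infty)$ and is exactly the set of limit points of $(y(x),z(x))$ in $\mathcal F(\infty)$. By the standard fact that a decreasing intersection of nonempty compact connected Hausdorff sets is nonempty, compact and connected, the set $\bigcap_r C_r$ already has these three properties.

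Next I would show that $\bigcap_r C_r$ avoids $\mathcal I(\infty)$ — so that it equals $\Omega_{(y,z)}$ — and that this holds with a fixed compact $K$ for all solutions. Fix once and for all $\epsilon_1=1$, $\epsilon_2=\epsilon_3=1/2$ and let $\eta>0$ be the constant furnished by Theorem~\ref{th:estimates}, which depends only on these. Using the continuous function $d$ of Lemma~\ref{lemma:d}, let $U$ be the open neighbourhood of $\mathcal I(\infty)$ in $\mathcal F(\infty)$ on which $d$ is defined and $|d|<\eta/2$, and put $K:=\mathcal F(\infty)\setminus U$: compact, disjoint from $\mathcal I(\infty)$, and independent of the solution. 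If some limit point $s$ lay in $U$, pick $x_n\to\infty$ with $\sigma(x_n)\to s$; for large $n$ we have $|x_n|>1$ and $|d(x_n)|<\eta$, so Theorem~\ref{th:estimates} applies at $x_0=x_n$. The case $\rho=\infty$ is excluded by its first estimate, since $d(x_n)\ne0$ (the solution lies in the initial value space, where $d$ does not vanish, by Corollary~\ref{cor:infinity}); in the case $\rho<\infty$ the third estimate forces $|d(x)|\ge\eta/2$ for all $|x|\ge\rho$, whereas $\sigma(x_m)\to s$ with $|d(s)|<\eta/2$ yields an $m$ with $|x_m|\ge\rho$ and $|d(x_m)|<\eta/2$ — a contradiction. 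Hence $\Omega_{(y,z)}=\bigcap_r C_r\subset K$, which gives the first assertion of the theorem together with the nonemptiness, compactness and connectedness of $\Omega_{(y,z)}$.

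It remains to prove invariance, which I expect to be the main obstacle. Let $\psi_t$ be the complex-time flow of \eqref{eq:p3-system-auto} on $\mathcal F(\infty)\setminus\mathcal I(\infty)$; its leaves are the curves $h_c=0$ from \eqref{eq:pencil}, which are compact and disjoint from $\mathcal I(\infty)$, so $\psi_t$ is complete there. Given $s\in\Omega_{(y,z)}$ and $\tau\in\mathbb C$, choose $x_n\to\infty$ with $\sigma(x_n)\to s$; the orbit segment $\{\psi_t(s):t\in[0,\tau]\}$ has compact closure $K_s$ inside $\mathcal F(\infty)\setminus\mathcal I(\infty)$. In each affine chart of the Okamoto space the vector field of \eqref{eq:p3-system} differs from that of \eqref{eq:p3-system-auto} only by terms of order $1/x$, so on a neighbourhood of $K_s$ it converges uniformly to the autonomous field as $|x|\to\infty$; since $|x_n+t|\ge|x_n|-|\tau|$ stays large, the repeller estimates keep $\sigma(x_n+t)$ away from $\mathcal I$ throughout, and continuous dependence of solutions on initial data and on the parameter $1/x$ gives $\sigma(x_n+t)\to\psi_t(s)$ uniformly for $t\in[0,\tau]$. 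In particular $\sigma(x_n+\tau)\to\psi_\tau(s)$ with $|x_n+\tau|\to\infty$, so $\psi_\tau(s)\in\Omega_{(y,z)}$; as $s,\tau$ are arbitrary, $\Omega_{(y,z)}$ is invariant. The delicate points in this last step are to make the convergence of the nonautonomous foliation to the autonomous one precise and uniform on a neighbourhood of $K_s$ in the blown-up coordinates, and to ensure that $\sigma$ — known a priori only to avoid $\mathcal I$ — genuinely shadows the autonomous orbit over the whole range $[0,\tau]$ rather than drifting towards $\mathcal I(\infty)$ or leaving a coordinate patch; this is precisely what makes keeping $K_s$ a fixed compact subset of $\mathcal F(\infty)\setminus\mathcal I(\infty)$, and $|x_n+t|$ large, essential.
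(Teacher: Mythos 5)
Your proposal is correct and follows essentially the same route as the paper: realising $\Omega_{(y,z)}$ as a decreasing intersection of compact connected closures of solution-curve tails, using the repeller estimates of Theorem~\ref{th:estimates} (via the function $d$) to produce a uniform compact set $K$ disjoint from $\mathcal{I}(\infty)$, and deducing invariance from the $C^1$-convergence of the Painlev\'e vector field to the autonomous one together with continuous dependence on initial data and parameters. The only difference is cosmetic: you spell out the contradiction argument extracting $\eta$ from the three estimates of Theorem~\ref{th:estimates}, which the paper leaves implicit.
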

\begin{proof}
For any positive numbers $\eta$, $r$, let $K_{\eta,r}$ denote the set of all $s\in\mathcal{F}(x)$ such that $|x|\ge r$ and $|d(s)|\ge\eta$.
Since $\mathcal{F}(x)$ is a complex analytic family over $\mathbb{P}^1\setminus\{0\}$ of compact surfaces $\mathcal{F}(x)$, $K_{\eta,r}$ is also compact.
Furthermore, $K_{\eta,r}$ is disjoint from the infinity sets $\mathcal{I}(x)$, $x\in\mathbb{P}^1\setminus\{0\}$, and therefore $K_{\eta,r}$ is a compact subset of the Okamoto's space $\mathcal{O}\setminus\mathcal{F}(\infty)$.
When $r$ grows to infinity, the sets $K_{\eta,r}$ shrink to the compact set
$$
K_{\eta,\infty}=\{s\in\mathcal{F}(\infty)\mid |d(s)|\ge\eta\}
\subset
\mathcal{F}(\infty)\setminus\mathcal{I}(\infty).
$$
It follows from Theorem \ref{th:estimates} that there exists $\eta>0$ such that for every solution $(y,z)$ there exists $r_0>0$ with the following property:
$$
(y(x),z(x))\in K_{\eta, r_0}
\text{ for every }
x
\text{ such that }
|x|\ge r_0.
$$
Hereafter, we take $r\ge r_0$, when it follows that $(y(x),z(x))\in K_{\eta, r}$ whenever $|x|\ge r$.

Let $X_r=\{x\in\mathbb{C}\mid |x|\ge r\}$, and let $\Omega_{(y,z),r}$ denote the closure of the image set $\bigl(y(X_r), z(X_r)\bigr)$ in $\mathcal{O}$.
Since $X_r$ is connected and $(y,z)$ is continuous, $\Omega_{(y,z),r}$ is also connected.
Since $\bigl(y(X_r), z(X_r)\bigr)$ is contained in the compact set $K_{\eta, r}$, its closure $\Omega_{(y,z),r}$ is also contained in $K_{\eta, r}$, 
and therefore $\Omega_{(y,z),r}$ is a nonempty compact and connected subset of $\mathcal{O}\setminus\mathcal{F}(\infty)$.

The intersection of a decreasing sequence of nonempty, compact, and connected sets is a nonempty, compact, and connected. Therefore, as $\Omega_{(y,z),r}$ decreases to $\Omega_{(y,z)}$ as $r$ grows to infinity, it follows that $\Omega_{(y,z)}$ is a nonempty, compact and connected subset of $\mathcal{O}$.
Since $\Omega_{(y,z),r}\subset K_{\eta,r}$, for all $r\ge r_0$, and the sets $K_{\eta,r}$ shrink to the compact subset $K_{\eta,\infty}$ of $\mathcal{F}(\infty)\setminus\mathcal{I}(\infty)$ as $r$ grows to infinity, it follows that $\Omega_{(y,z)}\subset K_{\eta,\infty}$.
This proves the first statement of the theorem with $K=K_{\eta,\infty}$.

Since $\Omega_{(y,z)}$ is the intersection of the decreasing family of compact sets $\Omega_{(y,z),r}$, there exists for every neighbourhood $A$ of $\Omega_{(y,z)}$ in $\mathcal{O}$, an $r>0$ such that $\Omega_{(y,z),r}\subset A$. Hence $(u(x),v(x))\in A$ for every $x\in\mathbb{C}$ such that $|x|\ge r$.
If $\{x_j\}$ is any sequence in $\mathbb{C}\setminus\{0\}$ such that $|x_j|\to\infty$, then the compactness of $K_{\eta,r}$, in combination with $\bigl(y(X_r), z(X_r)\bigr)\subset K_{\eta,r}$, implies that there is a subsequence $j=j(k)\to\infty$ as $k\to\infty$ and an $s\in K_{\eta,r}$, such that:
$$
(y(x_{j(k)}),z(x_{j(k)}))\to s\ \text{as}\ k\to\infty.
$$
It follows, therefore, that $s\in\Omega_{(y,z)}$.

Next, we prove that $\Omega_{(y,z)}$ is invariant under the flow $\Phi^t$ of the autonomous Hamiltonian system.
Let $s\in\Omega_{(y,z)}$ and $x_j$ be a sequence in $\mathbb{C}\setminus\{0\}$ such that $x_j\to\infty$ and $(y(x_j),z(x_j))\to s$.
Since the $x$-dependent vector field of the Painlev\'e system converges in $C^1$ to the vector field of the autonomous Hamiltonian system as $x\to\infty$, it follows from the continuous dependence on initial data and parameters, that the distance between $(y(x_j+t),z(x_j+t))$ and $\Phi^t(y(x_j),z(x_j))$ converges to zero as $j\to\infty$.
Since $\Phi^t(y(x_j),z(x_j))\to\Phi^t(s)$ and $x_j\to\infty$ as $j\to\infty$, it follows that $(y(x_j+t),z(x_j+t))\to\Phi^t(s)$ and $x_j+t\to\infty$ as $j\to\infty$, hence $\Phi^t(s)\in\Omega_{(y,z)}$.
\end{proof}

\begin{proposition}\label{prop:intersections}
Every non-rational solution $(y(x),z(x))$ intersects each of the pole lines $\mathcal{L}_6$, $\mathcal{L}_7$, $\mathcal{L}_{10}$ infinitely many times. 
\end{proposition}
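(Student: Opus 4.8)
The plan is to argue by contradiction using the structure of the limit set. Suppose a non-rational solution $(y(x),z(x))$ meets one of the three lines --- say $\mathcal{L}_6$ --- only finitely many times; I will contradict this with Theorem~\ref{th:limit}, by which $\Omega:=\Omega_{(y,z)}$ is a nonempty, compact, connected subset of $\mathcal{F}(\infty)\setminus\mathcal{I}(\infty)$ that is invariant under the flow $\Phi^t$, $t\in\mathbb{C}$, of the autonomous system \eqref{eq:p3-system-auto}. The first step is to describe $\Phi^t$ on $\mathcal{F}(\infty)$. Since $E$ of \eqref{eq:energy} is a first integral, every orbit lies on one of the curves $h_c=0$ of the pencil \eqref{eq:pencil}; for $c\notin\{c_1,c_2\}$ these are smooth, and their closures in $\mathcal{F}(\infty)$ are smooth elliptic curves $C_c$. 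Realising $C_c$ as the double cover $y^2=-(\delta_1 z+c)/(z^2/4-\gamma)$ of the $z$-line, its branch points $z=2\sqrt{\gamma}$, $z=-2\sqrt{\gamma}$, $z=\infty$ are, by the Laurent expansions of Section~\ref{sec:movable}, exactly the points where $C_c$ meets $\mathcal{L}_6$, $\mathcal{L}_7$, $\mathcal{L}_{10}$ respectively --- one point of each, all lying in $\mathcal{F}(\infty)\setminus\mathcal{I}$; note that $C_c$ does not meet $\mathcal{L}_{11}$, which explains the absence of that line from the statement. The Hamiltonian field is holomorphic and non-vanishing on $C_c$, so $\Phi^t|_{C_c}$ is the complex-time translation flow of the torus $C_c$ and hence transitive: the orbit of any point of $C_c$ is all of $C_c$. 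Therefore any nonempty closed $\Phi^t$-invariant set that meets $C_c$ contains $C_c$, and in particular contains the three intersection points $C_c\cap\mathcal{L}_6$, $C_c\cap\mathcal{L}_7$, $C_c\cap\mathcal{L}_{10}$.

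The second step is to show that $\Omega$ contains a generic fiber $C_c$. Being closed, nonempty and $\Phi^t$-invariant, $\Omega$ is a union of orbit closures; by the transitivity just noted, each such closure is either a fixed point of \eqref{eq:p3-system-auto}, or a component of one of the two singular fibers $\{h_{c_1}=0\}$, $\{h_{c_2}=0\}$, or a generic fiber $C_c$. One has to exclude the possibility that $\Omega$ lies entirely in the union of the finitely many fixed points and the two singular fibers. Such confinement would force the solution to be asymptotic, as $x\to\infty$, to one of the degenerate configurations underlying the special solutions of Section~\ref{s:ssol}: quantitatively, the energy identity \eqref{eq:energychange} would then give $E(x)\to -c_1$ or $E(x)\to -c_2$, and combining this with a normal-form analysis of the non-autonomous flow near the equilibria would identify $(y,z)$ with a special solution, contrary to hypothesis. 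This is the step I expect to be the main obstacle: it is the only place where non-rationality (equivalently, an essential singularity at $x=\infty$) is genuinely used, and, unlike the earlier estimates, it demands control of the solution near the equilibria and the singular curves rather than near $\mathcal{I}$.

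Granted that $\Omega\supseteq C_c$ for some generic $c$, set $q_6=C_c\cap\mathcal{L}_6\in\Omega$. Since $q_6$ belongs to the limit set, there is a sequence $x_n\to\infty$ with $(y(x_n),z(x_n))\to q_6$. As $q_6\in\mathcal{F}(\infty)\setminus\mathcal{I}$ and there are no equilibria of \eqref{eq:p3-system-auto} on the pole lines, the autonomous vector field is holomorphic and non-vanishing near $q_6$; by the $C^1$-convergence of the Painlev\'e field \eqref{eq:p3-system} to it, the same holds, for $|x|$ large, in a fixed flow box $N$ around $q_6$. Moreover, the analysis of movable poles in Section~\ref{sec:movable} shows that a solution crosses $\mathcal{L}_6$ exactly at a simple pole of $y$, at which the exceptional coordinate $z_{61}$ (with $\mathcal{L}_6=\{z_{61}=0\}$) has a simple zero in $x$; so the solution curves are transversal to $\mathcal{L}_6$, and straightening the flow in $N$ shows that $z_{61}$ evolves along the solution with nonzero derivative. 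Hence, for $n$ large, once the solution enters $N$ the coordinate $z_{61}$ moves from the tiny value $z_{61}(x_n)\to 0$ with nonzero speed, so it vanishes at some $\xi_n$ close to $x_n$ --- a pole of $y$ there, i.e.\ a crossing of $\mathcal{L}_6$. Choosing the $x_n$ to tend to infinity sparsely enough makes the $\xi_n$ pairwise distinct, so $(y,z)$ meets $\mathcal{L}_6$ infinitely often --- the desired contradiction. The identical argument applied to $q_7=C_c\cap\mathcal{L}_7$ and $q_{10}=C_c\cap\mathcal{L}_{10}$ yields infinitely many intersections with $\mathcal{L}_7$ and $\mathcal{L}_{10}$.
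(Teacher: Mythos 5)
Your overall architecture --- limit set, invariance under the autonomous flow, the pencil of level curves of $E$, and a final transversality argument that converts proximity of the orbit to a point of $\Omega_{(y,z)}$ on a pole line into an actual crossing nearby --- is the same as the paper's, and your closing step is essentially the paper's. But the step you yourself flag as ``the main obstacle'' is a genuine gap, and it sits exactly where the paper's one essential idea lives. You never actually use non-rationality except through the unproved claim that $\Omega_{(y,z)}$ contains a generic elliptic fibre, and the machinery you sketch to establish it (energy asymptotics via \eqref{eq:energychange} plus a normal-form analysis of the non-autonomous flow near the equilibria and singular fibres) is both unexecuted and unnecessary. The missing observation is elementary: $\mathcal{L}_6(\infty)\cup\mathcal{L}_7(\infty)\cup\mathcal{L}_{10}(\infty)$ is precisely the set of points of $\mathcal{F}(\infty)\setminus\mathcal{I}(\infty)$ lying over the line at infinity, so if $\Omega_{(y,z)}$ avoided all three pole lines it would be a compact subset of the affine $(y,z)$-chart; then $y(x)$ and $z(x)$ stay bounded for large $|x|$, hence extend holomorphically in $1/x$ to $x=\infty$, and the solution is rational. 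The contrapositive hands you, for a non-rational solution, a point $p\in\Omega_{(y,z)}$ on at least one pole line, with no analysis near equilibria required; this is the only place where the hypothesis enters.

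Once you have such a $p$, your insistence on a \emph{generic} fibre also makes the spreading step harder than it needs to be. The points $b_3$, $b_4$, $b_5$ are base points of the pencil \eqref{eq:pencil} (see the remark opening Appendix \ref{sec:p3resolution}), so \emph{every} level curve of $E$ --- smooth or not --- passes through all three, and these are exactly the points blown up into $\mathcal{L}_6$, $\mathcal{L}_7$, $\mathcal{L}_{10}$. Invariance of $\Omega_{(y,z)}$ under the autonomous flow then carries the single point $p$ to the whole quartic curve through it, which meets all three pole lines, and your transversality argument finishes the proof for each line separately. So the skeleton of your proof is right and the last third is correct as written, but as it stands the argument is incomplete: the dichotomy ``$\Omega_{(y,z)}$ meets a pole line, or else $y,z$ are bounded and the solution is rational'' must be supplied, and it replaces, rather than supplements, the classification of orbit closures you propose.
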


\begin{proof}
First, suppose that a solution $(y(x),z(x))$ intersects the union $\mathcal{L}_6\cup\mathcal{L}_7\cup\mathcal{L}_{10}$ only finitely many times.

According to Theorem \ref{th:limit}, the limit set $\Omega_{(y,z)}$ is a compact set in $\mathcal{F}(\infty)\setminus\mathcal{I}(\infty)$.
If $\Omega_{(y,z)}$ intersects one the three pole lines $\mathcal{L}_6$, $\mathcal{L}_7$, $\mathcal{L}_{10}$ at a point $p$, then there exists arbitrarily large $x$ such that $(y(x),z(x))$ is arbitrarily close to $p$, when the transversality of the vector field to the pole line implies that $(y(\xi),z(\xi))\in\mathcal{L}_6\cup\mathcal{L}_7\cup\mathcal{L}_{10}$ for a unique $\xi$ near $x$.
As this would imply that $(y(x),z(x))$ intersects $\mathcal{L}_6\cup\mathcal{L}_7\cup\mathcal{L}_{10}$ infinitely many times, it follows that $\Omega_{(y,z)}$ is a compact subset of 
$\mathcal{F}(\infty)\setminus(\mathcal{I}({\infty})\cup\mathcal{L}_6({\infty})\cup\mathcal{L}_7({\infty})\cup\mathcal{L}_{10}({\infty}))$.

However, $\mathcal{L}_6({\infty})\cup\mathcal{L}_7({\infty})\cup\mathcal{L}_{10}({\infty})$ is equal to the set of all points in 
$\mathcal{F}(\infty)\setminus\mathcal{I}(\infty)$ which project to the line $\mathcal{L}_{\infty}$, and therefore 
$\mathcal{F}(\infty)\setminus(\mathcal{I}({\infty})\cup\mathcal{L}_6({\infty})\cup\mathcal{L}_7({\infty})\cup\mathcal{L}_{10}({\infty}))$
is the affine $(y,z)$ coordinate chart, of which $\Omega_{(y,z)}$ is a compact subset, which implies that $y(x)$ and $z(x)$ remain bounded for large $|x|$.
It follows from the boundedness of $y$ and $z$ that $y(x)$ and $z(x)$ are equal to holomorphic functions of $1/x$ in a neighbourhood of $x=\infty$, which implies that there are complex numbers $y(\infty)$, $z(\infty)$ which are the limit points of $y(x)$ and $z(x)$ as 
$|x|\to\infty$. 
In other words, $\Omega_{(y,z)}=\{(y(\infty),z(\infty))\}$ contains only one point.
That means that that the solution is analytic at infinity, i.e.,~ it is analytic on the whole of $\mathbb{CP}^1$, thus it must be rational.

Now consider non-rational solutions. Since the limit set $\Omega_{(y,z)}$ is invariant under the autonomous flow, it means that it will contain the whole quartic curve
$
\frac{y^2z^2}{4}+\delta_1 z-\gamma y^2=c,
$
for some constant $c$.
Such a curve contains both the base points $b_3$, $b_4$ on the line $\mathcal{L}_1$, as well as the base point $b_5$ on the line $\mathcal{L}_2$, which are projections of the pole lines $\mathcal{L}_6(\infty)$, $\mathcal{L}_7(\infty)$, $\mathcal{L}_{10}(\infty)$ respectively.
Thus, a non-rational solution will intersect each of these lines infinitely many times.    
\end{proof}

\begin{remark}\label{rem:rational}
The limit set $\Omega_{(y,z)}$ is invariant under the autonomous Hamiltonian system.
If it contains only one point, as we obtained in the proof of Theorem \ref{prop:intersections}, that point must be an equilibrium point of the autonomous Hamiltonian system (\ref{eq:p3-system-auto}).
\end{remark}

\begin{theorem}\label{th:poleszeroes}
Every non-rational solution of the third Painlev\'e equation has infinitely many poles and infinitely many zeros.
\end{theorem}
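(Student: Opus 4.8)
The plan is to deduce Theorem \ref{th:poleszeroes} from Proposition \ref{prop:intersections} by identifying what it means, in the original $(y,z)$ coordinates, for a solution to cross the exceptional ``pole lines'' $\mathcal{L}_6$, $\mathcal{L}_7$, $\mathcal{L}_{10}$. Recall from Section \ref{sec:movable} that $\mathcal{L}_6$ and $\mathcal{L}_7$ are the exceptional lines on which $y$ has a simple movable pole (with $z(x_0)=\pm\sqrt{4\gamma}$, the two sheets corresponding to the two square roots of $\gamma$), while $\mathcal{L}_{10}$ is the exceptional line on which $z$ has a double movable pole and, crucially, $y$ has a \emph{simple zero}. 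Thus each intersection with $\mathcal{L}_6\cup\mathcal{L}_7$ produces a pole of the $\PIII$ transcendent $y$, and each intersection with $\mathcal{L}_{10}$ produces a zero of $y$.

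First I would invoke Proposition \ref{prop:intersections}: for a non-rational solution, the limit set $\Omega_{(y,z)}$ is invariant under the autonomous flow, hence contains an entire quartic curve $\tfrac{y^2z^2}{4}+\delta_1 z-\gamma y^2=c$, and this curve passes through the base points $b_3$, $b_4$ (on $\mathcal{L}_1$, projecting to $\mathcal{L}_6(\infty)$, $\mathcal{L}_7(\infty)$) and $b_5$ (on $\mathcal{L}_2$, projecting to $\mathcal{L}_{10}(\infty)$). So the proposition already gives infinitely many crossings of each of $\mathcal{L}_6$, $\mathcal{L}_7$, $\mathcal{L}_{10}$. Next I would translate these crossings back to the transcendent itself. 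A crossing of $\mathcal{L}_6$ or $\mathcal{L}_7$ at complex time $\xi$ means, by the Laurent expansions in Section \ref{sec:movable}, that $y(x)$ has a simple pole at $\xi$; since there are infinitely many such crossings at values $\xi$ with $|\xi|\to\infty$, the solution has infinitely many poles. Similarly, a crossing of $\mathcal{L}_{10}$ at $\xi$ means $y(x)$ has a simple zero at $\xi$, and infinitely many such crossings give infinitely many zeros. The only remaining point is that a ``non-rational solution of $\PIII$'' and a ``non-rational solution of the system \eqref{eq:p3-system}'' are the same thing, which follows because $z$ is an explicit rational function of $y$, $y'$, $x$, so rationality of $y$ is equivalent to rationality of $(y,z)$.

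**The main obstacle** I anticipate is bookkeeping rather than conceptual: one must be careful that the ``pole lines'' $\mathcal{L}_6,\mathcal{L}_7$ in the non-autonomous fibre $\mathcal{F}(x)$ really do degenerate, as $x\to\infty$, to the exceptional lines over the base points $b_3,b_4$ of the autonomous fibre $\mathcal{F}(\infty)$ — i.e.\ that the limit set living in $\mathcal{F}(\infty)$ genuinely detects poles and zeros of the non-autonomous solution. This is exactly the content packaged into Proposition \ref{prop:intersections} together with the transversality of the vector field to the pole lines used in its proof, so at the level of this paper it may simply be cited. A secondary subtlety is confirming that $\mathcal{L}_{10}$, and not merely $\mathcal{L}_{11}$, records zeros of $y$: both give simple zeros of $y$ according to Section \ref{sec:movable}, so even though Proposition \ref{prop:intersections} is phrased only for $\mathcal{L}_{10}$, the zero-counting conclusion is unaffected.

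\begin{proof}
Let $(y(x),z(x))$ be a non-rational solution of \eqref{eq:p3-system}; equivalently, since $z$ is a rational function of $y$, $y'$, and $x$, the function $y(x)$ is a non-rational solution of $\PIII$. By Proposition \ref{prop:intersections}, the solution intersects each of the pole lines $\mathcal{L}_6$, $\mathcal{L}_7$, and $\mathcal{L}_{10}$ infinitely many times, at complex times $\xi$ with $|\xi|$ unbounded. As described in Section \ref{sec:movable}, at each intersection with $\mathcal{L}_6$ or $\mathcal{L}_7$ the transcendent $y(x)$ has a simple pole (with $z(\xi)=\sqrt{4\gamma}$ or $z(\xi)=-\sqrt{4\gamma}$ respectively), while at each intersection with $\mathcal{L}_{10}$ the transcendent $y(x)$ has a simple zero (and $z(x)$ a double pole). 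Hence $y(x)$ has infinitely many poles and infinitely many zeros.
\end{proof}
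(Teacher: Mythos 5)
Your proposal is correct and follows essentially the same route as the paper: cite Proposition \ref{prop:intersections} for the infinitely many crossings of $\mathcal{L}_6$, $\mathcal{L}_7$, $\mathcal{L}_{10}$, and then read off from the local (Laurent) analysis of Section \ref{sec:movable} that crossings of $\mathcal{L}_6\cup\mathcal{L}_7$ are poles of $y$ while crossings of $\mathcal{L}_{10}$ are zeroes of $y$. Your added remarks (the equivalence of rationality for $y$ versus $(y,z)$, and that $\mathcal{L}_{11}$ would also record zeroes) are harmless elaborations of what the paper leaves implicit.
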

\begin{proof}
It is enough to prove that a non-rational solution $y$ of (\ref{eq:p3-system}) has infinitely many poles and zeroes.
Notice that at the intersection points with $\mathcal{L}_6$ and $\mathcal{L}_7$, $y$ has a pole and $z$ a zero; at the intersection with $\mathcal{L}_{10}$, $y$ has a zero and $z$ a pole.
Since it is shown in Propostion \ref{prop:intersections} that $(y,z)$ intersects each of the lines  $\mathcal{L}_6$, $\mathcal{L}_7$, $\mathcal{L}_{10}$ infinitely many times, the statement is proved.
\end{proof}

\appendix


\section{Resolution of the Painlev\'e vector field}\label{sec:p3resolution}
In this section, we give an explicit construction of the space of
initial values for the system (\ref{eq:p3-system}).
This constructions consists of eleven successive blow-ups of points in $\mathbb{CP}^2$.

We use the following notation.
The coordinates in three affine charts of $\mathbb{CP}^2$ are denoted by $(y_{01},z_{01})$,  $(y_{02},z_{02})$, and  $(y_{03},z_{03})$.
The exceptional line obtained in the $n$-th blow-up is covered by two coordinate charts, denoted by $(y_{n1},z_{n1})$ and $(y_{n2},z_{n2})$.

In each of these charts, we write the system (\ref{eq:p3-system}) in the corresponding coordinates and look for \emph{base points} -- the points contained by infinitely many solutions.
We calculate the coordinates of the base points in local coordinates in the following way.
In each chart $(y_{nj},z_{nj})$, the system can be written in the form:
$$
y_{nj}'=\frac{P(y_{nj},z_{nj},x)}{Q(y_{nj},z_{nj},x)},
\quad
z_{nj}'=\frac{R(y_{nj},z_{nj},x)}{S(y_{nj},z_{nj},x)},
$$
for some polynomial expressions $P$, $Q$, $R$, $S$.
The uniqueness of solutions for a given initial value breaks down
whenever $P=Q=0$ or $R=S=0$. So,
solving these equations yields to base points.
We note that, after blowing up, any new base point can appear only on the exceptional line.

We remark that a base point in algebraic geometry is a joint point
contained in all curves from a given pencil.
The solutions of the autonomous system (\ref{eq:p3-system-auto}) are algebraic curves from the pencil (\ref{eq:pencil}), hence the notion of base points of a system of differential equations and base point of a pencil of curves coincide in the autonomous case.

\subsection{Affine charts}\label{sec:affine}

\paragraph{In the first affine chart,} the system has no base points.
The line $\mathcal{L}_0(y=0)$ concides with the set of unachievable points of the system (\ref{eq:p3-system}) in this chart.
\\
\paragraph{In the second affine chart,} we have the following coordinates:
\begin{gather*}
y_{02}=\frac1{y},
\quad
z_{02}=\frac{z}{y},
\\
y=\frac{1}{y_{02}},
\quad
z=\frac{z_{02}}{y_{02}}.
\end{gather*}
The line at the infinity is $\mathcal{L}_{\infty}:y_{02}=0$.
The line $\mathcal{L}_0$ is not visible in this chart.

The Painlev\'e vector field is:
$$
\begin{aligned}
y_{02}'&=-\frac{z_{02}}{2y_{02}}-\delta_1y_{02}^2+\frac{y_{02}}{x},
\\
z_{02}'&=2 \gamma -\delta_1 y_{02}z_{02} - \frac{z_{02}^2}{y_{02}^2}
+\frac{1}{x}(2 \alpha y_{02} + 2 \beta_1 y_{02}^3 + z_{02})
.
\end{aligned}
$$
It follows from the first equation that there is a base point at
$$
b_1\ :\ y_{02}=0,z_{02}=0.
$$
The flow of (\ref{eq:p3-system}) passes through $b_1$ at point $x$ if and only if $y$ has a pole at $x$ and $z$ either is regular at $x$ or has a pole of smaller order than $y$ at $x$.

\paragraph{In the third affine chart,} the coordinates are:
\begin{gather*}
y_{03}=\frac1{z},
\quad
z_{03}=\frac{y}{z},
\\
y=\frac{z_{03}}{y_{03}},
\quad
z=\frac{1}{y_{03}}.
\end{gather*}
The line at the infinity is $\mathcal{L}_{\infty}:y_{03}=0$.
The line $\mathcal{L}_0$ is given by $z_{03}=0$ in this chart.

The vector field is:
$$
\begin{aligned}
y_{03}'&=\frac{z_{03}}{2 y_{03}} - 2\gamma y_{03} z_{03}
-\frac{2}{x}\left(
  \beta_1\frac{ y_{03}^4}{z_{03}^2 }
  + \alpha y_{03}^2 
\right)
,
\\
z_{03}'&=\delta_1y_{03} - 2\gamma z_{03}^2 + \frac{z_{03}^2}{y_{03}^2}
-\frac{1}{x}
\left(
2\beta_1\frac{y_{03}^3}{z_{03}}+z_{03} + 2 \alpha y_{03} z_{03}
\right)
.
\end{aligned}
$$
Base point:
$$
b_2\ :\ y_{03}=0,z_{03}=0.
$$
We note that $b_2$ is the intersection point of $\mathcal{L}_0$ and $\mathcal{L}_{\infty}$.

The flow of (\ref{eq:p3-system}) passes through $b_2$ at point $x$ if and only if $z$ has a pole at $x$ and $y$ either is regular at $x$ or has a pole of smaller order than $z$ at $x$.

\subsection{Resolution at $b_1$}\label{sec:b1-blow}
\paragraph{First chart:}
\begin{gather*}
y_{11}=\frac{y_{02}}{z_{02}}=\frac1{z},
\qquad
z_{11}=z_{02}=\frac{z}{y},
\\
y=\frac{1}{y_{11}z_{11}},
\qquad
z=\frac{1}{y_{11}}.
\end{gather*}
The exceptional line is $\mathcal{L}_1:z_{11}=0$.
The preimage of $\mathcal{L}_{\infty}$ is $y_{11}=0$, while $\mathcal{L}_0$ is not visible in this chart.

The Jacobian is:
$$
\begin{aligned}
J_{11}=&\frac{\partial y_{11}}{\partial y}\cdot\frac{\partial z_{11}}{\partial z}-\frac{\partial z_{11}}{\partial y}\cdot\frac{\partial y_{11}}{\partial z}
=-y_{11}^3 z_{11}^2,
\\
J'_{11}=&\frac{2 y_{11}^3 z_{11}^2 \left(\beta_1 y_{11}^3 z_{11}^2+\alpha  y_{11}-1\right)}{x}+\frac{1}{2} y_{11} z_{11} \left(4 \delta_1 y_{11}^3 z_{11}^2+4 \gamma  y_{11}^2+1\right).
\end{aligned}
$$

The vector field is:
$$
\begin{aligned}
y_{11}'&=\frac{1-4\gamma y_{11}^2}{2 y_{11} z_{11}}
-\frac{2 y_{11}^2 (\alpha + \beta_1 y_{11}^2 z_{11}^2 )}{x}
,
\\
z_{11}'&=2 \gamma - \frac{1}{y_{11}^2} - \delta_1 y_{11} z_{11}^2
+\frac{z_{11}(1 + 2\alpha y_{11}  + 2\beta_1 y_{11}^3 z_{11}^2 )}{x}
.
\end{aligned}
$$
Base points:
\begin{gather*}
b_3\ :\ y_{11}=\frac{1}{\sqrt{4\gamma}}, z_{11}=0,
\\
b_4\ :\ y_{11}=-\frac{1}{\sqrt{4\gamma}}, z_{11}=0.
\end{gather*}
The flow of (\ref{eq:p3-system}) passes through $b_3$ at point $x$ if and only if $y$ has a pole at $x$ and $z(x)=\sqrt{4\gamma}$.
The flow passes through $b_4$ if and only if $y$ has a pole and $z$ takes the value  $-\sqrt{4\gamma}$.

The energy is:
$$
\begin{aligned}
E=&\frac{1}{4 y_{11}^4 z_{11}^2}-\frac{\gamma }{y_{11}^2 z_{11}^2}+\frac{\delta_1}{y_{11}},
\\
E'=&\frac{4 \beta_1 \delta_1 y_{11}^6 z_{11}^4+4 \alpha  \delta_1 y_{11}^4 z_{11}^2+2 \beta_1 y_{11}^3 z_{11}^2+4 \gamma  y_{11}^2+2 \alpha  y_{11}-1}{2 x y_{11}^4 z_{11}^2}.
\end{aligned}
$$

\paragraph{Second chart:}
\begin{gather*}
y_{12}=y_{02}=\frac1{y},
\qquad
z_{12}=\frac{z_{02}}{y_{02}}=z,
\\
y=\frac{1}{y_{12}},
\qquad
z=z_{12}.
\end{gather*}
The exceptional line is $\mathcal{L}_1:y_{12}=0$.
The preimages of $\mathcal{L}_{\infty}$ and $\mathcal{L}_0$ are not visible in this chart.

The Jacobian:
$$
\begin{aligned}
J_{12}=&\frac{\partial y_{12}}{\partial y}\cdot\frac{\partial z_{12}}{\partial z}-\frac{\partial z_{12}}{\partial y}\cdot\frac{\partial y_{12}}{\partial z}=-y_{12}^2,
\\
J_{12}'=&y_{12} \left(2 \delta_1 y_{12}^2+z_{12}\right)-\frac{2 y_{12}^2}{x}.
\end{aligned}
$$

The vector field is:
$$
\begin{aligned}
y_{12}'&=-\delta_1 y_{12}^2-\frac{z_{12}}{2}+\frac{y_{12}}{x},
\\
z_{12}'&=\frac{4 \gamma - z_{12}^2}{2 y_{12}}
+\frac{2 (\alpha + \beta_1 y_{12}^2 )}{x}.
\end{aligned}
$$
Base points visible in this chart are again: 
$$
b_3(y_{12}=0,z_{12}=\sqrt{4\gamma}),
\quad
b_4(y_{12}=0,z_{12}=-\sqrt{4\gamma}).
$$

The energy:
$$
\begin{aligned}
E=&-\frac{\gamma }{y_{12}^2}+\frac{z_{12}^2}{4 y_{12}^2}+\delta_1 z_{12},
\\
E'=&\frac{4 \left(\gamma +\delta_1 y_{12}^2 \left(\alpha +\beta_1 y_{12}^2\right)\right)+2 z_{12} \left(\alpha +\beta_1 y_{12}^2\right)-z_{12}^2}{2 x y_{12}^2}.
\end{aligned}
$$

\subsection{Resolution at $b_2$}\label{sec:b2-blow}
\paragraph{First chart:}
\begin{gather*}
y_{21}=\frac{y_{03}}{z_{03}}=\frac1y,
\qquad
z_{21}=z_{03}=\frac{y}{z},
\\
y=\frac{1}{y_{21}},
\qquad
z=\frac{1}{y_{21}z_{21}}.
\end{gather*}
The exceptional line is $\mathcal{L}_2:z_{21}=0$.
The preimage of $\mathcal{L}_0$ is not visible in this chart, while the preimage of $\mathcal{L}_{\infty}$ is given by $y_{03}=0$.

The Jacobian:
$$
\begin{aligned}
J_{21}=&
\frac{\partial y_{21}}{\partial y}\cdot\frac{\partial z_{21}}{\partial z}-\frac{\partial z_{21}}{\partial y}\cdot\frac{\partial y_{21}}{\partial z}
=
y_{21}^3z_{21}^2
,
\\
J_{21}'=&
-\frac{y_{21}^3 z_{21}^2 \left(4 \beta_1 y_{21}^3 z_{21}+4 \alpha  y_{21} z_{21}-1\right)}{x}-\frac{1}{2} y_{21} z_{21} \left(2 \delta_1 y_{21}^3 z_{21}+8 \gamma  y_{21}^2 z_{21}^2-1\right)
.
\end{aligned}
$$

The vector field is:
$$
\begin{aligned}
y_{21}'&=-\delta_1 y_{21}^2 -\frac{1}{2 y_{21} z_{21}}+\frac{y_{21}}{x},
\\
z_{21}'&=\frac{1}{y_{21}^2} - 2 \gamma z_{21}^2 + \delta_1 y_{21} z_{21}
-\frac{z_{21}}{x} (1 + 2\alpha y_{21} z_{21}  +     2 \beta_1 y_{21}^3 z_{21}).
\end{aligned}
$$
No base points.

The energy:
$$
\begin{aligned}
E=&
\frac{1}{4 y_{21}^4 z_{21}^2}-\frac{\gamma }{y_{21}^2}+\frac{\delta_1}{y_{21} z_{21}}
,
\\
E'=&
\frac{4 \beta_1 \delta_1 y_{21}^6 z_{21}^2+4 \alpha  \delta_1 y_{21}^4 z_{21}^2+2 \beta_1 y_{21}^3 z_{21}+4 \gamma  y_{21}^2 z_{21}^2+2 \alpha  y_{21} z_{21}-1}{2 x y_{21}^4 z_{21}^2}
.
\end{aligned}
$$

\paragraph{Second chart:}
\begin{gather*}
y_{22}=y_{03}=\frac1z,
\qquad
z_{22}=\frac{z_{03}}{y_{03}}=y,
\\
y=z_{22},
\qquad
z=\frac{1}{y_{22}}.
\end{gather*}
The exceptional line is $\mathcal{L}_2:y_{22}=0$. The preimage of $\mathcal{L}_0$ is given by $z_{22}=0$, while $\mathcal{L}_{\infty}$ is not visible in this chart.

The Jacobian:
$$
\begin{aligned}
J_{22}=&
\frac{\partial y_{22}}{\partial y}\cdot\frac{\partial z_{22}}{\partial z}-\frac{\partial z_{22}}{\partial y}\cdot\frac{\partial y_{22}}{\partial z}
=y_{22}^2
,
\\
J_{22}'=&
y_{22} z_{22} \left(1-4 \gamma  y_{22}^2\right)-\frac{4 y_{22}^3 \left(\beta_1+\alpha  z_{22}^2\right)}{x z_{22}^2}
.
\end{aligned}
$$

The vector field is:
$$
\begin{aligned}
y_{22}'&= \frac{z_{22}}{2} - 2 \gamma y_{22}^2 z_{22}
-\frac{2 y_{22}^2 ( \alpha z_{22}^2+ \beta_1)}{z_{22}^2 x}
,
\\
z_{22}'&=\delta_1+\frac{z_{22}^2}{2 y_{22}}-\frac{z_{22}}{x}.
\end{aligned}
$$
Base point:
$$
b_5\ :\ y_{22}=0,z_{22}=0.
$$
This point belongs to the preimage of $\mathcal{L}_0$.

The energy:
$$
\begin{aligned}
E=&\frac{z_{22}^2}{4 y_{22}^2}+\frac{\delta_1}{y_{22}}-\gamma  z_{22}^2
,
\\
E'=&
\frac{4 \beta_1 \delta_1 y_{22}^2+z_{22}^4 \left(4 \gamma  y_{22}^2+2 \alpha  y_{22}-1\right)+2 y_{22} z_{22}^2 (\beta_1+2 \alpha  \delta_1 y_{22})}{2 x y_{22}^2 z_{22}^2}
.
\end{aligned}
$$

\subsection{Resolution at $b_3$}\label{sec:b3-blow}
\paragraph{First chart:}
\begin{gather*}
y_{31}=\frac{y_{12}}{z_{12}-\sqrt{4\gamma}}=\frac{1}{y(z-\sqrt{4\gamma})}
,
\qquad
z_{31}=z_{12}-\sqrt{4\gamma}=z-\sqrt{4\gamma},
\\
y=\frac{1}{y_{31}z_{31}},
\qquad
z=z_{31}+\sqrt{4\gamma}.
\end{gather*}
The exceptional line is $\mathcal{L}_3:z_{31}=0$.
The preimage of $\mathcal{L}_1$ is $y_{31}=0$.

The Jacobian:
$$
\begin{aligned}
J_{31}=&
\frac{\partial y_{31}}{\partial y}\cdot\frac{\partial z_{31}}{\partial z}-\frac{\partial z_{31}}{\partial y}\cdot\frac{\partial y_{31}}{\partial z}
=
-y_{31}^2 z_{31},
\\
J_{31}'=&
\frac{2 \beta_1 y_{31}^4 z_{31}^2}{x}+\frac{2 \alpha  y_{31}^2}{x}-\frac{2 y_{31}^2 z_{31}}{x}+2 \delta_1 y_{31}^3 z_{31}^2+\frac{y_{31} z_{31}}{2}
.
\end{aligned}
$$

The vector field is:
$$
\begin{aligned}
y_{31}'=&
\frac{x\sqrt{\gamma}-2 \alpha  y_{31}}{x z_{31}}
-\frac{2 \beta_1 y_{31}^3 z_{31}}{x}+\frac{y_{31}}{x}-\delta_1 y_{31}^2 z_{31}
 ,
\\
z_{31}'=&
\frac{4 y_{31} \left(\alpha +\beta_1 y_{31}^2 z_{31}^2\right)-x \left(4 \sqrt{\gamma }+z_{31}\right)}{2 x y_{31}}
.
\end{aligned}
$$
Base point:
$$
b_6\ :\ y_{31}=\frac{x\sqrt{\gamma}}{2\alpha},\ z_{31}=0.
$$

The energy:
$$
\begin{aligned}
E=&
\frac{\left(2 \sqrt{\gamma }+z_{31}\right)^2}{4 y_{31}^2 z_{31}^2}-\frac{\gamma }{y_{31}^2 z_{31}^2}+\delta_1 \left(2 \sqrt{\gamma }+z_{31}\right)
,
\\
E'=&
\frac{2 \alpha  \delta_1}{x}+\frac{2 \beta_1 \sqrt{\gamma }}{x}+\frac{2 \alpha  \sqrt{\gamma }}{x y_{31}^2 z_{31}^2}+\frac{2 \beta_1 \delta_1 y_{31}^2 z_{31}^2}{x}+\frac{\alpha }{x y_{31}^2 z_{31}}-\frac{2 \sqrt{\gamma }}{x y_{31}^2 z_{31}}-\frac{1}{2 x y_{31}^2}+\frac{\beta_1 z_{31}}{x}
.
\end{aligned}
$$

\paragraph{Second chart:}
\begin{gather*}
y_{32}=y_{12}=\frac1y,
\qquad
z_{32}=\frac{z_{12}-\sqrt{4\gamma}}{y_{12}}=y(z-\sqrt{4\gamma}),
\\
y=\frac{1}{y_{32}},
\qquad
z=y_{32}z_{32}+\sqrt{4\gamma}.
\end{gather*}
The exceptional line is $\mathcal{L}_3:y_{32}=0$.
$\mathcal{L}_1$ is not visible in this chart.

The Jacobian:
$$
\begin{aligned}
J_{32}=&
\frac{\partial y_{32}}{\partial y}\cdot\frac{\partial z_{32}}{\partial z}-\frac{\partial z_{32}}{\partial y}\cdot\frac{\partial y_{32}}{\partial z}
=
-y_{32},
\\
J_{32}'=&
\sqrt{\gamma }-\frac{y_{32}}{x}+\delta_1 y_{32}^2+\frac{y_{32} z_{32}}{2}
.
\end{aligned}
$$

The vector field is:
$$
\begin{aligned}
y_{32}'&=
-\sqrt{\gamma }+\frac{y_{32}}{x}-\delta_1 y_{32}^2-\frac{y_{32} z_{32}}{2}
,
\\
z_{32}'&=
\frac{2 \alpha-\sqrt{\gamma }x z_{32} }{x y_{32}}
+\frac{2 \beta_1 y_{32}}{x}-\frac{z_{32}}{x}+\delta_1 y_{32} z_{32}
.
\end{aligned}
$$
The only base point on the exceptional line is again 
$b_6\left(y_{32}=0,z_{32}=\frac{2\alpha}{x\sqrt{\gamma}}\right)$.

The energy:
$$
\begin{aligned}
E=&
2 \sqrt{\gamma } \delta_1+\frac{\sqrt{\gamma } z_{32}}{y_{32}}+\delta_1 y_{32} z_{32}+\frac{z_{32}^2}{4}
,
\\
E'=&
\frac{2 \alpha  \delta_1}{x}+\frac{2 \beta_1 \sqrt{\gamma }}{x}+\frac{2 \alpha  \sqrt{\gamma }}{x y_{32}^2}+\frac{2 \beta_1 \delta_1 y_{32}^2}{x}+\frac{\alpha  z_{32}}{x y_{32}}+\frac{\beta_1 y_{32} z_{32}}{x}-\frac{2 \sqrt{\gamma } z_{32}}{x y_{32}}-\frac{z_{32}^2}{2 x}
.
\end{aligned}
$$

\subsection{Resolution at $b_4$}\label{sec:b4-blow}
We note that this is the same as the resolution at $b_3$, when $\sqrt{\gamma}$ is replaced by $-\sqrt{\gamma}$.
We get a base point $b_7$ on the exceptional line $\mathcal{L}_4$.
$b_7$ does not belong to the preimage of $\mathcal{L}_1$.

\subsection{Resolution at $b_5$}\label{sec:b5-blow}

\paragraph{First chart:}
\begin{gather*}
y_{51}=\frac{y_{22}}{z_{22}}=\frac{1}{yz},
\qquad
z_{51}=z_{22}=y,
\\
y=z_{51},
\qquad
z=\frac{1}{y_{51}z_{51}}.
\end{gather*}
The exceptional line is $\mathcal{L}_5:z_{51}=0$.
In this chart, line $\mathcal{L}_2$ is $y_{51}=0$, while $\mathcal{L}_0$ is not visible.

The Jacobian:
$$
\begin{aligned}
J_{51}=&
\frac{\partial y_{51}}{\partial y}\cdot\frac{\partial z_{51}}{\partial z}-\frac{\partial z_{51}}{\partial y}\cdot\frac{\partial y_{51}}{\partial z}
=
y_{51}^2 z_{51},
\\
J_{51}'=&
-\frac{y_{51}^2 \left(4 \beta_1 y_{51}+4 \alpha  y_{51} z_{51}^2-z_{51}\right)}{x}-\frac{1}{2} y_{51} \left(8 \gamma  y_{51}^2 z_{51}^3+2 \delta_1 y_{51}-z_{51}\right)
.
\end{aligned}
$$

The vector field is:
$$
\begin{aligned}
y_{51}'&=-\frac{\delta_1 y_{51}}{z_{51}} - 2 \gamma y_{51}^2 z_{51}^2
+\frac{y_{51}}{x} - \frac{2\alpha y_{51}^2 z_{51} }{x} - \frac{2 \beta_1 y_{51}^2}{z_{51} x}
,
\\
z_{51}'&=\frac{z_{51}}{2 y_{51}}+\delta_1-\frac{z_{51}}{x}.
\end{aligned}
$$
Base points:
$$
\begin{aligned}
&b_8\ :\ y_{51}=0,\  z_{51}=0,
\\
&
b_{11}\ :\ y_{51}=-\frac{x\delta_1}{2\beta_1},\ z_{51}=0. 
\end{aligned}
$$
The point $b_8$ is the intersection of $\mathcal{L}_5$ and the preimage of $\mathcal{L}_2$.

The energy:
$$
\begin{aligned}
E=&
\frac{1}{4 y_{51}^2}+\frac{\delta_1}{y_{51} z_{51}}-\gamma  z_{51}^2
,
\\
E'=&
\frac{4 \beta_1 \delta_1 y_{51}^2+4 \gamma  y_{51}^2 z_{51}^4+z_{51}^2 \left(4 \alpha  \delta_1 y_{51}^2-1\right)+2 \alpha  y_{51} z_{51}^3+2 \beta_1 y_{51} z_{51}}{2 x y_{51}^2 z_{51}^2}
.
\end{aligned}
$$

\paragraph{Second chart:}
\begin{gather*}
y_{52}=y_{22}=\frac1{z},
\qquad
z_{52}=\frac{z_{22}}{y_{22}}=yz,
\\
y=y_{52}z_{52},
\qquad
z=\frac{1}{y_{52}}.
\end{gather*}
The exceptional line is $\mathcal{L}_5:y_{52}=0$.
The line $\mathcal{L}_2$ is not visible in this chart, while the preimage of $\mathcal{L}_0$ is $z_{52}=0$.

The Jacobian:
$$
\begin{aligned}
J_{52}=&
\frac{\partial y_{52}}{\partial y}\cdot\frac{\partial z_{52}}{\partial z}-\frac{\partial z_{52}}{\partial y}\cdot\frac{\partial y_{52}}{\partial z}
=
y_{52}
.
\end{aligned}
$$

The vector field is:
$$
\begin{aligned}
y_{52}'&=\frac{y_{52} z_{52}}{2} - 2\gamma y_{52}^3 z_{52}
-\frac{2\alpha y_{52}^2 }{x} - \frac{2 \beta_1}{z_{52}^2 x}
,
\\
z_{52}'&=2\gamma y_{52}^2z_{52}^2+\frac{ \delta_1}{y_{52}}
+\frac{2 \beta_1}{y_{52} z_{52} x} -\frac{z_{52}}{x} + \frac{2\alpha y_{52} z_{52}}{x}
.
\end{aligned}
$$
The base point is again $b_{11}\left(y_{52}=0,z_{52}=-\dfrac{2\beta_1}{x\delta_1}\right)$.
As $x$ grows, the point $b_{11}$ approaches the intersection point of $\mathcal{L}_5$ and the preimage of  $\mathcal{L}_0$.
However, the limit point $(y_{52}=0,z_{52}=0)$ is not a base point of the limit system (\ref{eq:p3-system-auto}).

The energy:
$$
\begin{aligned}
E=&
-\gamma y_{52}^2 z_{52}^2+\frac{\delta_1}{y_{52}}+\frac{z_{52}^2}{4}
,
\\
E'=&
\frac{4 \beta_1 \delta_1+4 \gamma  y_{52}^4 z_{52}^4+2 \alpha  y_{52}^3 z_{52}^4-y_{52}^2 \left(z_{52}^4-4 \alpha  \delta_1 z_{52}^2\right)+2 \beta_1 y_{52} z_{52}^2}{2 x y_{52}^2 z_{52}^2}
.
\end{aligned}
$$

\subsection{Resolution at $b_6$}\label{sec:b6-blow}

\paragraph{First chart:}
$$
\begin{aligned}
y_{61}=&y_{32}\left(z_{32}-\frac{2\alpha}{x\sqrt{\gamma}}\right)^{-1}
=\frac1y\left(y(z-\sqrt{4\gamma})-\frac{2\alpha}{x\sqrt{\gamma}}\right)^{-1}
,
\\
z_{61}=&z_{32}-\frac{2\alpha}{x\sqrt{\gamma}}=y(z-\sqrt{4\gamma})-\frac{2\alpha}{x\sqrt{\gamma}}
,
\\
y=&\frac{1}{y_{61}z_{61}},
\\
z=&
y_{61}z_{61}\left(z_{61}+\frac{2\alpha}{x\sqrt{\gamma}}\right)+\sqrt{4\gamma}
.
\end{aligned}
$$
The exceptional line is $\mathcal{L}_6:z_{61}=0$.
The preimage of $\mathcal{L}_3$ is $y_{61}=0$.

The Jacobian:
$$
\begin{aligned}
J_{61}=&
\frac{\partial y_{61}}{\partial y}\cdot\frac{\partial z_{61}}{\partial z}-\frac{\partial z_{61}}{\partial y}\cdot\frac{\partial y_{61}}{\partial z}
=
-y_{61}
.
\end{aligned}
$$

The vector field is:
$$
\begin{aligned}
y_{61}'=&
-\frac{2 \alpha  \delta_1 y_{61}^2}{\sqrt{\gamma } x}-\frac{2 \beta_1 y_{61}^2}{x}-\frac{\alpha  y_{61}}{\sqrt{\gamma } x}+\frac{2 y_{61}}{x}-2 \delta_1 y_{61}^2 z_{61}-\frac{y_{61} z_{61}}{2}
,
\\
z_{61}'=&
\frac{2 \alpha  \delta_1 y_{61} z_{61}}{\sqrt{\gamma } x}+\frac{2 \beta_1 y_{61} z_{61}}{x}-\frac{z_{61}}{x}-\frac{\sqrt{\gamma }}{y_{61}}+\delta_1 y_{61} z_{61}^2
.
\end{aligned}
$$
No base points.

The energy:
$$
\begin{aligned}
E=&
\frac{\alpha ^2}{\gamma  x^2}+\frac{z_{61} (\alpha +2 \alpha  \delta_1 y_{61})}{\sqrt{\gamma } x}+\frac{2 \alpha }{x y_{61} z_{61}}+\frac{\sqrt{\gamma } (2 \delta_1 y_{61}+1)}{y_{61}}+z_{61}^2 \left(\delta_1 y_{61}+\frac{1}{4}\right)
,
\\
E'=&
-\frac{2 \alpha ^2}{\gamma  x^3}
+\frac{2 \alpha ^2}{\sqrt{\gamma } x^2 y_{61} z_{61}}
+\frac{2 \alpha  \beta_1 y_{61} z_{61}}{\sqrt{\gamma } x^2}
-\frac{4 \alpha }{x^2 y_{61} z_{61}}-\frac{2 \alpha  z_{61}}{\sqrt{\gamma } x^2}
+\frac{2 \alpha  \delta_1}{x}
+\frac{2 \beta_1 \sqrt{\gamma }}{x}
\\&
+\frac{2 \alpha  \sqrt{\gamma }}{x y_{61}^2 z_{61}^2}
+\frac{2 \beta_1 \delta_1 y_{61}^2 z_{61}^2}{x}+\frac{\alpha }{x y_{61}}
-\frac{2 \sqrt{\gamma }}{x y_{61}}
+\frac{\beta_1 y_{61} z_{61}^2}{x}
-\frac{z_{61}^2}{2 x}
,
\\
EJ_{61}=&-\sqrt{\gamma }
-\frac{2 \alpha }{x z_{61}}
-\frac{\alpha ^2 y_{61}}{\gamma  x^2}
-\frac{2 \alpha  \delta_1 y_{61}^2 z_{61}}{\sqrt{\gamma } x}
-\frac{\alpha  y_{61} z_{61}}{\sqrt{\gamma } x}
-\delta_1 y_{61}^2 z_{61}^2
-2 \sqrt{\gamma } \delta_1 y_{61}
-\frac{y_{61} z_{61}^2}{4}
.
\end{aligned}
$$

\paragraph{Second chart:}
$$
\begin{aligned}
y_{62}=&y_{32}=\frac1y,
\\
z_{62}=&
\frac{z_{32}-\frac{2\alpha}{x\sqrt{\gamma}}}{y_{32}}
=y\left(y(z-\sqrt{4\gamma})-\frac{2\alpha}{x\sqrt{\gamma}}\right)
\\
y=&\frac1{y_{62}}
,
\\
z=&y_{62}
\left(
y_{62}z_{62}+\frac{2\alpha}{x\sqrt{\gamma}}
\right)
+\sqrt{4\gamma}
.
\end{aligned}
$$
The exceptional line is $\mathcal{L}_6:y_{62}=0$.
$\mathcal{L}_3$ is not visible in this chart.

The Jacobian:
$$
\begin{aligned}
J_{62}=&
\frac{\partial y_{62}}{\partial y}\cdot\frac{\partial z_{62}}{\partial z}-\frac{\partial z_{62}}{\partial y}\cdot\frac{\partial y_{62}}{\partial z}
=
-1
.
\end{aligned}
$$

The vector field is:
$$
\begin{aligned}
y_{62}'=&
-\sqrt{\gamma }-\frac{\alpha  y_{62}}{\sqrt{\gamma } x}+\frac{y_{62}}{x}-\delta_1 y_{62}^2-\frac{y_{62}^2 z_{62}}{2}
,
\\
z_{62}'=&
\frac{2 \alpha  \delta_1}{\sqrt{\gamma } x}+\frac{2 \beta_1}{x}+\frac{\alpha  z_{62}}{\sqrt{\gamma } x}-\frac{2 z_{62}}{x}+\frac{y_{62} z_{62}^2}{2}+2 \delta_1 y_{62} z_{62}
.
\end{aligned}
$$
No base points.

The energy:
$$
\begin{aligned}
E=&
2 \sqrt{\gamma } \delta_1+\frac{\alpha ^2}{\gamma  x^2}+\frac{2 \alpha  \delta_1 y_{62}}{\sqrt{\gamma } x}+\frac{2 \alpha }{x y_{62}}+\frac{\alpha  y_{62} z_{62}}{\sqrt{\gamma } x}+\frac{y_{62}^2 z_{62}^2}{4}+\delta_1 y_{62}^2 z_{62}+\sqrt{\gamma } z_{62}
.
\end{aligned}
$$

\subsection{Resolution at $b_7$}\label{sec:b7-blow}

This resolution is equivalent to the one at $b_6$, after replacing $\sqrt{\gamma}$ by $-\sqrt{\gamma}$.
We denote by $\mathcal{L}_7$ the exceptional line obtained by blowing up $b_7$.

No new base points occur after this resolution.

\subsection{Resolution at $b_8$}\label{sec:b8-blow}

\paragraph{First chart:}
\begin{gather*}
y_{81}=\frac{y_{51}}{z_{51}}=\frac{1}{y^2z},
\qquad
z_{81}=z_{51}=y,
\\
y=z_{81},
\qquad
z=\frac{1}{y_{81}z_{81}^2}.
\end{gather*}
The exceptional line is $\mathcal{L}_8:z_{81}=0$.
The preimage of $\mathcal{L}_2$ is $y_{81}=0$, while $\mathcal{L}_5$ is not visible in this chart.

The vector field is:
$$
\begin{aligned}
y_{81}'&=-\frac{1 + 4 \delta_1 y_{81}}{2z_{81}} - 2 \gamma y_{81}^2 z_{81}^3
-\frac{2 y_{81} (\alpha y_{81} z_{81}^2  + \beta_1 y_{81}-1)}{x}
,
\\
z_{81}'&=\delta_1 + \frac{1}{2 y_{81}}-\frac{z_{81}}{x}.
\end{aligned}
$$
Base point:
$$
b_9\ :\ y_{81}=-\frac{1}{4\delta_1}, z_{81}=0.
$$

The energy:
$$
\begin{aligned}
E=&
\frac{1}{4 y_{81}^2 z_{81}^2}+\frac{\delta_1}{y_{81} z_{81}^2}-\gamma  z_{81}^2
,
\\
E'=&
\frac{4 y_{81}^2 \left(\beta_1 \delta_1+\gamma  z_{81}^4+\alpha  \delta_1 z_{81}^2\right)+2 y_{81} \left(\beta_1+\alpha  z_{81}^2\right)-1}{2 x y_{81}^2 z_{81}^2}
.
\end{aligned}
$$

\paragraph{Second chart:}
\begin{gather*}
y_{82}=y_{51}=\frac{1}{yz},
\qquad
z_{82}=\frac{z_{51}}{y_{51}}=y^2z,
\\
y=y_{82}z_{82},
\qquad
z=\frac{1}{y_{82}^2z_{82}}.
\end{gather*}
The exceptional line is $\mathcal{L}_8:y_{82}=0$.
The preimage of $\mathcal{L}_5$ is $z_{82}=0$, while $\mathcal{L}_2$ is not visible in this chart.

The Jacobian:
$$
\begin{aligned}
J_{82}=&
\frac{\partial y_{82}}{\partial y}\cdot\frac{\partial z_{82}}{\partial z}-\frac{\partial z_{82}}{\partial y}\cdot\frac{\partial y_{82}}{\partial z}
=
y_{82}^2 z_{82},
\\
J_{82}'=&
-\frac{2 \alpha  y_{82}^4 z_{82}^2}{x}-\frac{2 \beta_1 y_{82}^2}{x}-2 \gamma  y_{82}^5 z_{82}^3+\frac{y_{82} z_{82}}{2},
\\
\frac{J_{82}'}{J_{82}}=&
-\frac{2 \alpha  y_{82}^2 z_{82}}{x}-\frac{2 \beta_1}{x z_{82}}-2 \gamma  y_{82}^3 z_{82}^2+\frac{1}{2 y_{82}}.
\end{aligned}
$$

The vector field is:
$$
\begin{aligned}
y_{82}'&=-\frac{\delta_1x+2\beta_1 y_{82}}{z_{82}} - 2 \gamma y_{82}^4 z_{82}^2
\frac{y_{82}}{x} - \frac{2\alpha y_{82}^3 z_{82}}{x} 
,
\\
z_{82}'&=\frac{4 \delta_1 + z_{82}}{2y_{82}} + 2 \gamma y_{82}^3 z_{82}^3
-\frac{2 z_{82}}{x} + \frac{2\alpha y_{82}^2 z_{82}^2}{x} + \frac{2 \beta}{x}
.
\end{aligned}
$$
The only base point in this chart is again $b_9(y_{82}=0,z_{82}=-4\delta_1)$.

The energy:
$$
\begin{aligned}
E=&-\gamma y_{82}^2 z_{82}^2+\frac{\delta_1}{y_{82}^2 z_{82}}+\frac{1}{4 y_{82}^2}
,
\\
E'=&
\frac{4 \beta_1 \delta_1+4 \gamma  y_{82}^4 z_{82}^4+2 \alpha  y_{82}^2 z_{82}^3+z_{82}^2 \left(4 \alpha  \delta_1 y_{82}^2-1\right)+2 \beta_1 z_{82}}{2 x y_{82}^2 z_{82}^2}
,
\\
\frac{E'}{E}=&
\frac{2 \left(4 \beta_1 \delta_1+4 \gamma  y_{82}^4 z_{82}^4+2 \alpha  y_{82}^2 z_{82}^3+z_{82}^2 \left(4 \alpha  \delta_1 y_{82}^2-1\right)+2 \beta_1 z_{82}\right)}{x z_{82} \left(4 \delta_1-4 \gamma  y_{82}^4 z_{82}^3+z_{82}\right)}
,
\\
EJ_{82}=&
\delta_1-\gamma  y_{82}^4 z_{82}^3+\frac{z_{82}}{4}.
\end{aligned}
$$

\subsection{Resolution at $b_9$}\label{sec:b9-blow}

\paragraph{First chart:}
\begin{gather*}
y_{91}=\frac{y_{81}+\frac{1}{4\delta_1}}{z_{81}}=\frac1y\left(\frac{1}{y^2z}+\frac{1}{4\delta_1}\right),
\qquad
z_{91}=z_{81}=y,
\\
y=z_{91},
\qquad
z=\frac{4 \delta_1 }{z_{91}^2 (4 \delta_1  y_{91} z_{91}-1)}.
\end{gather*}
The exceptional line is $\mathcal{L}_9:z_{91}=0$.
The line $\mathcal{L}_8$ is not visible in this chart.

The vector field is:
$$
\begin{aligned}
y_{91}'=&-\frac{\gamma}{8\delta_1^2}  z_{91}^2+ \frac{\gamma}{\delta_1}  y_{91} z_{91}^3-2 \gamma y_{91}^2 z_{91}^4
+\frac{\delta_1 y_{91} (1 - 12\delta_1 y_{91} z_{91})}{z_{91} ( 4\delta_1 y_{91} z_{91}-1)}
\\&
+\frac{(3\delta_1+\beta_1) y_{91}}{x} - \frac{2 y_{91}^2 z_{91}^3 \alpha}{x} 
-  z_{91} \left(\frac{2 y_{91}^2 \beta_1}{x} +\frac{\alpha}{8 x \delta_1^2}\right) 
+ \frac{y_{91} z_{91}^2 \alpha}{x \delta_1}
\\
&-\frac{\beta_1 + 4 \delta_1}{8 z_{91} x \delta_1^2}
\\
z_{91}'&=\delta_1+\frac{2\delta_1}{ 4 \delta_1 y_{91} z_{91}-1}
-\frac{z_{91}}{x}.
\end{aligned}
$$
Base point:
$$
b_{10}\ :\ y_{91}=-\frac{\beta_1+4\delta_1}{8x\delta_1^3},\ z_{91}=0.
$$

\paragraph{Second chart:}
$$
\begin{aligned}
y_{92}=&y_{81}+\frac{1}{4\delta_1}=\frac{1}{y^2z}+\frac{1}{4\delta_1},
\\
z_{92}=&\frac{z_{81}}{y_{81}+\frac{1}{4\delta_1}}=\frac{y}{\frac{1}{y^2z}+\frac{1}{4\delta_1}},
\\
y=&y_{92}z_{92},
\\
z=&\frac{4\delta_1}{y_{92}^2z_{92}^2(4\delta_1 y_{92}-1)}.
\end{aligned}
$$
The exceptional line is $\mathcal{L}_9:y_{92}=0$.
The preimage of line $\mathcal{L}_8$ is $z_{92}=0$, and the preimage of $\mathcal{L}_2$ is $y_{92}=\frac{1}{4\delta_1}$.

The Jacobian:
$$
\begin{aligned}
J_{92}=&
\frac{\partial y_{92}}{\partial y}\cdot\frac{\partial z_{92}}{\partial z}-\frac{\partial z_{92}}{\partial y}\cdot\frac{\partial y_{92}}{\partial z}
=\frac{y_{92} z_{92}^2 (1-4 \delta_1 y_{92})^2}{16 \delta_1^2},
\\
J_{92}'=&
\frac{z_{92}^2(1-4 \delta_1 y_{92} )^2}{128 \delta_1^4 x} \times
\\
&\qquad\times
 \left(\beta_1+4 \delta_1
 -16\delta_1^2 y_{92}^4 z_{92}^2( \gamma   x y_{92} z_{92}+\alpha  )
 +\gamma  x y_{92}^3 z_{92}^3
 -y_{92}^2 \left(16 \beta_1 \delta_1^2-\alpha  z_{92}^2\right)\right),
\\
\frac{J_{92}'}{J_{92}}=&
-\frac{2 \alpha  y_{92}^3 z_{92}^2}{x}+\frac{\beta_1}{8 \delta_1^2 x y_{92}}-\frac{2 \beta_1 y_{92}}{x}+\frac{1}{2 \delta_1 x y_{92}}+\frac{\alpha  y_{92} z_{92}^2}{8 \delta_1^2 x}-2 \gamma  y_{92}^4 z_{92}^3+\frac{\gamma  y_{92}^2 z_{92}^3}{8 \delta_1^2}
.
\end{aligned}
$$

The vector field is:
$$
\begin{aligned}
y_{92}'=&-\frac{2\delta_1}{z_{92}} - \frac{\gamma}{8\delta_1^2} y_{92}^3 z_{92}^3 +
 \frac{\gamma}{\delta_1}  y_{92}^4 z_{92}^3- 2\gamma y_{92}^5 z_{92}^3
\\& 
+\frac{(2\delta_1+\beta) y_{92}}{\delta_1x} - 
\frac{2 y_{92}^4 z_{92}^2 \alpha}{x} 
- \frac{2 y_{92}^2 \beta_1}{x}
- \frac{ y_{92}^2 z_{92}^2 \alpha}{8 x \delta_1^2}
- \frac{\beta_1+4\delta_1}{ 8 x \delta_1^2} 
+ \frac{y_{92}^3 z_{92}^2 \alpha}{ x \delta_1},
\\
z_{92}'=&\frac{\gamma}{8\delta_1^2}y_{92}^2 z_{92}^4- \frac{\gamma}{\delta_1} y_{92}^3 z_{92}^4 + 
 2\gamma y_{92}^4 z_{92}^4
\\& 
-\frac{(3\delta_1+\beta_1) z_{92}}{\delta_1x} + \frac{2 y_{92}^3 z_{92}^3 \alpha}{x}
+  y_{92} \left(\frac{2 z_{92} \beta_1}{x} + \frac{z_{92}^3 \alpha}{8 x \delta_1^2}\right)
- \frac{y_{92}^2 z_{92}^3 \alpha}{x \delta_1} 
\\
&
-\frac{z_{92} \beta_1 + 4 z_{92} \delta_1 + 24 x \delta_1^3}{2 x \delta_1(1 - 4 y_{92} \delta_1)}
-\frac{z_{92} \beta_1 + 4 z_{92} \delta_1 + 8 x \delta_1^3}{8 y_{92} x \delta_1^2 ( 4 y_{92} \delta_1-1)}.
\end{aligned}
$$
The base point is again
$
b_{10}\left(y_{92}=0,z_{92}=-\dfrac{8x\delta_1^3}{\beta_1+4\delta_1}\right).
$

The energy:
$$
\begin{aligned}
E=&\frac{16 \delta_1^3-16 \gamma  \delta_1^2 y_{92}^5 z_{92}^4+8 \gamma  \delta_1 y_{92}^4 z_{92}^4-\gamma  y_{92}^3 z_{92}^4}{y_{92} z_{92}^2 (1-4 \delta_1 y_{92})^2},
\\
E'=&
\frac{2}{x y_{92}^2 z_{92}^2 (1-4 \delta_1 y_{92})^2}
\times
\\&
\quad
\times
  \left(-\delta_1 (\beta_1+4 \delta_1)+16 \gamma  \delta_1^2 y_{92}^6 z_{92}^4-8 \gamma  \delta_1 y_{92}^5 z_{92}^4
  +y_{92}^4 \left(\gamma  z_{92}^4+16 \alpha  \delta_1^3 z_{92}^2\right)
  \right.
  \\&
  \qquad\qquad\left.
  +y_{92}^2 \left(16 \beta_1 \delta_1^3-\alpha  \delta_1 z_{92}^2\right)\right)
,
\\
\frac{E'}{E}=&
\frac2{x y_{92} \left(-16 \delta_1^3+16 \gamma  \delta_1^2 y_{92}^5 z_{92}^4-8 \gamma  \delta_1 y_{92}^4 z_{92}^4+\gamma  y_{92}^3 z_{92}^4\right)}
\times
\\
&\quad\times
\left(\delta_1 (\beta_1+4 \delta_1)-16 \gamma  \delta_1^2 y_{92}^6 z_{92}^4+8 \gamma  \delta_1 y_{92}^5 z_{92}^4
-y_{92}^4 \left(\gamma  z_{92}^4+16 \alpha  \delta_1^3 z_{92}^2\right)
\right.
\\&\qquad\qquad
\left.
+y_{92}^2 \left(\alpha  \delta_1 z_{92}^2-16 \beta_1 \delta_1^3\right)\right)
.
\end{aligned}
$$

\subsection{Resolution at $b_{10}$}\label{sec:b10-blow}

\paragraph{First chart:}
$$
\begin{aligned}
y_{101}=&\frac1{z_{91}}\left(y_{91}+\frac{\beta_1+4\delta_1}{8x\delta_1^3}\right)
=\frac1{y^2}\left(\frac{1}{y^2z}+\frac{1}{4\delta_1}\right)+\frac{1}{y}\cdot\frac{\beta_1+4\delta_1}{8x\delta_1^3},
\\
z_{101}=&z_{91}=y,
\\
y=&z_{101},
\\
z=&\frac{8 \delta_1 ^3 x}{z_{101}^2 \left(8 \delta_1 ^3 y_{101} z_{101}^2 x-(\beta_1  +4 \delta_1)  z_{101}-2 \delta_1 ^2 x\right)}.
\end{aligned}
$$
The exceptional line is $\mathcal{L}_{10}:z_{101}=0$.
The line $\mathcal{L}_9$ is not visible in this chart.

The vector field is:
$$
\begin{aligned}
y_{101}'=&
-\frac{(4A\delta_1+1)^2}{4A\delta_1z_{101}^3} 
+ \frac{(1 + 4 A \delta_1) }{16 A z_{101}^2 x \delta_1^3}
\left(\beta_1 + 4 \delta_1 - 2 A\delta_1( \beta_1  + 8 \delta_1) + 8 A^2 \beta_1 \delta_1^2\right)
\\&
- \frac{2 A^2 \alpha}{x}
- \frac{ 2 A^2 \beta_1}{z_{101}^2 x}
- 2 A^2 z_{101} \gamma 
,
\\
z_{101}'=&\frac{1}{2 A} - \frac{z_{101}}{x} + \delta_1,
\end{aligned}
$$
with
$$
A=-\frac{1}{4 \delta_1} + 
 z_{101} \left(y_{101}z_{101} - \frac{\beta_1 + 4 \delta_1}{8  x \delta_1^3}\right).
$$
No base points.

\paragraph{Second chart:}
$$
\begin{aligned}
y_{102}=&y_{91}+\frac{\beta_1+4\delta_1}{8x\delta_1^3}
=\frac1y\left(\frac{1}{y^2z}+\frac{1}{4\delta_1}\right)+\frac{\beta_1+4\delta_1}{8x\delta_1^3},
\\
z_{102}=&z_{91}\left(y_{91}+\frac{\beta_1+4\delta_1}{8x\delta_1^3}\right)^{-1}
=y\left(\frac1y\left(\frac{1}{y^2z}+\frac{1}{4\delta_1}\right)+\frac{\beta_1+4\delta_1}{8x\delta_1^3}\right)^{-1},
\\
y=&y_{102}z_{102},
\\
z=&\frac{8 \delta_1 ^3 x}{y_{102}^2 z_{102}^2 \left(8 \delta_1 ^3 y_{102}^2 z_{102} x-\beta_1  y_{102} z_{102}-4 \delta_1  y_{102} z_{102}-2 \delta_1 ^2 x\right)}.
\end{aligned}
$$
The exceptional line is $\mathcal{L}_{10}:y_{102}=0$.
The preimage of $\mathcal{L}_9$ is $z_{102}=0$.

The Jacobian:
$$
\begin{aligned}
J_{102}=&
\frac{\partial y_{102}}{\partial y}\cdot\frac{\partial z_{102}}{\partial z}-\frac{\partial z_{102}}{\partial y}\cdot\frac{\partial y_{102}}{\partial z}
=
\frac{z_{102} \left(2 \delta_1^2 x-8 \delta_1^3 x y_{102}^2 z_{102}+y_{102} z_{102} (\beta_1+4 \delta_1)\right)^2}{64 \delta_1^6 x^2}.
\end{aligned}
$$

The vector field is:
$$
\begin{aligned}
y_{102}'=&-\frac{(1 + A \delta_1) (2 + 3 A \delta_1)}{4 A y_{102}^2 z_{102}^2 \delta_1},
\\
z_{102}'=&-\frac{(1 + A \delta_1) (2 + 3 A \delta_1)}{4 A y_{102}^2 z_{102}^2 \delta_1}
+\frac{2 + 6 A \delta_1 - A^2 \beta_1 \delta_1}{8 y_{102} z_{102} x \delta_1}
\\&
-\frac{1}{8} A^2 y_{102}^2 z_{102}^2 \gamma - \frac{\beta_1}{8 x^2 \delta_1^3}
- \frac{1}{ 2 x^2 \delta_1^2}
 ,
\end{aligned}
$$
with
$$
A=-\frac{1}{\delta_1} +  4 y_{102} z_{102} \left(y_{102} - \frac{\beta_1 + 4 \delta_1}{8 x \delta_1^3}\right).
$$
No base points.

The energy:
$$
\begin{aligned}
E=&-\gamma  y_{102}^2 z_{102}^2
+
\frac{8 \delta_1^4 x \left(-\beta_1-4 \delta_1+8 \delta_1^3 x y_{102}\right)}{y_{102} z_{102} \left(2 \delta_1^2 x-8 \delta_1^3 x y_{102}^2 z_{102}+y_{102} z_{102} (\beta_1+4 \delta_1)\right)^2},
\\
EJ_{102}=&-\frac{\beta_1+4 \delta_1-8 \delta_1^3 x y_{102}}{8 \delta_1^2 x y_{102}}
-
\gamma y_{102}^2 z_{102}^3 \frac{ \left(2 \delta_1^2 x-8 \delta_1^3 x y_{102}^2 z_{102}+y_{102} z_{102} (\beta_1+4 \delta_1)\right)^2}{64 \delta_1^6 x^2}.
\end{aligned}
$$

\subsection{Resolution at $b_{11}$}\label{sec:b11-blow}

\paragraph{First chart:}
$$
\begin{aligned}
y_{111}=&y_{52}\left(z_{52}+\frac{2\beta_1}{x\delta_1}\right)^{-1}
=\frac1z\left(yz+\frac{2\beta_1}{x\delta_1}\right)^{-1}
,
\\
z_{111}=&z_{52}+\frac{2\beta_1}{x\delta_1}=yz+\frac{2\beta_1}{x\delta_1},
\\
y=&\frac{y_{111} z_{111} (\delta_1 x  z_{111} - 2 \beta_1)}{\delta_1  x},
\\
z=&\frac{1}{y_{111}z_{111}}.
\end{aligned}
$$
The exceptional line is $\mathcal{L}_{11}:z_{111}=0$.
The preimage of $\mathcal{L}_5$ is $y_{111}=0$ and of $\mathcal{L}_0$ : $z_{111}=\frac{2\beta_1}{x\delta_1}$.

The Jacobian:
$$
J_{111}=y_{111}.
$$

The vector field is:
$$
\begin{aligned}
y_{111}'=&
\frac{y_{111}}{x} - 4 y_{111}^3 z_{111}^3 \gamma + 
 z_{111} \left(\frac{y_{111}}{2} - \frac{4 y_{111}^2 \alpha}{x} - \frac{8 y_{111}^3 \beta_1^2 \gamma}{ x^2 \delta_1^2}\right) -\frac{y_{111} \beta_1}{x \delta_1}
\\&
 + \frac{ 4 y_{111}^2 \alpha \beta_1}{x^2 \delta_1}
 + \frac{ 12 y_{111}^3 z_{111}^2 \beta_1 \gamma}{x \delta_1}
-\frac{x^2 \delta_1^3}{(-2 \beta_1 + z_{111} x \delta_1)^2}
 ,
\\
z_{111}'=&
2 y_{111}^2 z_{111}^4 \gamma + 
 z_{111}^2 \left(\frac{2 y_{111} \alpha}{x} + \frac{8 y_{111}^2 \beta_1^2 \gamma}{x^2 \delta_1^2}\right) 
 -
 z_{111} \left(\frac1x + \frac{4 y_{111} \alpha \beta_1}{x^2 \delta_1}\right)
\\&
 - \frac{8 y_{111}^2 z_{111}^3 \beta_1 \gamma}{x \delta_1}
 -\frac{x \delta_1^2}{y_{111} (2 \beta_1 - z_{111} x \delta_1)}
.
\end{aligned}
$$
No base points.

The energy:
$$
E=\frac{4 \delta_1^3 x^2-4 \gamma  y_{111}^3 z_{111}^3 (\delta_1 x z_{111}-2 \beta_1)^2+y_{111} z_{111} (\delta_1 x z_{111}-2 \beta_1)^2}{4 \delta_1^2 x^2 y_{111} z_{111}}.
$$

\paragraph{Second chart:}
$$
\begin{aligned}
y_{112}=&y_{52}=\frac1z,
\\
z_{112}=&\frac{1}{y_{52}}\left(z_{52}+\frac{2\beta_1}{x\delta_1}\right)
=z\left(yz+\frac{2\beta_1}{x\delta_1}\right),
\\
y=&y_{112}\left(y_{112}z_{112}-\frac{2\beta_1}{\delta_1 x}\right),
\\
z=&\frac{1}{y_{112}}.
\end{aligned}
$$
The exceptional line is $\mathcal{L}_{11}:y_{112}=0$.
The line $\mathcal{L}_5$ is not visible in this chart.

The vector field is:
$$
\begin{aligned}
y_{112}'=&
y_{112}^2 \left(\frac{z_{112}}{2} - \frac{2 \alpha}{x}\right) 
- 2 y_{112}^4 z_{112} \gamma - \frac{ y_{112} \beta_1}{x \delta_1} 
+ \frac{4 y_{112}^3 \beta_1 \gamma}{ x \delta_1}
- \frac{ 2 x \beta_1 \delta_1^2}{(2 \beta_1 - y_{112} z_{112} x \delta_1)^2}
,
\\
z_{112}'=&
z_{112}^2 \left( 4 y_{112}^3 \gamma-\frac{y_{112}}{2} \right) 
\\&
+ 
 z_{112} 
 \left(
 -\frac{1}{x} + \frac{4 y_{112} \alpha}{x} - \frac{\beta_1}{x^2 \delta_1} 
 + \frac{2 \beta_1}{x \delta_1} + \frac{4 y_{112}^2 \beta_1 \gamma}{x^2 \delta_1} 
 - \frac{16 y_{112}^2 \beta_1 \gamma}{x\delta_1}\right) 
\\& 
 + \frac{ 2 \beta_1^2}{y_{112} x^3 \delta_1^2} - \frac{2 \beta_1^2}{ y_{112} x^2 \delta_1^2} 
 - \frac{8 y_{112} \beta_1^2 \gamma}{x^3 \delta_1^2}
 + \frac{ 16 y_{112} \beta_1^2 \gamma}{x^2 \delta_1^2}
 + \frac{4 \alpha \beta_1}{ x^3 \delta_1} 
 - \frac{8 \alpha \beta_1}{x^2 \delta_1}
\\& 
 + \frac{\delta_1}{y_{112}^2}
+ \frac{4 \beta_1 \delta_1 (\beta_1 - 2 x \beta_1 +    y_{112} z_{112} x^2 \delta_1)}
   {y_{112}^2 x (-2 \beta_1 + y_{112} z_{112} x \delta_1)^2} 
.
\end{aligned}
$$
No base points.

\subsection{Blowing down the preimage of $\mathcal{L}_0$}\label{sec:blowdownL0}

$$
\begin{aligned}
Y_{111}&=y_{111}\left(z_{111}-\frac{2\beta_1}{x\delta_1}\right)=y\left(yz+\frac{2\beta_1}{x\delta_1}\right)^{-1},
\\
Z_{111}&=z_{111}=yz+\frac{2\beta_1}{x\delta_1},
\\
y&=Y_{111}Z_{111},
\\
z&=\frac{Z_{111}-\frac{2\beta_1}{x\delta_1}}{Y_{111}Z_{111}}.
\end{aligned}
$$

The projection of $\mathcal{L}_{11}$ is $Z_{111}=0$ and of $\mathcal{L}_5$: $Y_{111}=0$.

The Jacobian and energy:
$$
\begin{aligned}
J_{111}=&
\frac{\partial Y_{111}}{\partial y}\cdot\frac{\partial Z_{111}}{\partial z}-\frac{\partial Z_{111}}{\partial y}\cdot\frac{\partial Y_{111}}{\partial z}
=
Y_{111},
\\
E=&
\frac{\beta_1^2}{\delta_1^2 x^2}-\frac{2 \beta_1}{x Y_{111} Z_{111}}-\frac{\beta_1 Z_{111}}{\delta_1 x}
-\gamma  Y_{111}^2 Z_{111}^2+\frac{\delta_1}{Y_{111}}+\frac{Z_{111}^2}{4}.
\end{aligned}
$$

The vector field:
$$
\begin{aligned}
Y_{111}'=&
-\frac{2 \alpha  Y_{111}^2}{x}-\frac{\beta_1 Y_{111}}{\delta_1 x}-2 \gamma  Y_{111}^3 Z_{111}+\frac{Y_{111} Z_{111}}{2}
,\\
Z_{111}'=&
\frac{2 \alpha  Y_{111} Z_{111}}{x}-\frac{Z_{111}}{x}+2 \gamma  Y_{111}^2 Z_{111}^2+\frac{\delta_1}{Y_{111}}
.
\end{aligned}
$$

\section{Notation}\label{sec:notation}
In this appendix, we collect the notation for base points, provide the charts in which they are defined and their coordinates in these charts, and give the relationships between constants used in the paper. 

\renewcommand{\arraystretch}{2}
\begin{longtable}[h]{|c|l|c|}
\hline
base point & coordinate system & coordinates\\
\hline
\endhead
\hline
\endfoot
  $b_1$ & $(y_{02},z_{02})=(\dfrac1y,\dfrac{z}{y})$ & $(0,0)$ 
  \\
  $b_2$ & $(y_{03},z_{03})=(\dfrac1{z},\dfrac{y}{z})$ & $(0,0)$  
  \\
  $b_3$ & $(y_{12},z_{12})=(y_{02},\dfrac{z_{02}}{y_{02}})=(\dfrac1y,z)$ & $(0,\sqrt{4\gamma})$
  \\
  $b_4$ & $(y_{12},z_{12})=(y_{02},\dfrac{z_{02}}{y_{02}})=(\dfrac1y,z)$ & $(0,-\sqrt{4\gamma})$
  \\  
  $b_5$ & $(y_{22},z_{22})=(y_{03},\dfrac{z_{03}}{y_{03}})=(\dfrac1z,y)$ & $(0,0)$
  \\  
  $b_6$ & $(y_{32},z_{32})=(y_{12},\dfrac{z_{12}-\sqrt{4\gamma}}{y_{12}})=(\dfrac1{y},y(z-\sqrt{4\gamma}))$ 
  & $(0,\dfrac{2\alpha}{x\sqrt{\gamma}})$
  \\
   $b_7$ & $(y_{42},z_{42})=(y_{12},\dfrac{z_{12}+\sqrt{4\gamma}}{y_{12}})=(\dfrac1{y},y(z+\sqrt{4\gamma}))$ 
  & $(0,-\dfrac{2\alpha}{x\sqrt{\gamma}})$
  \\
   $b_8$ & $(y_{51},z_{51})=(\dfrac{y_{22}}{z_{22}},z_{22})=(\dfrac1{yz},y)$ & $(0,0)$
  \\  
   $b_9$ & $(y_{81},z_{81})=(\dfrac{y_{51}}{z_{51}},z_{51})=(\dfrac1{y^2z},y)$ & $(-\dfrac1{4\delta_1},0)$
  \\   
   $b_{10}$ & 
   $(y_{91},z_{91})=(\dfrac{1}{z_{81}}(y_{81}+\dfrac{1}{4\delta_1}),z_{81})=(\dfrac1y(\dfrac1{y^2z}+\dfrac1{4\delta_1}),y)$ & $(-\dfrac{\beta_1+4\delta_1}{8x\delta_1^3},0)$
  \\   
   $b_{11}$ & $(y_{52},z_{52})=(y_{22},\dfrac{z_{22}}{y_{22}})=(\dfrac1z,yz)$ & $(0,-\dfrac{2\beta_1}{\delta_1 x})$
\end{longtable}

Note that the constants used throughout the paper are related as follows.
\begin{align*}
&\delta_1=i\sqrt{\delta},
\\
&\beta_1=\beta-2\delta_1.
\end{align*}


\begin{bibdiv}
\begin{biblist}
\bibselect{reference}
\end{biblist}
\end{bibdiv}

\end{document}